\newtheorem{theorem}{Theorem}
\newtheorem{proposition}[theorem]{Proposition}
\newtheorem{lemma}[theorem]{Lemma}
\newtheorem{corollary}[theorem]{Corollary}
\theoremstyle{definition}
\newtheorem{example}{Example}[section]
\newtheorem{remark}{Remark}[section]
\newcommand{\diag}{\text{diag}}
\newcommand{\setcomp}{\mathsf{c}}
\newcommand{\hF}{\widehat{F}}
\newcommand{\hR}{\widehat{R}}
\newcommand{\hc}{\hat{c}}
\newcommand{\bX}{\mathbf{X}}
\newcommand{\bb}{\mathbf{b}}
\newcommand{\br}{\mathbf{r}}
\newcommand{\PP}{\mathbb{P}}
\newcommand{\EE}{\mathbb{E}}
\newcommand{\RR}{\mathbb{R}}
\newcommand{\cG}{\mathcal{G}}
\newcommand{\cH}{\mathcal{H}}
\newcommand{\cM}{\mathcal{M}}
\newcommand{\cO}{\mathcal{O}}
\newcommand{\cP}{\mathcal{P}}
\newcommand{\cR}{\mathcal{R}}
\newcommand{\cK}{\mathcal{K}}
\newcommand{\lb}{\left(}
\newcommand{\rb}{\right)}
\newcommand{\FDP}{\textnormal{FDP}}
\newcommand{\FDR}{\textnormal{FDR}}
\newcommand{\BH}{\textnormal{BH}}
\newcommand{\BY}{\textnormal{BY}}
\newcommand{\SU}{\textnormal{SU}}
\newcommand{\dBH}{\textnormal{dBH}}
\newcommand{\dBY}{\textnormal{dBY}}
\newcommand{\dSU}{\textnormal{dSU}}
\newcommand{\sdBH}{\textnormal{s-dBH}}
\newcommand{\dir}{\textnormal{dir.}}
\newcommand{\sign}{\textnormal{sign}}
\newcommand{\lo}{\textnormal{lo}}
\newcommand{\hi}{\textnormal{hi}}
\newcommand{\pmi}{p^{(i \gets 0)}}
\newcommand{\semic}{\,;}
\newcommand{\td}{\tilde}
\newcommand{\simiid}{\stackrel{\text{i.i.d.}}{\sim}}
\definecolor{ruddybrown}{rgb}{0.73, 0.4, 0.16}
\definecolor{light1}{HTML}{Fee0d2}
\definecolor{medium1}{HTML}{Fc9272}
\definecolor{dark1}{HTML}{de2d26}
\definecolor{light2}{HTML}{deebF7}
\definecolor{medium2}{HTML}{9ecae1}
\definecolor{dark2}{HTML}{3182bd}
\definecolor{overlay12}{HTML}{9c7465}
\definecolor{light0}{HTML}{F2F0F7}
\definecolor{mlight0}{HTML}{9e9ac8}
\definecolor{medium0}{HTML}{6a51a3}
\definecolor{hisat0}{HTML}{4a00F3}
\definecolor{dark0}{HTML}{4a1486}
\title{Conditional calibration for false discovery rate control under dependence}
\author{William Fithian and Lihua Lei}
\date{\today}
\begin{document}

\maketitle

\begin{abstract}
  We introduce a new class of methods for finite-sample false discovery rate (FDR) control in multiple testing problems with dependent test statistics where the dependence is fully or partially known. Our approach separately calibrates a data-dependent $p$-value rejection threshold for each hypothesis, relaxing or tightening the threshold as appropriate to target exact FDR control. In addition to our general framework we propose a concrete algorithm, the dependence-adjusted Benjamini--Hochberg (dBH) procedure, which adaptively thresholds the $q$-value for each hypothesis. Under positive regression dependence the dBH procedure uniformly dominates the standard BH procedure, and in general it uniformly dominates the Benjamini--Yekutieli (BY) procedure (also known as BH with log correction). Simulations and real data examples illustrate power gains over competing approaches to FDR control under dependence.
\end{abstract}


\section{Introduction}

Despite the immense popularity of the false discovery rate (FDR) paradigm and the Benjamini--Hochberg (BH) method for large-scale multiple testing \citep{bh95}, the literature on FDR-controlling methods has long been dogged by their uncertain validity when applied to dependent $p$-values. In particular, the BH procedure is only known to control FDR under restrictive positive dependence assumptions, or after a severe correction to the significance level \citep{benjamini2001control}. Apart from specific supervised learning settings where knockoff methods \citep{barber15, candes2018panning} can be applied, practitioners still commonly default to the uncorrected BH method, choosing to forego theoretical guarantees and hope for the best.

This article introduces new methods for finite-sample FDR control under dependence. Our key technical idea is to decompose the FDR according to the additive contribution of each hypothesis, and use conditional inference to adaptively calibrate a separate rejection rule for each hypothesis to directly control its FDR contribution. Equipped with this tool, we prove finite-sample FDR control for a broad class of multiple testing methods. We also propose a concrete algorithm, the {\em dependence-adjusted Benjamini--Hochberg procedure} ($\dBH$), that operates by adaptively calibrating a separate BH $q$-value cutoff for each hypothesis. Specifically, our method rejects $H_i$ if $q_i \leq \hc_i$, where the calibrated threshold $\hc_i$ may be larger or smaller than $\alpha$. Although the dBH procedure can be applied in a wide variety of discrete and continuous, parametric and nonparametric models, the present work emphasizes multivariate Gaussian and linear regression models. We show empirically that our methods perform similarly to $\BH$, but with provable FDR control.

Because $\hc_i$ can be larger than $\alpha$, the dBH procedure can be, and often is, somewhat more powerful than the usual BH procedure. We show that dBH is uniformly more powerful than BH under positive dependence, in the sense that it makes at least as many rejections, almost surely. In addition, versions of the method are uniformly more powerful than the corrected version of BH (known as the Benjamini--Yekutieli (BY) procedure), usually dramatically so.

\subsection{Multiple testing and the false discovery rate}

In a multiple testing problem, an analyst observes a data set $X \sim P$, and rejects a subset of null hypotheses $H_1, \ldots, H_m$.  We assume $P \in \cP$ for some parametric or non-parametric model $\cP$, and each null hypothesis $H_i \subsetneq \cP$ represents a submodel; without loss of generality, the $i$th alternative hypothesis is $\cP \setminus H_i$. We assume the analyst computes a $p$-value $p_i(X)$ to test each $H_i$, where $p_i$ is marginally {\em super-uniform} (i.e., stochastically larger than $\text{Unif}(0,1)$) under $H_i$. Let $\cH_0(P) = \{i:\; P \in H_i\}$ denote the set of true null hypotheses, and $m_0 = |\cH_0|$. Much of our discussion will treat the parametric setting $\cP = \{P_\theta:\; \theta\in \Theta \subseteq \RR^d \}$, often parameterized so that $H_i$ concerns only $\theta_i$, for example $H_i:\; \theta_i = 0$ or $H_i:\; \theta_i \leq 0$.

A {\em multiple testing procedure} is a decision $\cR(X) \subseteq [m] = \{1,\ldots,m\}$ designating the set of rejected hypotheses. An analyst who rejects $H_i$ for each $i \in \cR(X)$ makes $V = |\cR \cap \cH_0|$ false rejections (sometimes called ``false discoveries''). If $R = |\cR|$ is the number of total rejections, \citet{bh95} define the {\em false discovery proportion} (FDP) as
\[
\FDP(\cR(X); P) = \frac{V}{R \vee 1},
\]
where $a\vee b = \max\{a,b\}$ and $a \wedge b = \min\{a,b\}$. The {\em false discovery rate} (FDR) is defined as the expected FDP:
\[
\FDR_P(\cR) = \EE_P\big[\,\FDP(\cR(X); P)\,\big].
\]
A standard goal in multiple testing is to maximize a procedure's power subject to constraining $\sup_{P\in\cP} \FDR_P(\cR) \leq \alpha$ at a pre-set significance level, typically $5\%$, $10\%$, or $20\%$.

The most widely used method for FDR control is the {\em Benjamini--Hochberg (BH) procedure}, an example of the more general class of {\em step-up procedures}. Let $p_{(1)} \leq \cdots \leq p_{(m)}$ denote the order statistics of the $m$ $p$-values. Then the step-up procedure for an increasing sequence of thresholds $0 \leq \Delta(1) \leq \ldots \leq \Delta(m) \leq 1$ finds the largest index $r$ for which $p_{(r)} \leq \Delta(r)$ and rejects all of the corresponding hypotheses up to that index. That is, we reject the hypotheses with the smallest $R(X)$ $p$-values, where
\begin{equation}\label{eq:stepup}
  R(X) = \max \left\{r:\; p_{(r)}(X) \leq \Delta(r)\right\}.
\end{equation}
The $\BH(\alpha)$ procedure takes $\Delta_\alpha(r) = \alpha r / m$. For a general family of thresholds $\Delta_\alpha(r)$ that are non-decreasing in $\alpha$ and $r$, we denote the generic step-up procedure as $\SU_\Delta(\alpha)$. We denote the corresponding testing procedures as $\cR^{\BH(\alpha)}$ and $\cR^{\SU_\Delta(\alpha)}$ respectively.

As $\alpha$ increases, the $\BH(\alpha)$ procedure becomes more liberal, with nested rejection sets. \citet{storey2003positive} defined the {\em $q$-value} as the level at which $H_i$ is barely rejected:
\begin{equation}\label{eq:q-value}
  q_i(X) = \min \left\{\alpha:\; i \in \cR^{\BH(\alpha)}\right\}.
\end{equation}
The same definition may be extended to any step-up procedure. If $\Delta_\alpha(r)$ is right-continuous in $\alpha$, the rejection sets are right-continuous too, and the minimum is always well-defined.

\citet{bh95} showed that the $\BH(\alpha)$ procedure controls FDR at exactly $\alpha m_0/m$ if the $p$-values are independent, but the picture for dependent $p$-values has been more complex. 

\subsection{FDR under dependence}\label{sec:old-strategies}

We can begin to understand the role of dependence by first making a standard decomposition of the FDR according to the contribution of each true null hypotheses:
\begin{equation}\label{eq:decomp-sum}
  \FDR = \EE\left[\frac{V}{R\vee 1}\right] = \sum_{i \in \cH_0} \EE\left[\frac{V_i}{R\vee 1}\right],
\end{equation}
where $V_i = 1\{H_i \text{ rejected}\}$. Under independence, BH controls each term in the sum at $\alpha/m$, attaining FDR control at level $\alpha m_0/m$.

Positive dependence between $V_i$ and $R$ tends to reduce each term in \eqref{eq:decomp-sum}, making methods like BH conservative. In particular, the BH procedure is known to be conservative under {\em positive regression dependence on a subset} (PRDS): For $a,b\in\RR^d$, we say $a \preceq b$ if $a_i \leq b_i$ for all $i=1$, and a set $A \subseteq \RR^d$ is {\em increasing} if $a \in A$ and $a \preceq b$ implies $b \in A$. We say that $p_{-i} = (p_1,p_2,\ldots, p_{i-1},p_{i+1},\ldots, p_m)$ is {\em positive regression dependent} (PRD) on $p_i$ if $\PP(p_{-i} \in A \mid p_i)$ is increasing in $p_i$ for any increasing set $A$. \citet{benjamini2001control} show that the $\BH(\alpha)$ procedure controls FDR conservatively at $\alpha m_0/m$, provided that $p_{-i}$ is PRD on $p_i$, for every $i \in \cH_0$; this condition is called PRDS. Subsequently, many procedures designed to control FDR under independence have also been shown to control FDR under positive dependence as well. Notable exceptions include the Storey-BH method \citep{storey04}, whose estimate of $m_0/m$ can fail badly under dependence\footnote{\citet{benjamini2006adaptive} propose another adaptive method that behaves better under positive dependence}, and adaptive weighting methods such as AdaPT \citep{lei2018adapt} and SABHA \citep{li2019multiple}, whose finite-sample FDR control may be threatened by local random effects that make a cluster of $p$-values smaller together.\footnote{\citet{li2019multiple} derive an upper bound for the FDR inflation in the multivariate Gaussian case, but the bound depends on unknown aspects of the data distribution.}

Unfortunately, the PRDS condition is quite restrictive. It does hold for one-sided testing with multivariate Gaussian test statistics whose pairwise correlations are  all non-negative, or for one- or two-sided testing of uncorrelated multivariate $t$-test statistics. But $p$-values for one-sided testing with any negative pairwise correlations, or for two-sided testing with any correlations at all, no longer satisfy PRDS. 

For general, unspecified dependence, \citet{benjamini2001control} also showed that the much more conservative $\BH(\alpha/L_m)$ procedure controls FDR at level $\alpha$ under arbitrary dependence, where
\[
L_m = \sum_{i=1}^m \frac{1}{i} = \log m + \cO(1).
\]
This method has become known as the {\em Benjamini--Yekutieli (BY)} procedure, or sometimes the log-corrected BH procedure. The proof technique was subsequently generalized in the {\em shape function} approach of \citet{blanchard2008two} who show that if $\nu$ is any probability measure on $\{1,\ldots,m\}$, then the step-up procedure with
\begin{equation}\label{eq:shape-function}
\Delta_\alpha(r) = \frac{\alpha\beta(r)}{m}, \quad \text{where }  \beta(r) = \sum_{i=1}^r i \nu(\{i\})
\end{equation}
also controls FDR under arbitrary dependence between the $p$-values. Taking $\nu(\{i\}) = (i L_m)^{-1}$ recovers the $\BY(\alpha)$ procedure, but \citet{blanchard2008two} suggest other choices that sometimes improve on the BY procedure's power. 

These methods control FDR under worst-case dependence assumptions, but their generality typically comes at a price of substantial conservatism and diminished power compared to the BH procedure. As a result the BH procedure is often still used in applications where PRDS does not hold. This ``off-label'' use of BH owes in part to a widely held belief that, under dependence typically arising in practice, BH is more often conservative than it is anti-conservative \citep[e.g.][]{farcomeni2006more,kim2008effects}.

A second strategy is to prove asymptotic control in regimes where the limiting problem is simpler. For example, \citet{genovese2004stochastic} and \citet{storey04} study regimes where the empirical distributions of null and non-null $p$-values converge to limiting deterministic functions as $m \to \infty$; this line of analysis was developed further in \citet{ferreira2006benjamini} and \citet{farcomeni2007some}. While these analyses provide valuable insights, the results hold only in the limit where $R\to\infty$; but FDR control is often desired in problems where $R$ may be relatively small, even if $m$ is large. \citet{troendle2000stepwise} and \citet{romano2008control} study resampling-based approaches in a different asymptotic regime where $m$ is fixed but the non-null $p$-values converge in probability to zero; in finite samples this is likely an optimistic assumption. 

Recently discovered knockoff methods \citep{barber15} offer an alternative means of FDR control under dependence for testing coefficients in linear regression models, and have been extended to testing conditional independence in supervised learning settings where a model for the joint distribution of predictor variables is available \citep{candes2018panning}. Knockoff methods, which operate by feeding synthetic noise variables to a supervised learning procedure, represent a sharp methodological departure from classical multiple testing procedures like BH. Knockoffs can be more or less powerful than the BH procedure for context-dependent reasons that are not yet fully understood. We discuss their relative strengths and weaknesses compared to classical procedures like BH in Section~\ref{sec:knockoffs}.

In this work, we propose a new methodological framework for controlling FDR under dependence in a wide variety of discrete and continuous, parametric and nonparametric models. Rather than assume worst-case dependence, we begin with a baseline procedure like BH or BY and calibrate its FDR by exploiting full or partial knowledge of the dependence.


\section{FDR control by conditional calibration}

\subsection{Conditional calibration: a new strategy}\label{sec:new-strategy}

Our method operates by adaptively calibrating a separate rejection threshold for each of the $m$ $p$-values to control each term in \eqref{eq:decomp-sum}, which we will call the {\em FDR contribution} of $H_i$. Let $\tau_i(c; X)$ be some possibly data-dependent rejection threshold for $p_i$, with {\em calibration parameter} $c \geq 0$. We assume $\tau_i$ is non-decreasing in $c$ for all $X$, and $\tau_i(0; X) = 0$ almost surely. We will be primarily interested in the {\em effective $\BH$ threshold}, defined as the $p$-value rejection threshold that is ``estimated'' by the $\BH(c)$ procedure:
\begin{equation}\label{eq:q-value-threshold}
  \tau^{\BH}(c; X) = \frac{c R^{\BH(c)}}{m}.
\end{equation}
Because the BH $q$-value $q_i(X)$ is below $\alpha$ if and only if $p_i \leq \tau^\BH(\alpha)$, we can roughly interpret $\tau^{\BH}$ as an inverse $q$-value transformation, and $c$ as a $q$-value cutoff. More generally, we define the {\em effective $\SU_\Delta$ threshold} as $\tau^{\SU_\Delta}(c) = \Delta_{c}(R^{\SU_\Delta(c)})$.

Taking the decomposition in \eqref{eq:decomp-sum} as our starting point, we will aim to calibrate the threshold for $p_i$, choosing $\hc_i$ to directly control the $i$th term in the sum:
\begin{equation}\label{eq:contrib-i}
  \EE_{H_i}\left[\frac{V_i}{R \vee 1}\right] = \sup_{P \in H_i}\EE_P\left[\frac{1\{p_i \leq \tau_i(\hc_i)\}}{R\vee 1}\right] \leq \frac{\alpha}{m}.
\end{equation}
We will use $\EE_{H_i}[\cdot]$ as a shorthand notation for $\sup_{P \in H_i}\EE_P[\cdot]$ throughout.

There are two main challenges in solving for $\hc_i$ in \eqref{eq:contrib-i}. First, the expectation depends in a possibly complicated way on the entire distribution of $X$, whereas $H_i$ typically only constrains the distribution of $p_i$. Our first idea is to achieve \eqref{eq:contrib-i} by controlling a more tractable conditional expectation, given some {\em conditioning statistic} $S_i$ that blocks most or all of the nuisance parameters from influencing the conditional analysis. Often $S_i$ is independent of $p_i$, but we only require that $p_i$ is conditionally superuniform given $S_i$:
\begin{equation}\label{eq:cond-stat}
\sup_{P \in H_i} \PP_{P}(p_i \leq \alpha \mid S_i) \;\leq\; \alpha, \quad \text{for all } \alpha\in [0,1].
\end{equation}

This style of conditioning is a well-established device for handling nuisance parameters in inference problems, especially for exponential family models and permutation tests, and has seen recent application in approaching complex decision problems like multiple testing \citep[e.g.][]{weinstein2013selection, barber15, candes2018panning} and post-selection inference \citep[e.g.][]{tibshirani2016exact,lee2016exact,fithian2014optimal}.

Under independence, \eqref{eq:cond-stat} is satisfied with $S_i = p_{-i}$. A standard FDR control proof for the BH procedure, introduced in \citet{benjamini2001control}, conditions on $p_{-i}$ and applies the following key lemma, whose proof is given in Appendix~\ref{app:proofs}:

\begin{restatable}{lemma}{lempmi}
\label{lem:p-minus-i}
  Let $\pmi = (p_1,\ldots,p_{i-1},0,p_{i+1},\ldots,p_m)$. If $\cR$ is a step-up procedure with threshold sequence $\Delta(1),\ldots,\Delta(m)$, then the following are equivalent:
  \begin{enumerate}
  \item $p_i \leq \Delta(R(\pmi))$,
  \item $i\in \cR(p)$, and
  \item $\cR(p) = \cR(\pmi)$.
  \end{enumerate}
\end{restatable}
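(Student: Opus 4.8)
The plan is to prove the cycle of implications $(2)\Rightarrow(1)\Rightarrow(3)\Rightarrow(2)$, using two elementary facts about step-up procedures. Throughout write $q=\pmi$ for brevity. \textbf{(a) Monotonicity of $R$:} if $x\preceq y$ coordinatewise then $x_{(r)}\le y_{(r)}$ for every $r$, hence $\{r:x_{(r)}\le\Delta(r)\}\supseteq\{r:y_{(r)}\le\Delta(r)\}$ and $R(x)\ge R(y)$; applying this to $q\preceq p$ gives $R(q)\ge R(p)$, and since $q_{(1)}=0\le\Delta(1)$ we also get $R(q)\ge 1$. \textbf{(b) Form of the rejection set:} the step-up procedure rejects the hypotheses with the $R(x)$ smallest $p$-values, which (one checks, using $x_{(R(x)+1)}>\Delta(R(x)+1)\ge\Delta(R(x))$ when $R(x)<m$, together with monotonicity of $\Delta$) equals $\cR(x)=\{j:x_j\le\Delta(R(x))\}$. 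In particular $i\in\cR(q)$ always, because $q_i=0\le\Delta(R(q))$.

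Given (a) and (b), two of the implications are immediate. For $(2)\Rightarrow(1)$: if $i\in\cR(p)$ then $p_i\le\Delta(R(p))\le\Delta(R(q))$, since $\Delta$ is non-decreasing and $R(q)\ge R(p)$. For $(3)\Rightarrow(2)$: if $\cR(p)=\cR(q)$ then $i\in\cR(p)$ because $i\in\cR(q)$.

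The substantive implication is $(1)\Rightarrow(3)$. Assume $p_i\le\Delta(R(q))$. I would first show $R(p)=R(q)$; since $R(p)\le R(q)$ by (a), it suffices to prove $p_{(R(q))}\le\Delta(R(q))$, so that $R(q)\in\{r:p_{(r)}\le\Delta(r)\}$. Here I would compare $p_i$ with $p_{-i,(R(q))}$, the $R(q)$-th order statistic of the $m-1$ coordinates $p_{-i}$, using the identity $q_{(k)}=p_{-i,(k-1)}$ for $k\ge 2$ (valid because $q$ is $p_{-i}$ augmented by a single extra $0$):
\begin{itemize}
\item If $R(q)=m$ there is nothing to show, since $p_{(m)}\le\Delta(m)$ already follows from $R(q)=m$ and $q\preceq p$.
\item If $R(q)<m$ and $p_i\le p_{-i,(R(q))}$, then $p_i$ sits among the $R(q)$ smallest coordinates of $p$, so $p_{(R(q))}=\max\{p_i,\,q_{(R(q))}\}\le\Delta(R(q))$, using the hypothesis together with $q_{(R(q))}\le\Delta(R(q))$.
\item If $R(q)<m$ and $p_i> p_{-i,(R(q))}$, this is impossible: then $q_{(R(q)+1)}=p_{-i,(R(q))}<p_i\le\Delta(R(q))\le\Delta(R(q)+1)$, contradicting that $R(q)$ is the \emph{largest} index with $q_{(r)}\le\Delta(r)$.
\end{itemize}
Hence $R(p)=R(q)$, and then $\cR(p)=\{j:p_j\le\Delta(R(q))\}$ and $\cR(q)=\{j:q_j\le\Delta(R(q))\}$ coincide on every $j\ne i$ (where $p_j=q_j$) and both contain $i$ (for $\cR(p)$ by the hypothesis, for $\cR(q)$ since $q_i=0$); therefore $\cR(p)=\cR(q)=\cR(\pmi)$.

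I expect the main obstacle to be the order-statistic bookkeeping: justifying the identity $q_{(k)}=p_{-i,(k-1)}$, verifying that a value $p_i\le\Delta(R(q))$ can be inserted among the first $R(q)$ order statistics of $p$ without moving the cutoff, and handling ties in the $p$-values as well as the boundary case $R(q)=m$. The three-way split above is meant to isolate exactly those points; everything else reduces to the monotonicity of $R$ and of $\Delta$.
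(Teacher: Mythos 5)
Your proof is correct and takes essentially the same approach as the paper's: both arguments rest on monotonicity of $\Delta$ and of $r \mapsto R$, the characterization $\cR(x)=\{j:\,x_j\le\Delta(R(x))\}$, and an order-statistic comparison of $p$ with $\pmi$; you merely orient the cycle as $(2)\Rightarrow(1)\Rightarrow(3)\Rightarrow(2)$ and replace the paper's one-line insertion bound $p_{(R^0)}\le p_i\vee p^{(i\gets 0)}_{(R^0)}$ with an explicit three-case analysis. One small fix: in your case $R(\pmi)=m$, the cited reason ``$q\preceq p$'' gives an inequality in the wrong direction; the needed bound $p_{(m)}\le\Delta(m)$ instead follows from $R(\pmi)=m$ (which gives $p_j\le\Delta(m)$ for all $j\ne i$) together with the standing hypothesis $p_i\le\Delta(R(\pmi))=\Delta(m)$, which you are already assuming in that implication.
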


Let $R^0 = R(\pmi)$, which depends only on $p_{-i}$. Then for the standard BH procedure under independence, applying Lemma~\ref{lem:p-minus-i} gives
\begin{align*}
  \EE_{H_i}\left[ \frac{V_i}{R\vee 1} \mid p_{-i}\right] 
  \;=\; \EE_{H_i}\left[ \frac{1\{p_i \leq \alpha R^0/m\}}{R^0} \mid p_{-i}\right] 
  \;\leq\; \frac{1}{R^0} \,\frac{\alpha R^0}{m} \;=\; \frac{\alpha}{m}.
\end{align*}
Marginalizing over $p_{-i}$ and summing over $i \in \cH_0$ yields $\FDR \leq \alpha m_0 / m$.

For dependent $p$-values, $p_{-i}$ will not in general satisfy \eqref{eq:cond-stat}. As a simple example, suppose $Z \sim N_m(\mu, \Sigma)$ where $\Sigma$ is a known covariance matrix with diagonal entries $\Sigma_{ii}=1$, and we wish to test $H_i:\; \mu_i = 0$ against a one- or two-sided alternative. If $\Sigma_{ij}\neq 0$ then $p_i$ and $p_j$ are not independent and the distribution of $p_i$ given $p_{-i}$ depends on $\mu_j$. However, the conditioning statistic $S_i = Z_{-i} - \Sigma_{-i,i} Z_i$ is independent of $Z_i$, and after conditioning on $S_i$ the data distribution depends only on $\mu_i$, which is fixed at zero under the null. 

In this example the data set can be reconstructed from $S_i$ and $Z_i$, and if the $p$-values are one-sided then it can also be reconstructed from $S_i$ and $p_i$. Figure~\ref{fig:cond-comp} illustrates the conditional FDR contribution of $H_i$ for one-sided testing under three conditions: independence ($\Sigma = I_m$), where the contribution is exactly $\alpha/m$; positive-dependence ($\Sigma_{ij} \geq 0$ for all $i,j$), where the contribution is below $\alpha/m$; and worst-case dependence, where the contribution can be as high as $\alpha L_m/m$. In Figure~\ref{fig:cond-comp:posdep}, the red line $R(p_i; S_i)^{-1}$ is increasing in $p_i$ because, fixing $S_i$, the other $p$-values are increasing functions of $p_i$.

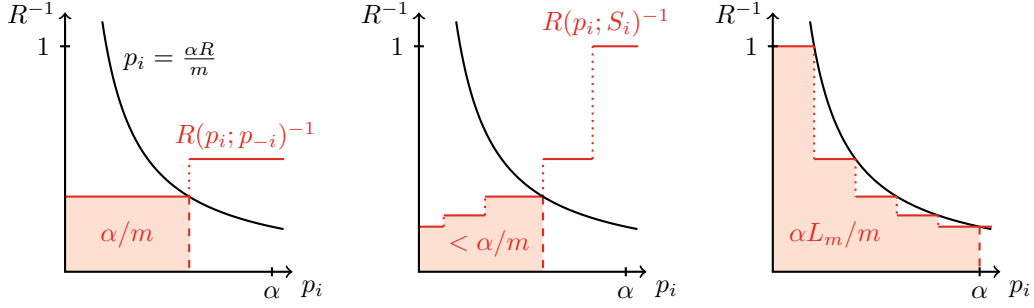
\begin{figure}
  \centering
  \begin{subfigure}{.31\hsize}
    \centering

  \begin{tikzpicture}[xscale=55,yscale=3]
    \path[fill=light1] (0,0) rectangle (.15/5,1/3);

    \draw[thick, domain=0.009:0.053, samples=100] plot (\x, {0.01/\x});

    \draw[thick, dark1] (0,1/3) --  (.15/5,1/3);
    \draw[thick, dotted, dark1] (.15/5,1/3) -- (.15/5, 1/2);
    \draw[thick, dark1] (.15/5, 1/2) -- (.053, 1/2);
    \draw[thick, dashed, dark1] (.15/5,1/3) -- (.15/5,0);

    \node [dark1] at (.015,1/6) {$\alpha/m$};
    \node [above right] at (.01*1.2,1/1.2) {$p_i = \frac{\alpha R}{m}$};    
    \node[dark1, above left] at (.063, 1/2) {$R(p_i;p_{-i})^{-1}$};

    \draw[thick, <->] (0,1.15) -- (0,0) -- (.055,0);
    \node [below right] at (.055,0) {$p_i$};
    \node [left] at (0,1.15) {$R^{-1}$};
    \draw [thick] (.05,-.02) node[below]{$\alpha$} -- (.05,0.02);
    \draw [thick] (-.0012,1) node[left]{$1$} -- (.0012,1);
\end{tikzpicture}
    \caption{Independent case: $R$ is constant on $\{p_i \leq \alpha R/m\}$.}
    \label{fig:cond-comp:indep}
  \end{subfigure}
  \hspace{.01\hsize}
  \begin{subfigure}{.31\hsize}
    \centering

  \begin{tikzpicture}[xscale=55,yscale=3]
    \path [fill=light1] (0,0) -- (0,1/5) --  (.03/5,1/5) --  (.03/5,1/4) --  (.08/5,1/4) --  (.08/5,1/3) --  (.15/5,1/3) -- (.15/5,0) -- cycle;
    \draw[thick, dashed, dark1] (.15/5,1/3) -- (.15/5,0);
    

    \draw[thick, dark1] (0,1/5) --  (.03/5,1/5);
    \draw[thick, dark1] (.03/5,1/4) --  (.08/5,1/4);
    \draw[thick, dark1] (.08/5,1/3) --  (.15/5,1/3);
    \draw[thick, dark1] (.15/5,1/2) -- (.21/5,1/2);
    \draw[thick, dark1] (.21/5, 1) -- (.053, 1);
    
    \draw[thick, dotted, dark1] (.03/5,1/5) --  (.03/5,1/4);
    \draw[thick, dotted, dark1] (.08/5,1/4) --  (.08/5,1/3);
    \draw[thick, dotted, dark1] (.15/5,1/3) --  (.15/5,1/2);
    \draw[thick, dotted, dark1] (.21/5,1/2) -- (.21/5, 1);

    \node [dark1] at (.017,1/8) { $<\alpha/m$};
    \node[dark1, above left] at (.063, 1) {$R(p_i;S_i)^{-1}$};

    \draw[thick, domain=0.009:0.053, samples=100] plot (\x, {0.01/\x});

    \draw[thick, <->] (0,1.15) -- (0,0) -- (.055,0);
    \node [below right] at (.055,0) {$p_i$};
    \node [left] at (0,1.15) {$R^{-1}$};
    \draw [thick] (.05,-.02) node[below]{$\alpha$} -- (.05,0.02);
    \draw [thick] (-.0012,1) node[left]{$1$} -- (.0012,1);
\end{tikzpicture}
    \caption{Positive-dependent case: $R$ is decreasing in $p_i$}
    \label{fig:cond-comp:posdep}
  \end{subfigure}
  \hspace{.01\hsize}
  \begin{subfigure}{.31\hsize}
    \centering

\begin{tikzpicture}[xscale=55,yscale=3]
    \path [fill=light1] 
      (0,1) -- (.05/5,1) --
      (.05/5,.5) -- (.1/5,.5) --
      (.1/5,1/3) -- (.15/5,1/3) --
      (.15/5,.25) -- (.2/5,.25) --
      (.2/5,.2) -- (.25/5,.2) -- 
      (.25/5,0) -- (0,0) -- cycle;

    \draw[thick, domain=0.009:0.053, samples=100] plot (\x, {0.01/\x});

    \draw[thick, dark1] (0,1) -- (.05/5,1);
    \draw[thick, dark1] (.05/5,.5) -- (.1/5,.5);
    \draw[thick, dark1] (.1/5,1/3) -- (.15/5,1/3);
    \draw[thick, dark1] (.15/5,.25) -- (.2/5,.25);
    \draw[thick, dark1] (.2/5,.2) -- (.053,.2);
    
    \draw[thick, dark1, dotted] (.05/5,1) -- (.05/5,.5);
    \draw[thick, dark1, dotted] (.1/5,.5) -- (.1/5,1/3);
    \draw[thick, dark1, dotted] (.15/5,1/3) -- (.15/5,.25);
    \draw[thick, dark1, dotted] (.2/5,.25)  -- (.2/5,.2);

    \node [dark1] at (.015,1/6) {$\alpha L_m / m$};
    \draw[thick, dashed, dark1] (.25/5,1/5) -- (.25/5,0);

    \draw[thick, <->] (0,1.15) -- (0,0) -- (.055,0);
    \node [below right] at (.055,0) {$p_i$};
    \node [left] at (0,1.15) {$R^{-1}$};
    \draw [thick] (.05,-.02) node[below]{$\alpha$} -- (.05,0.02);
    \draw [thick] (-.0012,1) node[left]{$1$} -- (.0012,1);
\end{tikzpicture}
    \caption{Worst-case: $R$ is always just large enough to reject $p_i$.}
    \label{fig:cond-comp:worst}
  \end{subfigure}
  \caption{Visualizing the conditional FDR contribution for $\BH(\alpha)$ under two simplifying assumptions: (i) $p_i$ is uniform under the null, and (ii) the rejection set is a function of $p_i$ and $S_i$. $R^{-1}$ is shown as the broken red line. The $\BH(\alpha)$ procedure rejects $H_i$ when $(p_i, R^{-1})$ is below the hyperbola $p_i = \alpha R / m$ and the conditional FDR contribution is given by $\EE\left[\frac{V_i}{R \vee 1} \mid S_i\right] = \int_{u \leq \alpha R(u)/m} R^{-1}(u)\,du$, the area of the light red region.}
  \label{fig:cond-comp}
\end{figure}

The second main challenge is that the number of rejections $R$ in the denominator depends on all of $\hc_1,\ldots,\hc_m$, so all $m$ calibration problems are coupled to one another. To deal with this, we substitute an ``estimator'' $\hR_i(X) \geq 1$ of the eventual value of $R_i(X) = |\cR(X) \cup \{i\}|$, the number of rejections if we also include $H_i$.  Because $R_i = R$ for $i \in \cR(X)$, we have $V_i / R_i = V_i/(R \vee 1)$ almost surely. Ideally, $\hR_i$ should be an accurate and easily computable lower-bound for $R_i$.

\subsection{Our method}

We now present a generic two-step FDR-controlling method, possibly with a third randomization step to handle cases where $\hR_i$ fails to lower-bound $R_i$:

\paragraph{Step 1: Calibration.} First, we use $\hR_i$ to estimate the conditional FDR contribution $\EE_{H_i}\left[V_i/R_i \mid S_i\right]$ as a function of the calibration parameter, and recalibrate as appropriate to control the conditional expectation at $\alpha/m$:
\begin{equation}\label{eq:def-hc}
  g_i^*(c \semic S_i) = \sup_{P\in H_i} \EE_P\left[\,\frac{1\{p_i \leq \tau_i(c)\}}{\hR_i} \mid S_i \,\right] \;\;\leq \;\; \frac{\alpha}{m},
\end{equation}
suppressing the dependence of $p_i$, $\tau_i(c)$, and $\hR_i$ on $X$ for compactness of notation.

The function $g_i^*$ is almost surely non-decreasing in $c$ with $g_i^*(0) = 0$ because $\tau_i(0)=0$ by assumption. Let $c_i^*(S_i) \leq \infty$ denote the least upper bound for the set of allowed $c$ values, which is either $[0,c_i^*]$ or $[0,c_i^*)$ (note that neither $\tau_i$ nor $g_i^*$ is necessarily continuous in $c$). If $g_i^*(c_i^*) \leq \alpha/m$, we can set $\hc_i(S_i) = c_i^*$; otherwise we can take a non-decreasing sequence $(\hc_{i,t}(S_i))_{t = 1}^\infty$ converging to $c_i^*$ from below, such as $\hc_{i,t} = (c_i^* - 1/t) \vee 0$. We say $\hc_i$ is {\em maximal} if, almost surely, either $c_i^*$ satisfies \eqref{eq:def-hc} and $\hc_i = c_i^*$, or $c_i^*$ does not satisfy \eqref{eq:def-hc} and $\lim_t \hc_{i,t} = c_i^*$. If $c_i^*(S_i)$ is difficult to calculate, it is enough to assume only that $\hc_i(S_i)$ is any value satisfying \eqref{eq:def-hc} almost surely.

\paragraph{Step 2: Initial rejection.}
Next, we initialize the rejection set, via:
\[
\cR_+ = \left\{i:\; p_i \leq \tau_i(\hc_i) \right\}.
\]
If $\hc_i$ is a sequence, the condition $p_i \leq \tau_i(\hc_i)$ is understood to mean $p_i \leq \tau_i(\hc_{i,t})$ for some $t$ (which is not equivalent to $p_i \leq \tau_i(\lim_t \hc_{i,t})$). To limit notational bloat in our prose, we will discuss $\hc_i$ as though it is a single value, but our results all apply to the general case.

Let $R_+  = |\cR_+|$. If $R_+ \geq \hR_i$ for all $i \in \cR_+$, then we can halt the procedure with $\cR = \cR_+$. Otherwise, we may need to prune the rejection set further.

\paragraph{Step 3 (if necessary): Randomized pruning.}
If there is some $i \in \cR_+$ for which $\hR_i > R_+$, then we must prune the rejection set via a secondary BH procedure. For user-generated uniform random variables $u_1,\ldots,u_m \simiid \text{Unif}(0,1)$, let
\begin{equation}\label{eq:bh2}
R(X; u) = \max \left\{r:\; \left|\{i \in \cR_+:\; u_i \leq r/\hR_i\}\right| \geq r\right\},
\end{equation}
and reject $H_i$ for the $R$ indices with $i \in \cR_+$ and $u_i \leq R/\hR_i$. The procedure in \eqref{eq:bh2} is equivalent to the $\BH(1)$ procedure on ``$p$-values'' $\tilde{p}_i = u_i \hR_i / R_+$ for $i \in \cR_+$. We can skip this step if $R_+ \geq \hR_i$ for all $i\in\cR_+$; in that case all $u_i\hR_i/R_+ \leq 1$, so $R = R_+$.

We show next that this procedure controls FDR at the desired level.

\begin{theorem}[FDR control]\label{thm:fdr-control}
  Assume that \eqref{eq:cond-stat} holds, and $\hc_{i,t}(S_i)$ is chosen to guarantee \eqref{eq:def-hc}, for all $i$ and $t$. Then the three-step procedure defined above controls the FDR at or below level $\alpha m_0/m$.
\end{theorem}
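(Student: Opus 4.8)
The plan is to bound, for each true null $i\in\cH_0$, the $i$th term $\EE_{H_i}[V_i/(R\vee 1)]$ in the decomposition \eqref{eq:decomp-sum} by $\alpha/m$; summing over $\cH_0$ then yields $\FDR_P(\cR)\le\alpha m_0/m$. Running Step~3 when it is not needed leaves $\cR_+$ unchanged (if $R_+\ge\hR_j$ for every $j\in\cR_+$ then each $\tilde p_j=u_j\hR_j/R_+\le 1$, so the secondary $\BH(1)$ rejects all of $\cR_+$ and every such $H_j$ satisfies $u_j\le R_+/\hR_j=R/\hR_j$), so I may assume without loss of generality that Step~3 is always performed. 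Fix $i\in\cH_0$ and condition on $X$ and on $u_{-i}=(u_j)_{j\ne i}$; given these, every ingredient of the procedure is determined except $u_i\sim\text{Unif}(0,1)$.

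The key step is to apply Lemma~\ref{lem:p-minus-i} to the secondary procedure \eqref{eq:bh2}, read as a step-up procedure on the $R_+$ ``$p$-values'' $\tilde p_j=u_j\hR_j/R_+$, $j\in\cR_+$, with thresholds $\Delta(r)=r/R_+$; setting $\tilde p_i\gets 0$ is the same as setting $u_i\gets 0$. Let $R^0$ denote the number of rejections the secondary procedure makes when $u_i$ is set to $0$: it is a function of $X$ and $u_{-i}$ only, and $R^0\ge 1$ whenever $i\in\cR_+$ (with $u_i=0$ the zeroed index alone meets the requirement in \eqref{eq:bh2} at $r=1$). Lemma~\ref{lem:p-minus-i} gives that, on $\{i\in\cR_+\}$, $H_i$ belongs to the final rejection set iff $u_i\le R^0/\hR_i$, and in that case $R=R^0$. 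Therefore
\[
\frac{V_i}{R\vee 1}\;=\;\frac{1\{i\in\cR_+\}\,1\{u_i\le R^0/\hR_i\}}{R^0},
\]
and integrating out $u_i$ and then averaging the $u$-free bound over $u_{-i}$,
\[
\EE\!\left[\,\frac{V_i}{R\vee 1}\mid X\,\right]\;\le\;\EE\!\left[\,\frac{1\{i\in\cR_+\}}{R^0}\Big(\frac{R^0}{\hR_i}\wedge 1\Big)\mid X\,\right]\;\le\;\frac{1\{i\in\cR_+\}}{\hR_i}\;=\;\frac{1\{p_i\le\tau_i(\hc_i)\}}{\hR_i},
\]
using $\tfrac{1}{R^0}\big(\tfrac{R^0}{\hR_i}\wedge 1\big)\le\tfrac{1}{\hR_i}$ and that $1\{i\in\cR_+\}/\hR_i$ is $X$-measurable.

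It remains to take conditional expectations given $S_i$ (a function of $X$) and invoke the calibration. Since $\hc_i=\hc_i(S_i)$ is $S_i$-measurable and the true $P$ lies in $H_i$, the definition of $g_i^*$ in \eqref{eq:def-hc} and the fact that $\hc_i$ was chosen to satisfy that inequality give
\[
\EE_P\!\left[\,\frac{V_i}{R\vee 1}\mid S_i\,\right]\;\le\;\EE_P\!\left[\,\frac{1\{p_i\le\tau_i(\hc_i)\}}{\hR_i}\mid S_i\,\right]\;\le\;g_i^*(\hc_i\semic S_i)\;\le\;\frac{\alpha}{m}\qquad\text{a.s.}
\]
When $\hc_i$ is realized as a limiting sequence $(\hc_{i,t})$, note $\{i\in\cR_+\}=\bigcup_t\{p_i\le\tau_i(\hc_{i,t})\}$ and apply monotone convergence to $1\{p_i\le\tau_i(\hc_{i,t})\}/\hR_i$ together with \eqref{eq:def-hc} for each $\hc_{i,t}$. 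Taking expectations and summing over $i\in\cH_0$ completes the proof; the standing assumption \eqref{eq:cond-stat} is what makes Step~1 well posed (guaranteeing a nontrivial $\hc_i$) but is not otherwise invoked.

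I expect the main obstacle to be the coupling of the $m$ calibration problems through the common denominator $R$ and the auxiliary randomization $u$: a priori $R$ depends on all of $\hc_1,\dots,\hc_m$ and $u_1,\dots,u_m$, and $R$ can be strictly smaller than $\hR_i$, so $V_i/(R\vee 1)$ is not directly bounded by $1/\hR_i$. The device that breaks this coupling is Lemma~\ref{lem:p-minus-i} applied inside Step~3: it freezes $R$ at the $u_{-i}$-measurable value $R^0$ on the event that $H_i$ survives pruning and turns survival into a clean threshold event for $u_i$, after which the elementary inequality $\tfrac{1}{R^0}(R^0/\hR_i\wedge 1)\le 1/\hR_i$ converts the pruning into exactly the $1/\hR_i$ weighting that Step~1 was designed to absorb. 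The remaining points --- measurability of the supremum over $P\in H_i$ in \eqref{eq:def-hc}, and the sequence-valued $\hc_i$ --- are routine.
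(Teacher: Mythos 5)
Your proposal is correct and follows essentially the same route as the paper's proof: decompose the FDR hypothesis-by-hypothesis, apply Lemma~\ref{lem:p-minus-i} to the secondary step-up procedure with $u_i$ set to zero (your $R^0$ is the paper's $R_i^*$), integrate out $u_i$ conditionally on $(X,u_{-i})$ to obtain the bound $1\{i\in\cR_+\}/\hR_i$, then pass to the limit in $t$ by monotone convergence and invoke the calibration inequality \eqref{eq:def-hc} conditionally on $S_i$. Your additional remarks---that skipping Step~3 changes nothing and that \eqref{eq:cond-stat} is only needed to make the calibration nontrivial rather than for the control argument itself---are accurate and consistent with the paper.
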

\begin{proof}
  It is sufficient to show that $\EE\left[V_i/R_i \right] \leq \alpha/m$ for every $i \in \cH_0$. For $i \in \cR_+$ let $R_i^* = R(X; u^{(i \gets 0)})$, which is independent of $u_i$. By Lemma~\ref{lem:p-minus-i} applied to the secondary BH procedure, $R_i^* = R$ on the event $\{V_i = 1\}$. As a result, we can write
  \begin{align}
    \EE\left[\frac{V_i}{R \vee 1}\right]
    &= \EE\left[
      \frac{1\left\{i \in \cR_+\right\} \cdot 1\left\{u_i \leq R/\hR_i\right\}}{R \vee 1}
      \right]\\
    &= \EE\left[
      \frac{1\left\{i \in \cR_+\right\} \cdot 1\left\{u_i \leq R_i^*/\hR_i\right\}}{R_i^*}
      \right]\\
    &= \EE\left[\;\EE\left[
      \frac{1\left\{i \in \cR_+\right\} \cdot 1\left\{u_i \leq R_i^*/\hR_i\right\}}{R_i^*}
      \mid X, u_{-i}\right] \;\right]\\\label{eq:underestimation-conservatism}
    &\leq \EE\left[
      \frac{1\left\{i \in \cR_+\right\}}{\hR_i}
      \right]\\
    \label{eq:dominated}
    &= \lim_{t\to\infty} \EE\left[
      \frac{1\left\{p_i \leq \tau_i(\hc_{i,t})\right\}}{\hR_i}
      \right]\\
    \label{eq:final-cond-exp}
    &= \lim_{t\to\infty} \EE \left[ \EE\left[ \frac{1\left\{p_i \leq \tau_i(\hc_{i,t})\right\}}{\hR_i} \mid S_i
      \right] \right].
  \end{align}
  We can move the limit outside the integral in \eqref{eq:dominated} by monotone convergence. Under $H_i$, the last expression is no larger than $\alpha/m$, completing the proof.
\end{proof}

We pause to make several further observations:

\begin{remark}\label{rem:rhs}
  If we modify the calibration step replacing $\alpha/m$ by a generic, possibly data-dependent upper bound $\kappa_i(S_i)$ in inequality~\eqref{eq:def-hc}, then the proof of Theorem~\ref{thm:fdr-control} trivially generalizes to guarantee FDR control at level $\sum_{i \in \cH_0} \EE_P \kappa_i(S_i)$. Section~\ref{sec:discussion} discusses two extensions of the procedure where this additional flexibility is useful: adaptive estimation of $m_0/m$ and adaptive hypothesis weighting.
\end{remark}

\begin{remark}\label{rem:conservatism}
  From the proof of Theorem~\ref{thm:fdr-control} we see that there are three possible sources of conservatism in the above procedure. First, we typically have $m_0 < m$. Second, the inequality in \eqref{eq:underestimation-conservatism} is the price we pay in conservatism when $\hR_i$ underestimates $R_i$. Third, the conditional expectation in~\eqref{eq:final-cond-exp} may not attain $\alpha/m$, either because $\hc_i$ is not maximal, because $p_i$ has a discrete distribution, or because $P$ does not attain the supremum in \eqref{eq:def-hc} (for example if $P$ lies in the interior of $H_i$).
\end{remark}

\begin{remark}\label{rem:avoid-randomization}
  When we are forced to carry out the randomized pruning in Step 3, every rejection in $\cR_+$ is at risk, and there is a real danger that we could prune even a hypothesis for which the $p$-value $p_i$ is extremely small. Even if the number of pruned rejections is small in expectation, we believe it is scientifically preferable to use a procedure that obeys the sufficiency principle and avoids randomization, even if the randomization has a negligible effect on power calculations. We call a calibrated procedure {\em safe} if pruning is never necessary.
\end{remark}

To operationalize our method, we must fill in the details of what threshold family $\tau_i$ and estimator $\hR_i$ we use, how we identify the conditioning statistic $S_i$, and how we calibrate the threshold in practice. The next sections address these issues in turn.

\subsection{The dependence-adjusted BH and BY procedure}\label{sec:dbh-procedure}

While Theorem~\ref{thm:fdr-control} proves FDR control for a broad class of procedures, our empirical results focus on special cases of our method that are designed to couple tightly with the BH and BY procedures. If we use the three-step method of the previous section with the effective BH threshold $\tau_i = \tau^{\BH}$ and estimator $\hR_i = R_i^{\BH(\gamma \alpha)} = |\cR^{\BH(\gamma \alpha)} \cup\{i\}|$, we call the resulting method the {\em dependence-adjusted BH procedure}, which we denote $\dBH_\gamma(\alpha)$. In the special case where we take $\gamma = 1/L_m$, we call the resulting method the {\em dependence-adjusted BY procedure}, denoted $\dBY(\alpha)$. More generally, let the $\dSU_{\gamma, \Delta}(\alpha)$ procedure use threshold $\tau_i = \tau^{\SU_\Delta}$ and estimator $\hR_i = R_i^{\SU_{\Delta}(\gamma\alpha)}$. All of these definitionally use maximal $\hc_i$.

We can interpret the $\dBH_\gamma(\alpha)$ calibration parameters $\hc_1,\ldots,\hc_m$ as calibrated $q$-value rejection thresholds for the respective hypotheses, since $q_i \leq c$ if and only if $p_i \leq \tau^{\BH}(c)$. As a result, we have
\begin{equation}\label{eq:R-sandwich}
\cR^{\BH(\min_i \hc_i)} \,\subseteq\, \cR_+ \,\subseteq\, \cR^{\BH(\max_i \hc_i)}.
\end{equation}
Provided that $\min_i\hc_i \geq \gamma \alpha$, the randomization step is avoided and we can replace $\cR_+$ with $\cR$ in \eqref{eq:R-sandwich}. In our simulations we take $\gamma = 0.9$ as a conservative choice except in the positive-dependent case, where we take $\gamma = 1$. Figure~\ref{fig:dBH} illustrates how conditional calibration operates for the $\dBH_1$ and $\dBY$ procedures.

Under a slight strengthening of the PRDS condition, the $\dBH_1(\alpha)$ procedure is {\em uniformly more powerful} than the BH procedure, meaning $\cR^{\dBH_1} \supseteq \cR^{\BH}$ almost surely. For a given conditioning statistic $S_i$ we say the $p$-values $p_{-i}$ are {\em conditionally positive regression dependent} (CPRD) if $\PP(p_{-i} \in A \mid p_i, S_i)$ is almost surely increasing in $p_i$ for any increasing set $A$. If $p_{-i}$ is CPRD on $p_i$ for all $i \in \cH_0$ we say the $p$-values are {\em CPRD on a subset} (CPRDS). Note CPRDS implies PRDS after marginalizing over $S_i$, but PRDS does not necessarily imply CPRDS.

\begin{restatable}{theorem}{thmsafe}
\label{thm:safe}
  Assume $\hc_1,\ldots,\hc_m$ are maximal. Then
  \begin{enumerate}
  \item If the $p$-values are independent with $p_i$ uniform under $H_i$, then the $\dBH_1(\alpha)$ procedure with $S_i = p_{-i}$ is identical to the $\BH(\alpha)$ procedure.
  \item If the $p$-values are CPRDS for all $P \in \cP$, then the $\dBH_1(\alpha)$ procedure is safe, and uniformly more powerful than the $\BH(\alpha)$ procedure. 
  \item For arbitrary dependence, the $\dBY(\alpha)$ procedure is safe, and uniformly more powerful than the $\BY(\alpha)$ procedure.
  \item Assume the thresholds $\Delta_\alpha$ are of the form \eqref{eq:shape-function}. Then for arbitrary dependence, the $\dSU_\Delta(\alpha)$ procedure is safe, and uniformly more powerful than the $\SU_\Delta(\alpha)$ procedure.
  \end{enumerate}
\end{restatable}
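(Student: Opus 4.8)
All four statements share a backbone. In each case I would single out a ``baseline'' calibration value $c_0$ --- $c_0 = \alpha$ for the $\dBH_1(\alpha)$ and $\dSU_\Delta(\alpha)$ procedures, and $c_0 = \alpha/L_m$ for $\dBY(\alpha)$ (the BH $q$-value cutoff at which the baseline procedure rejects $H_i$) --- and show $g_i^*(c_0 \semic S_i) \le \alpha/m$ for every $i\in[m]$, so that maximality of $\hc_i$ forces $\hc_i \ge c_0$ (with equality in part 1). The enabling reduction is Lemma~\ref{lem:p-minus-i} applied to the baseline step-up procedure: letting $R_i^0$ denote its number of rejections when run on $\pmi$, on the event $\{p_i \le \tau_i(c_0)\} = \{i \in \cR^{\mathrm{base}}\}$ we have simultaneously $\hR_i = R_i^0$ and $\{p_i \le \tau_i(c_0)\} = \{p_i \le \Delta_{c_0}(R_i^0)\}$, so
\[
  g_i^*(c_0 \semic S_i) \;=\; \sup_{P\in H_i}\EE_P\!\left[\,\frac{1\{p_i \le \Delta_{c_0}(R_i^0)\}}{R_i^0}\;\Big|\;S_i\,\right],
\]
which is precisely a conditional version of the FDR contribution bounded in classical step-up proofs. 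Granting $\hc_i \ge c_0$, parts (2)--(4) follow quickly: since $\tau_i$ is non-decreasing in $c$ we get $\cR^{\mathrm{base}} \subseteq \cR_+$; and since $\hR_i = R^{\mathrm{base}} + 1\{i\notin\cR^{\mathrm{base}}\}$ while $R_+ = R^{\mathrm{base}} + |\cR_+\setminus\cR^{\mathrm{base}}|$, every $i\in\cR_+$ satisfies $\hR_i \le R_+$ (when $i\notin\cR^{\mathrm{base}}$ the extra unit is absorbed because then $i\in\cR_+\setminus\cR^{\mathrm{base}}$). Hence Step~3 is vacuous, $\cR = \cR_+ \supseteq \cR^{\mathrm{base}}$, giving both safety and uniform dominance; in part~1 the equality $\hc_i=\alpha$ upgrades $\cR_+\supseteq\cR^{\BH(\alpha)}$ to $\cR = \cR^{\BH(\alpha)}$.

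For parts~3 and 4 (arbitrary dependence, shape-function thresholds \eqref{eq:shape-function}, with $\dBY$ the case $\nu(\{k\})=(kL_m)^{-1}$) I would bound the displayed supremum by the conditional analogue of the \citet{blanchard2008two} shape-function argument, which uses only the super-uniformity \eqref{eq:cond-stat} of $p_i$ given $S_i$ and therefore ignores the dependence entirely: write $1\{p_i\le\Delta_\alpha(R_i^0)\} = \sum_{k=1}^{R_i^0}1\{\Delta_\alpha(k-1)<p_i\le\Delta_\alpha(k)\}$, bound $1/R_i^0 \le 1/k$ on the $k$th summand (there $k\le R_i^0$), take $\EE[\cdot\mid S_i]$ using $\EE[1\{p_i\le t\}\mid S_i]\le t$, and collapse the telescoping double sum via $\sum_k\nu(\{k\})=1$ to obtain exactly $\alpha/m$, so $\hc_i\ge c_0$. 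Part~1 is easiest: with $S_i=p_{-i}$, independence makes $R_i^0$ a constant given $S_i$, so the supremum equals $\PP(p_i\le\alpha R_i^0/m\mid p_{-i})/R_i^0 = \alpha/m$; and for any $c>\alpha$ the event $\{p_i\le\tau^{\BH}(c)\}\setminus\{p_i\le\tau^{\BH}(\alpha)\}$ is an interval of positive probability on which $\hR_i<\infty$, so $g_i^*(c\semic p_{-i})>\alpha/m$, pinning $\hc_i=\alpha$.

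The main obstacle is part~2, where the dependence must be used rather than discarded. Since $R_i^0 = R^{\BH(\alpha)}(\pmi)$ is a non-increasing function of $p_{-i}$, the set $\{R_i^0\le k\}$ is increasing in $p_{-i}$, so CPRD makes $R_i^0$ stochastically non-increasing in $p_i$ given $S_i$; via a conditional quantile coupling I would represent $R_i^0 = \psi(p_i,W)$ with $W\perp p_i$ given $S_i$ and $\psi(\cdot,w)$ non-increasing, and condition on $W$. The task then reduces to showing $\int \frac{1\{u\le \alpha r(u)/m\}}{r(u)}\,dF(u) \le \alpha/m$ for a non-increasing integer-valued $r$ and a super-uniform $F$, and here the key geometric fact (visible in Figure~\ref{fig:cond-comp:posdep}) is that $\{u: u\le\alpha r(u)/m\}$ is a down-set $[0,u^*]$, on which the integrand equals $1/r(u)\le 1/r(u^*)$ while $r(u^*)\ge u^*m/\alpha$, so that with $F(u^*)\le u^*$ the integral is at most $\alpha/m$. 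The delicate points I expect to work through carefully are the validity of the quantile-coupling step (CPRD gives only stochastic, not pointwise, monotonicity of $R_i^0$ in $p_i$), jumps of $r$ and atoms of $F$ at $u^*$, and the discrete-$p_i$ case where $F$ need not be continuous.
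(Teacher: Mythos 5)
Your proposal is correct, and for parts 1, 3 and 4 it is essentially the paper's own argument: the same reduction via Lemma~\ref{lem:p-minus-i} to $g_i^*(c_0\semic S_i)=\EE[\,1\{p_i\le\Delta_{c_0}(R_i^0)\}/R_i^0\mid S_i]$, the same pinning of $\hc_i=\alpha$ in the independent case by exhibiting extra positive mass for $c>\alpha$, and for the shape-function/BY cases the same interval decomposition $\sum_k \frac1k 1\{p_i\in(\Delta_\alpha(k-1),\Delta_\alpha(k)]\}$ collapsed using only conditional superuniformity (the paper phrases the Abel-summation step as ``the uniform distribution maximizes the sum,'' you make the telescoping explicit; be aware you cannot bound $\PP(p_i\in I_k\mid S_i)$ by the interval length directly, which your double-sum collapse correctly avoids). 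Your safety bookkeeping $\hR_i=R^{\mathrm{base}}+1\{i\notin\cR^{\mathrm{base}}\}\le R_+$ is exactly the paper's one-line conclusion. Where you genuinely diverge is part 2. The paper never constructs a coupling: it notes that each $\{\hR_i\le r\}$ is an increasing event in $p_{-i}$, uses CPRD to compare conditioning on $\{p_i\le\alpha r/m\}$ versus $\{p_i\le\alpha(r+1)/m\}$, and telescopes over $r$ to get $\sum_{r=1}^m\PP[\hR_i=r\mid p_i\le\alpha r/m,S_i]\le 1$, which combined with $\PP(p_i\le\alpha r/m\mid S_i)\le\alpha r/m$ yields $g_i^*(\alpha\semic S_i)\le\alpha/m$. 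Your route instead realizes $R_i^0$ as $\psi(p_i,W)$ with $\psi(\cdot,w)$ non-increasing and $W$ independent of $p_i$ given $S_i$, then invokes the deterministic down-set/hyperbola lemma; that lemma is sound as you state it (the set $\{u\le\alpha r(u)/m\}$ is a down-set, $\inf_A r\ge u^*m/\alpha$, and $F(u^*)\le u^*$ handles atoms and discrete $p_i$ without any continuity assumption), and the quantile-coupling step does go through because stochastic monotonicity of the finitely many events $\{R_i^0\le k\}$ suffices to build a monotone inverse-CDF representation, with the conditional superuniformity of $p_i$ preserved after further conditioning on the independent $W$. The trade-off: your argument is more geometric (it literally formalizes Figure~\ref{fig:cond-comp:posdep}) but requires regular-conditional-distribution care that the paper's purely probabilistic telescoping sidesteps; conversely, both arguments use CPRD only through the events $\{\hR_i\le r\}$ and (after restricting to $p_i\in[0,\alpha]$, where the integrand is supported) are equally compatible with the relaxation in Remark~\ref{rem:cprd} needed for the one-sided $t$-statistic application.
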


The proof of Theorem~\ref{thm:safe} adapts and extends proofs in \citet{benjamini2001control} and \citet{blanchard2008two}, and may be found in Appendix~\ref{app:proofs}.

\begin{remark}\label{rem:cprd}
  The proof of Theorem~\ref{thm:safe} only involves values of $p_i$ in the interval $p_i \in [0,\alpha]$, and the only increasing sets that appear in the proof are of the form $\{\hR_i \leq r\}$. For the purpose of applying Theorem~\ref{thm:safe}, then, we could relax the definition of CPRD to require only that $\PP(\hR_i \leq r \mid p_i, S_i)$ is increasing for $p_i\in [0,\alpha]$. As a result, the second conclusion of Theorem~\ref{thm:safe} applies to one-sided testing with uncorrelated multivariate $t$-statistics even though the $p$-values are neither PRDS nor CPRDS, as we show in Section~\ref{sec:tstats}.
\end{remark}
\begin{remark}
  In many of our examples, including one-sided multivariate Gaussian testing, the entire data set can be reconstructed from $p_i$ and $S_i$. In that case, the conditional probability is always 0 or 1, and a sufficient condition for CPRD is that, fixing $S_i$, every other $p_{j}$ is an increasing function of $p_i$.
\end{remark}

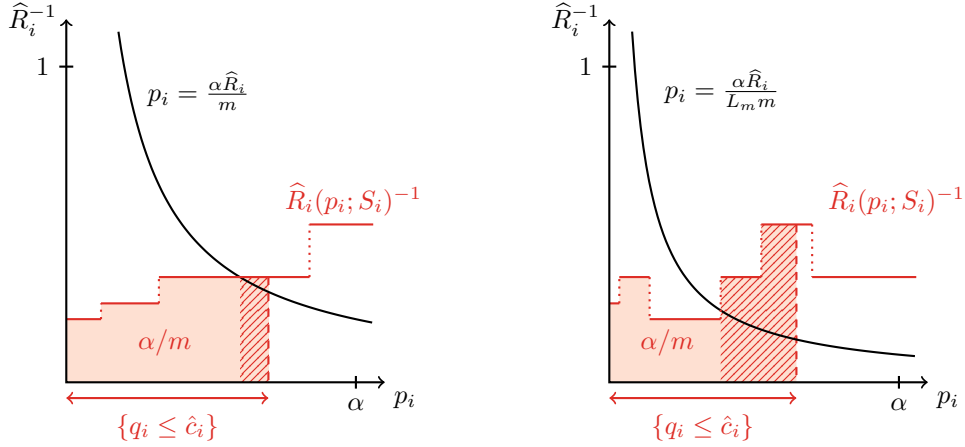
\begin{figure}
  \centering
  \begin{subfigure}{.46\hsize}
    \centering

  \begin{tikzpicture}[xscale=55*1.4,yscale=3*1.4]
    \path [fill=light1] (0,0) -- (0,1/5) --  (.03/5,1/5) --  (.03/5,1/4) --  (.08/5,1/4) --  (.08/5,1/3) --  (.15/5,1/3) -- (.15/5,0) -- cycle;
    
    \path [fill=light1] (.15/5,0) rectangle (0.1745/5,1/3);
    \path [pattern = north east lines, pattern color=dark1] (.15/5,0) rectangle (0.1745/5,1/3);
    \draw[thick, dashed, dark1] (0.1745/5,1/3) -- (0.1745/5,0);

    \draw[thick, domain=0.009:0.053, samples=100] plot (\x, {0.01/\x});

    \draw[thick, dark1] (0,1/5) --  (.03/5,1/5);
    \draw[thick, dark1] (.03/5,1/4) --  (.08/5,1/4);
    \draw[thick, dark1] (.08/5,1/3) --  (.15/5,1/3);
    \draw[thick, dark1] (.15/5,1/3) -- (.21/5,1/3);
    \draw[thick, dark1] (.21/5, 1/2) -- (.053, 1/2);
    
    \draw[thick, dotted, dark1] (.03/5,1/5) --  (.03/5,1/4);
    \draw[thick, dotted, dark1] (.08/5,1/4) --  (.08/5,1/3);
    \draw[thick, dotted, dark1] (.21/5,1/3) -- (.21/5, 1/2);

    \node [dark1] at (.017,1/8) { $\alpha/m$};
    \node [above right] at (.01*1.25,1/1.2) {$p_i = \frac{\alpha \hR_i}{m}$};    
    \node[dark1, above left] at (.063, 1/2) {$\hR_i(p_i;S_i)^{-1}$};
    \draw [thick, dark1, <->] (0, -.05) -- (0.1745/5,-.05);
    \node [below, dark1] at (0.1745/5/2,-.07) {$\{q_i \leq \hc_i\}$};

    \draw[thick, <->] (0,1.15) -- (0,0) -- (.055,0);
    \node [below right] at (.055,0) {$p_i$};
    \node [left] at (0,1.15) {$\hR_i^{-1}$};
    \draw [thick] (.05,-.02) node[below]{$\alpha$} -- (.05,0.02);
    \draw [thick] (-.0012,1) node[left]{$1$} -- (.0012,1);
\end{tikzpicture}
    \caption{The $\dBH_1(\alpha)$ procedure in the positive-dependent case.}
    \label{fig:dBH:pos-dep}
  \end{subfigure}
  \hspace{.04\hsize}
  \begin{subfigure}{.46\hsize}
    \centering

\begin{tikzpicture}[xscale=55*1.4,yscale=3*1.4]
    \path [fill=light1] 
      (0,1/4) -- (.02/5/2.2833333,1/4) --
      (.02/5/2.2833333,1/3) -- (.08/5/2.2833333,1/3) --
      (.08/5/2.2833333,1/5) -- (.22/5/2.2833333,1/5) --
      (.22/5/2.2833333,1/3) -- (.3/5/2.2833333,1/3) --
      (.3/5/2.2833333,1/2) -- (.369/5/2.2833333,1/2) -- 
      (.369/5/2.2833333,0) -- (0,0) -- cycle;
    \path [pattern = north east lines, pattern color=dark1]
      (.22/5/2.2833333,1/3) -- (.3/5/2.2833333,1/3) --
      (.3/5/2.2833333,1/2) -- (.369/5/2.2833333,1/2) -- 
      (.369/5/2.2833333,0) -- (.22/5/2.2833333,0) -- cycle;
    \draw[thick, dashed, dark1] (.369/5/2.2833333,1/2) -- (.369/5/2.2833333,0);

    \draw[thick, domain=0.0039416:0.053, samples=100] plot (\x, {0.01/\x/2.2833333});

    \draw[thick, dark1] (0,1/4) -- (.02/5/2.2833333,1/4);
    \draw[thick, dark1] (.02/5/2.2833333,1/3) -- (.08/5/2.2833333,1/3);
    \draw[thick, dark1] (.08/5/2.2833333,1/5) -- (.22/5/2.2833333,1/5);
    \draw[thick, dark1] (.22/5/2.2833333,1/3) -- (.3/5/2.2833333,1/3);
    \draw[thick, dark1] (.3/5/2.2833333,1/2) -- (.4/5/2.2833333,1/2);
    \draw[thick, dark1] (.4/5/2.2833333,1/3) -- (.053,1/3);

    \draw[thick, dark1, dotted] (.02/5/2.2833333,1/4) --  (.02/5/2.2833333,1/3);
    \draw[thick, dark1, dotted] (.08/5/2.2833333,1/3) --  (.08/5/2.2833333,1/5);
    \draw[thick, dark1, dotted] (.22/5/2.2833333,1/5) --  (.22/5/2.2833333,1/3);
    \draw[thick, dark1, dotted] (.3/5/2.2833333,1/3) --  (.3/5/2.2833333,1/2);
    \draw[thick, dark1, dotted] (.4/5/2.2833333,1/2) --  (.4/5/2.2833333,1/3);

    

    \node [dark1] at (.01,1/8) {$\alpha / m$};
    \node [above right] at (.01*.8,1/1.2) {$p_i = \frac{\alpha \hR_i}{L_m m}$};    
    \node[dark1, above left] at (.063, 1/2) {$\hR_i(p_i;S_i)^{-1}$};
    \draw [thick, dark1, <->] (0, -.05) -- (.369/5/2.2833333,-.05);
    \node [below, dark1] at (.369/5/2.2833333/2,-.07) {$\{q_i \leq \hc_i\}$};

    \draw[thick, <->] (0,1.15) -- (0,0) -- (.055,0);
    \node [below right] at (.055,0) {$p_i$};
    \node [left] at (0,1.15) {$\hR_i^{-1}$};
    \draw [thick] (.05,-.02) node[below]{$\alpha$} -- (.05,0.02);
    \draw [thick] (-.0012,1) node[left]{$1$} -- (.0012,1);
\end{tikzpicture}
    \caption{The $\dBY(\alpha)$ procedure, which is often much more powerful than $\BY(\alpha)$.}
    \label{fig:dBH:dBY}
  \end{subfigure}
  \caption{Illustration of conditional calibration for the $\dBH_1$ and $\dBY$ procedures.  The red line in each plot represents the inverse of $\hR_i = |\cR^{\BH(\gamma \alpha)} \cup \{i\}|$ (assumed here as in Figure~\ref{fig:cond-comp} to be a deterministic function of $p_i$ and $S_i$) for $\gamma=1$ (left) and $\gamma = 1/L_m$ (right). If $p_i$ is uniform under the null, the conditional FDR contribution is estimated as $\int_{u:\; q_i(u) \leq \hc_i} \hR_i^{-1}(u)\,du$. The hatched region is added in the calibration step to ``top up'' the FDR contribution to $\alpha/m$. The set $\{u:\; q_i(u) \leq \hc_i\}$ is not always a contiguous interval.}
  \label{fig:dBH}
\end{figure}

Recognizing $\hc_i$ as a $q$-value threshold has a convenient computational interpretation as well, because $i \in \cR_+$ if and only if $g_i^*(q_i \semic S_i) \leq \alpha / m$. As a result, we usually do not need to calculate $\hc_i$ explicitly, but can simply evaluate the conditional expectation plugging in $c = q_i(X)$. Section~\ref{sec:computation} discusses computational considerations in more detail.

\subsection{Identifying the conditioning statistic $S_i$}\label{sec:find-s}

There are several desiderata for a good conditioning statistic. Most importantly, recall that our method's validity depends on $p_i$ being a valid conditional $p$-value, so that \eqref{eq:cond-stat} holds. To facilitate calibration, $S_i$ should also eliminate or mitigate the influence of nuisance parameters on the conditional distribution of $X$. Finally, the conditional distribution under $H_i$ should be analytically and/or computationally tractable. Calibration is conceptually simplified if $S_i$ is a sufficient statistic for the null submodel $H_i$, so that the conditional distribution of $X$ is known under $H_i$. In that case, we say $H_i$ is {\em conditionally simple}, and $g_i^*(c \semic S_i)$ is an integral we can directly evaluate. Otherwise, we say $H_i$ is {\em conditionally composite}, and $P_i^*$ is {\em least favorable} for calibrating $\hc_i$ if it almost surely attains the supremum:
\begin{equation}\label{eq:least-fav}
g_i^*(c \semic S_i) = \EE_{P_i^*}\left[\,\frac{1\{p_i \leq \tau_i(c)\}}{\hR_i} \mid S_i \,\right], \quad \text{for all } c \geq 0.
\end{equation}

We discuss two primary examples where the choice of conditioning statistic is fairly natural: parametric exponential family models and nonparametric models with constraints on the dependence graph.

\begin{example}[Exponential families]
In exponential family models, there is a natural choice of $S_i$ which follows from the classical theory of conditional testing in the style of \citet{lehmann1955completeness}. Suppose our model arises from a full-rank exponential family in canonical form:
\begin{equation}\label{eq:exp-fam}
  X \sim f_{\theta}(x) = e^{\theta'T(x) - A(\theta)}\, f_0(x), \quad \theta \in \Theta \subset \RR^d,
\end{equation}
and for $i = 1,\ldots, m \leq d$, $H_i$ takes the form $H_i:\theta_i = 0$ or $H_i:\;\theta_i\leq 0$. In this setting, the uniformly most powerful unbiased (UMPU) test rejects $H_i$ when $T_i(X)$ is extreme, conditional on the value of $S_i = T_{-i}(X)$; see e.g. \citet{lehmann2005testing}. As a result, $T_{-i}(X)$ makes a natural choice of test statistic, because it eliminates nuisance parameters and because $p_i$ is conditionally valid by construction, so the hypothesis $H_i$ is conditionally simple and we can evaluate $g_i^*(c)$ directly. 
For one-sided testing, the null hypothesis $H_i:\; \theta_i \leq 0$ is conditionally composite, but we will see in Section~\ref{sec:one-sided} that under mild conditions setting $\theta_i=0$ is least favorable.

Section~\ref{sec:examples} and Appendix~\ref{app:examples} discuss a variety of multiple testing problems in this vein, including multivariate Gaussian test statistics, testing in linear models, multiple comparisons with binary responses, and conditional independence testing in Gaussian graphical models.
\end{example}

\begin{example}[Nonparametric models with dependence constraints]\label{ex:nonpar-graph}
As a nonparametric example, consider observing only $p$-values $p \in [0,1]^m$, where the dependence is not parametrically specified but there are constraints on the dependence graph between $p$-values. Specifically, we assume that each $p$-value has a neighborhood $\cM_i\subseteq [m]$ for which $S_i = p_{\cM_i^\setcomp}$ is independent of $p_i$. We assume nothing else about $p_i$ except that it is uniform or superuniform under $H_i$. In this case the null hypotheses are not conditionally simple since we cannot directly sample from the distribution of $p$ given $S_i$. Instead, we must assume worst-case dependence of $p_{\cM_i}$ on $p_i$: the least favorable distribution $P_i^*$ adversarially configures $p_{\cM_i\setminus\{i\}}$ to maximize the integrand in \eqref{eq:def-hc}, conditional on $p_i$ and $S_i = p_{\cM_i^\setcomp}$. This may nevertheless allow for considerable improvement on the BY method or the shape function approach of \citet{blanchard2008two}, which effectively assume worst-case dependence of {\em all} other $p$-values on $p_i$, corresponding to the special case where all $\cM_i = [m]$. We defer further exploration of this example to future work.
\end{example}

\subsection{Recursive refinement of $\hR_i$}\label{sec:recursive}

The random variable $\hR_i(X)$ is an unusual estimator in that, by the time the procedure terminates, we will have computed the estimand $R_i(X)$. If they differ, it seems natural to re-run the procedure substituting the ``correct value'' for the inaccurate estimate. While we can start again at Step 1 with the same threshold family, we will still not have a perfect estimator because changing $(\hR_1,\ldots,\hR_m)$ will affect the entire procedure and change $(R_1,\ldots,R_m)$ in turn. Nevertheless, we may obtain a better procedure if the new $\hR_i$ is a better estimator of the new $R_i$. We call this process {\em recursive refinement} of the estimator.

We will denote the original estimator as $\hR_i^{(1)}$, which leads to original calibration parameter $\hc_i^{(1)}$ and initial rejection set $\cR_+^{(1)}$. We define the  recursively refined estimator as
\begin{equation}
  \hR_i^{(2)}(X) = \left|\cR_+^{(1)}(X) \cup \{i\}\right|.
\end{equation}

We can then calibrate new thresholds $\hc_i^{(2)}$ solving \eqref{eq:def-hc} with respect to $\hR_i^{(2)}$, and proceed as before. In principle, we can repeat this refinement as many times as we want, defining
$\hR_i^{(k)}(X) = \left|\cR_+^{(k-1)}(X) \cup \{i\}\right|$ for all $k > 1$, but in most problems of moderate size the computational cost is prohibitive for $k > 2$, for reasons we explain in Section~\ref{sec:computation}.

Recursive refinement is especially useful when we begin with a very conservative estimator, as we do when we use the $\dBY$ procedure. If we use the effective $\BH$ threshold with the $\dBH_\gamma(\alpha)$ estimator, we call the resulting procedure the $\dBH_\gamma^2(\alpha)$ procedure, or the $\dBY^2(\alpha)$ procedure if $\gamma = 1/L_m$. When the baseline procedure is safe, recursive refinement always yields another safe procedure that is uniformly more powerful:

\begin{theorem}\label{thm:recursive}
    Assume $\hc_1^{(k)},\ldots,\hc_m^{(k)}$ are maximal for all $i$ and $k$. If $\cR^{(1)}$ is safe, then for every $k \geq 1$, $\cR^{(k+1)}$ is safe and uniformly more powerful than $\cR^{(k)}$.
\end{theorem}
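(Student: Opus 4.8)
\medskip

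The plan is to induct on $k$ with the single inductive hypothesis that $\cR^{(k)}$ is safe; the base case $k=1$ is the assumption of the theorem. The inductive step splits into two parts: first show $\cR_+^{(k)} \subseteq \cR_+^{(k+1)}$, then show $\cR^{(k+1)}$ is itself safe. Together these give $\cR^{(k+1)} = \cR_+^{(k+1)} \supseteq \cR_+^{(k)} = \cR^{(k)}$, i.e. uniform dominance, and re-establish the hypothesis for $k+1$, closing the induction. FDR control of each $\cR^{(k)}$ is not something new to prove --- it is an instance of Theorem~\ref{thm:fdr-control} with the legitimate estimator $\hR_i^{(k)} \geq 1$ calibrated to satisfy \eqref{eq:def-hc}.

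The workhorse is a monotonicity property of the calibration step, which I would state as follows: \emph{if $\hR_i'$ and $\hR_i$ are two estimators with $\hR_i' \geq \hR_i$ almost surely on the initial rejection event $\{i \in \cR_+\}$ determined by the $\hR_i$-calibration, and both calibrations use maximal $\hc_i$, then the two initial rejection sets satisfy $\cR_+ \subseteq \cR_+'$.} The crucial point is that the comparison between estimators is needed only on the event that $H_i$ is initially rejected under the first calibration, not globally. To see why it holds, fix $S_i$ and any $c$ at or below $\hc_i(S_i)$; since $\tau_i$ is non-decreasing in $c$, the indicator $1\{p_i \leq \tau_i(c)\}$ vanishes off $\{p_i \leq \tau_i(\hc_i(S_i))\} = \{i \in \cR_+\}$, so on its support $\hR_i' \geq \hR_i$, whence $g_i^*(c \semic S_i)$ computed with $\hR_i'$ is at most $g_i^*(c \semic S_i)$ computed with $\hR_i$, which is at most $\alpha/m$ by maximality and monotonicity of $g_i^*$ in $c$. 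Thus every such $c$ is feasible for the $\hR_i'$-calibration, so the least upper bound $c_i^{*\prime}$ is at least $\hc_i(S_i)$, the maximal calibration parameter $\hc_i'$ is correspondingly at least $\hc_i$, and $\tau_i$ monotone in $c$ yields $p_i \leq \tau_i(\hc_i) \Rightarrow p_i \leq \tau_i(\hc_i')$. As in Steps 1--2, some care is needed when $\hc_i$ is realized as an increasing sequence $\hc_{i,t} \uparrow c_i^*$ with $\tau_i$ or $g_i^*$ discontinuous, but this is dispatched by the same elementary limiting argument used there (a value strictly below $c_i^*$ is feasible for the looser calibration, so that calibration's $c_i^*$ is at least as large, and one then chases $c$ through the two defining sequences).

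With the lemma in hand the inductive step is short. Assume $\cR^{(k)}$ is safe, so $\cR^{(k)} = \cR_+^{(k)}$ and, by the definition of safety, $\hR_i^{(k)} \leq R_+^{(k)} := |\cR_+^{(k)}|$ for every $i \in \cR_+^{(k)}$. On the event $\{i \in \cR_+^{(k)}\}$ we then have $\hR_i^{(k+1)} = |\cR_+^{(k)}(X) \cup \{i\}| = |\cR_+^{(k)}(X)| = R_+^{(k)} \geq \hR_i^{(k)}$, which is exactly the hypothesis of the lemma with $\hR_i = \hR_i^{(k)}$ and $\hR_i' = \hR_i^{(k+1)}$; hence $\cR_+^{(k)} \subseteq \cR_+^{(k+1)}$. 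To check safety of $\cR^{(k+1)}$, take any $j \in \cR_+^{(k+1)}$: since $\cR_+^{(k)} \cup \{j\} \subseteq \cR_+^{(k+1)}$, we get $\hR_j^{(k+1)} = |\cR_+^{(k)}(X) \cup \{j\}| \leq |\cR_+^{(k+1)}(X)| = R_+^{(k+1)}$, so no index triggers pruning in round $k+1$, $\cR^{(k+1)} = \cR_+^{(k+1)}$, and $\cR^{(k+1)} \supseteq \cR_+^{(k)} = \cR^{(k)}$.

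The main obstacle is getting the monotone-calibration lemma right. The essential realization --- and the reason the argument is clean rather than delicate --- is that one should \emph{not} attempt to compare $\hR_i^{(k)}$ and $\hR_i^{(k+1)}$ everywhere (indeed, off the rejection event the refined estimator $R_+^{(k)}+1$ need not dominate the original one, and the base case $k=1$ would then force a procedure-specific detour through \eqref{eq:R-sandwich}), but only on the rejection event, where safety of the previous round supplies precisely the inequality $\hR_i^{(k+1)} = R_+^{(k)} \geq \hR_i^{(k)}$ that is needed. Everything else is bookkeeping about sizes of finite sets together with the by-now-familiar boundary care for non-maximal or sequence-valued $\hc_i$.
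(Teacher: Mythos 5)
Your proof is correct and follows essentially the same route as the paper's: safety of round $k$ gives $\hR_i^{(k+1)} \geq \hR_i^{(k)}$ on the rejection event, so the calibrated parameters satisfy $\hc_i^{(k+1)} \geq \hc_i^{(k)}$, the initial rejection sets nest, and the inclusion $\cR_+^{(k)} \cup \{j\} \subseteq \cR_+^{(k+1)}$ yields safety of round $k+1$. Your explicit observation that the estimator comparison is only needed where the indicator $1\{p_i \leq \tau_i(c)\}$ is nonzero is a nice clarification of a step the paper states more tersely, but it is the same argument.
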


\begin{proof}
  It is sufficient to prove the result for $k = 1$, since for any $k > 1$, $\cR^{(k+1)}$ bears the same relationship to $\cR^{(k)}$ that $\cR^{(2)}$ bears to $\cR^{(1)}$.

The safeness of $\hR_i^{(1)}$ implies that, almost surely, $\cR_+^{(1)} = \cR^{(1)}$ and
\begin{equation}
  R_i^{(1)} = R_+^{(1)} \geq \hR_i^{(1)}, \text{ for all } i \in \cR^{(1)}.
\end{equation}

As a result, we have
\begin{equation}
  \hR_i^{(2)} = |\cR^{(1)} \cup \{i\}| = R_i^{(1)} \geq \hR_i^{(1)}, \text{ for all } i \in \cR^{(1)}, \text{ almost surely.}
\end{equation}

Because $\hR_i^{(2)} \geq \hR_i^{(1)}$, the integrand in \eqref{eq:def-hc} for $k=2$ is almost surely smaller than the integrand for $k=1$, so $\hc_i^{(2)} \geq \hc_i^{(1)}$ almost surely. If all the calibration parameters increase, then $\cR_+^{(2)} \supseteq \cR_+^{(1)}$ as well, so for all $i \in \cR^{(2)}$,
\begin{equation}
  \cR_+^{(2)} \supseteq (\cR^{(1)} \cup \{i\}) \;\Longrightarrow\; R_+^{(2)} \geq \hR_i^{(2)},
\end{equation}
completing the proof.
\end{proof}

\subsection{One-sided testing in exponential family models}\label{sec:one-sided}

In this section we consider how to test the one-sided hypotheses $H_i:\; \theta_i \leq 0$ in exponential family models of the form \eqref{eq:exp-fam}. We assume throughout that $0 \in \Theta\cup \RR^d$, and that the tests are right-tailed; otherwise we can reparameterize the family (possibly with different reparameterizations for each $i$).

If we were testing a single hypothesis, the UMPU test for $H_i$ would reject for large values of $T_i$, conditional on the value of $S_i = T_{-i}$; these include the $z$-, $t$-, and Fisher exact tests we discuss in Section~\ref{sec:examples} and Appendix~\ref{app:examples} \citep{lehmann2005testing}. That is, the conditional test $p$-value is given by 
\[
p_i = 1 - \PP_{\theta_i=0}(T_i < t \mid S_i) = \lim_{t \rightarrow T_i^-} 1 - F_{i, 0}(t \mid S_i),
\]
where the conditional distribution $F_{i,\theta_i}(t \mid S_i) = \PP_{\theta_i}(T_i \leq t \mid S_i)$ depends only on $\theta_i$.

Because the distribution of $T_i$ is stochastically increasing in $\theta_i$, only the boundary case $\theta_i = 0$ is relevant for calculating the $p$-value, but we cannot necessarily restrict our attention to the boundary when we calibrate $\hc_i$, unless $\theta_i=0$ is least favorable in the sense of \eqref{eq:least-fav}. The next result gives a sufficient condition for least favorability:

\begin{proposition}\label{prop:worst-case-boundary}
  Consider testing $H_i:\; \theta_i \leq 0$ for $i = 1,\ldots,m \leq d$ in an exponential family model of the form \eqref{eq:exp-fam} with $0 \in \Theta^{\mathrm{o}} \subseteq \RR^d$. Assume for all $i$ that $p_i$ is given by the standard one-sided UMPU test, and that we have for some $\alpha'$ which is not necessarily the target FDR level, almost surely,
  \begin{enumerate}[(i)]
    \item $\tau_i(c; X) \leq \alpha'$, and
    \item Under $\theta_i=0$, the conditional upper-$\alpha'$ quantile of $T_i$ is above its conditional mean:
      \[
      F_{i,0}(t \mid S_i) < 1-\alpha', \quad \text{for all} \;\; t < \mu_{i,0}(S_i),
      \]
      where $\mu_{i,\theta_i}(S_i) = \EE_{\theta_i}\left[T_i \mid S_i\right]$.
  \end{enumerate}
  Then, $\theta_i = 0$ is least favorable for $H_i$, for purposes of calibrating $\hc_i$.
\end{proposition}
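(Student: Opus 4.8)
The plan is to condition on $S_i = T_{-i}$ so as to reduce the problem to a one-parameter exponential family in $\theta_i$, and then to show that the conditional expectation appearing in \eqref{eq:least-fav} is maximized at $\theta_i = 0$ by a short exponential-tilting argument. The only nontrivial input will be that the rejection event $\{p_i \le \tau_i(c)\}$ lies entirely above the conditional mean of $T_i$, which is exactly what assumptions (i) and (ii) together provide.

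First I would set up the reduction. Fix $i$ and a realization of $S_i = T_{-i}$. In the full-rank exponential family \eqref{eq:exp-fam}, the conditional law of $T_i$ given $T_{-i}$ is itself an exponential family whose conditional density at parameter $\theta_i$ is proportional to $e^{\theta_i t}$ relative to its value at $\theta_i = 0$, and the conditional law of $X$ given the full statistic $(T_i, S_i)$ does not involve $\theta$. Hence, writing $h = 1\{p_i \le \tau_i(c)\}/\hR_i$, which is nonnegative since $\hR_i \ge 1$, the quantity $\EE_P[h \mid S_i]$ depends on $P \in H_i$ only through $\theta_i \in (-\infty, 0]$, and $\theta_i = 0$ is attainable because $0 \in \Theta^{\mathrm o}$. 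Thus \eqref{eq:least-fav} follows, with $P_i^*$ any distribution having $\theta_i^* = 0$, once I show that
\[
\EE_{\theta_i}[h \mid S_i] \;\le\; \EE_0[h \mid S_i] \qquad \text{for every } \theta_i \le 0.
\]

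The crux is a containment. Because $p_i$ is the one-sided UMPU $p$-value, $1 - p_i = F_{i,0}(T_i^- \mid S_i) := \lim_{t \to T_i^-} F_{i,0}(t \mid S_i)$, the left-hand limit of the conditional c.d.f.\ at the observed $T_i$. On the event $\{p_i \le \tau_i(c)\}$, assumption (i) gives $p_i \le \tau_i(c) \le \alpha'$, so $F_{i,0}(T_i \mid S_i) \ge F_{i,0}(T_i^- \mid S_i) = 1 - p_i \ge 1 - \alpha'$; by the contrapositive of assumption (ii) this forces $T_i \ge \mu_{i,0}(S_i)$, so $h$ is supported on $\{T_i \ge \mu_{i,0}(S_i)\}$. (This uses only (i) and (ii); no monotonicity of $\tau_i$ or $\hR_i$ in $T_i$ is needed, and $\tau_i$ may depend on the rest of $X$.) Now run the tilt: with $L_{\theta_i}(t) = e^{\theta_i t}/\EE_0[e^{\theta_i T_i}\mid S_i]$ the conditional likelihood ratio, $\EE_{\theta_i}[h \mid S_i] = \EE_0[h\, L_{\theta_i}(T_i) \mid S_i]$. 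For $\theta_i \le 0$ the map $t \mapsto e^{\theta_i t}$ is convex, so Jensen gives $\EE_0[e^{\theta_i T_i}\mid S_i] \ge e^{\theta_i \mu_{i,0}(S_i)}$; since $\theta_i \le 0$ also makes $e^{\theta_i \cdot}$ nonincreasing, for every $t \ge \mu_{i,0}(S_i)$ we get $L_{\theta_i}(t) \le e^{\theta_i \mu_{i,0}(S_i)}/e^{\theta_i \mu_{i,0}(S_i)} = 1$. Combining with the containment, $h\, L_{\theta_i}(T_i) \le h$ pointwise, and integrating yields the displayed inequality.

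I expect the main obstacle to be conceptual rather than technical: recognizing that assumption (ii) is precisely the statement that the level-$\alpha'$ rejection region for $p_i$ cannot dip below the conditional mean $\mu_{i,0}(S_i)$, and that $\mu_{i,0}(S_i)$ is exactly the point at which Jensen's inequality pins the exponential-tilting likelihood ratio below $1$ for every $\theta_i \le 0$. Once that alignment is spotted the proof is a one-line tilting inequality plus routine exponential-family bookkeeping — for instance checking that $\EE_0[e^{\theta_i T_i}\mid S_i]$ is finite and positive and that $\mu_{i,0}(S_i) = \EE_0[T_i \mid S_i]$ is finite for the relevant $\theta_i$, all of which follow from $0 \in \Theta^{\mathrm o}$ and the standard analytic properties of exponential families.
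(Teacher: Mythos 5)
Your proposal is correct, and it shares the paper's essential skeleton: condition on $S_i$, use sufficiency of $(T_i,S_i)$ so that the conditional expectation depends on $P\in H_i$ only through $\theta_i$, and observe that assumptions (i)--(ii) force the event $\{p_i\le\tau_i(c)\}$ to live on $\{T_i\ge\mu_{i,0}(S_i)\}$. Where you diverge is the final monotonicity step. The paper differentiates $g_{i,\theta_i}(c\semic s)=\int h(t,s)f_{\theta_i}(t\mid s)\,dP_0(t)$ in $\theta_i$ under the integral sign (justified by dominated convergence, since $h\in[0,1]$), writes the derivative as $\int h(t,s)\,(t-\mu_{i,\theta_i}(s))\,f_{\theta_i}(t\mid s)\,dP_0(t)$, and uses $\mu_{i,\theta_i}(s)\le\mu_{i,0}(s)$ together with the containment to conclude the derivative is nonnegative on $\theta_i\le 0$, so $g_{i,\theta_i}$ is monotone in $\theta_i$ up to the boundary. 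You instead compare each $\theta_i\le 0$ directly to $0$ via the conditional likelihood ratio $L_{\theta_i}(t)=e^{\theta_i t}/\EE_0[e^{\theta_i T_i}\mid S_i]$, bounding it by $1$ on $\{t\ge\mu_{i,0}(S_i)\}$ using Jensen in the denominator and monotonicity of $e^{\theta_i\cdot}$ in the numerator, then integrating $hL_{\theta_i}(T_i)\le h$. Your route avoids differentiating under the integral and does not need the fact that $\mu_{i,\theta_i}$ is increasing in $\theta_i$, so it is slightly more elementary; the paper's route yields the marginally stronger conclusion that $g_{i,\theta_i}$ is monotone over the whole half-line $\theta_i\le 0$, not just dominated by its value at $0$ (though only the latter is needed). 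One small bookkeeping point you handle implicitly and should keep explicit in a write-up: the tilting identity $\EE_{\theta_i}[h\mid S_i]=\EE_0[h\,L_{\theta_i}(T_i)\mid S_i]$ for $h$ a function of all of $X$ requires first passing to $\bar h(t,s)=\EE[h\mid T_i=t,S_i=s]$, which is parameter-free by sufficiency of $T(X)$ --- exactly the role of the paper's function $h(t,s)$.
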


\begin{proof}
  Let $f_{\theta_i}(t \mid s)$ denote the conditional density of $T_i(X)$ given $S_i(X) = s$, which is a one-parameter exponential family with respect to some base measure $P_0$:
  \[
  f_{\theta_i}(t \mid s) = e^{\theta_i t - B(\theta_i \mid s)} f_0(t \mid s),
  \]
  and let
  \[
  h(t,s) = \EE\left[\,\frac{1\{p_i \leq \tau_i(c)\}}{\hR_i} \mid S_i =s, T_i = t\,\right],
  \]
  which does not depend on $\theta$ by sufficiency of $(S_i, T_i)$. 
  
  Finally, define
  \begin{align*}
    g_{i,\theta_i}(c \semic s)
    &= \EE_{\theta_i} \left[\,\frac{1\{p_i \leq \tau_i(c)\}}{\hR_i} \mid S_i = s\,\right]\\
    &= \int h(t, s) f_{\theta_i}(t \mid s)\,dP_0(t).
  \end{align*}
  
  Because $h(t,s)$ takes values in $[0,1]$, by dominated convergence we can differentiate under the integral sign, giving
  \[
  \frac{\partial}{\partial \theta_i} g_{i,\theta_i}(c\semic s) = \int h(t, s) (t - \mu_{i,\theta_i}(s)) f_{\theta_i}(t \mid s)\,dP_0(t),
  \]
  where $\mu_{i,\theta_i}(s) = \frac{d}{d \theta_i} B(\theta_i \mid s) = \EE_{\theta_i}[ T_i \mid S_i = s]$ is increasing in $\theta_i$.

  If $\theta_i \leq 0$ then our assumption (ii) ensures that $p_i > \alpha' \geq \tau_i(c)$, and therefore $h(T_i \mid S_i)=0$, whenever $T_i < \mu_{i,\theta_i}(S_i) \leq \mu_{i,0}(S_i)$, so the integrand is non-negative for all $t$ and $s$. As a result $g_{i,\theta_i}(c \semic s)$ is non-decreasing in $\theta_i$ for any $\theta_i \leq 0$, so it attains its maximum at 0.
\end{proof}

Although there is no universal cap on the $\tau^{\BH}$ threshold, in practice it very rarely exceeds $\alpha$, and we can choose to modify it by capping it manually at some $\alpha'$. In our implementation of $\dBH$, we cap $c$ at $2\alpha$, effectively capping $\tau^{\BH}$ at $\alpha'=2\alpha$, as discussed in Appendix \ref{app:qcap}. In all examples discussed in Section \ref{sec:examples}, $T_i$ is symmetrically distributed given $S_i$ with $\mu_{i, 0}(S_i) = 0$, so the assumptions of Proposition~\ref{prop:worst-case-boundary} hold for all $\alpha \leq 0.25$.

\subsection{Two-sided testing and directional error control}\label{sec:two-sided}

We say a hypothesis is {\em two-sided} if it can be written as $H_i:\; \theta_i(P) = 0$, for some parameter $\theta_i$ mapping $\cP$ to $\RR$, where the range includes both positive and negative values. Because a two-sided hypothesis frequently represents a ``measure-zero'' set in the model $\cP$, rejecting $H_i$ is more meaningful when we can also draw an inference about the sign of $\theta_i$. If we write $H_i$ as the intersection of the one-sided hypotheses $H_i^\leq:\; \theta_i(P) \leq 0$ and $H_i^\geq:\; \theta_i(P) \geq 0$, a {\em directional inference} is one that rejects exactly one of $H_i^{\leq}$ and $H_i^\geq$ along with $H_i$.

For multiple testing of two-sided hypotheses with directional inferences, let $\cR^+$ denote the set of indices for which we declare $\theta_i > 0$ (reject $H_i^{\leq}$) and $\cR^-$ the set for which we declare $\theta_i < 0$ (reject $H_i^{\geq}$), with $\cR$ the disjoint union of both sets. Let $p_i^+$ and $p_i^-$ denote $p$-values for each of the two tests, which we assume are conditionally valid:
\begin{equation}\label{eq:two-sided-p}
\sup_{P \in H_i^{\leq}} \PP_P(p_i^+ \leq \alpha \mid S_i) \leq \alpha \quad\text{and}\;\; \sup_{P \in H_i^{\geq}}\PP_P(p_i^- \leq \alpha \mid S_i) \leq \alpha, \quad \forall \alpha\in[0,1].
\end{equation}

We assume the two one-sided tests are based on a common test statistic $T_i$, with $H_i^{\leq}$ rejected when $T_i$ is large and $H_i^{\geq}$ rejected when $T_i$ is small, where the critical thresholds possibly depend on $S_i$. For the sake of simplicity we also assume the two-sided test is {\em equal-tailed}, in the sense that $p_i = 2\min\{p_i^+, p_i^-\}$.

In general, multiple testing procedures that are valid for two-sided hypotheses do not necessarily justify directional conclusions even if the constituent single hypothesis tests do \citep{shaffer1980control, finner1999stepwise}. Testing two-sided hypotheses with directional inferences creates more opportunities to make errors: defining $\cH_i^{\leq} = \{i:\; \theta_i \leq 0\}$ and likewise $\cH_i^\geq = \{i:\; \theta_i \geq 0\}$, the number of directional errors is $V^{\dir} = V^+ + V^-$, where
\[
V^+ = \left|\cH^{\leq} \cap \cR^+\right|, \quad \text{ and } \;\;V^- = \left|\cH^{\geq} \cap \cR^-\right|.
\]
The {\em directional FDP} is defined as $\FDP^\dir = V^\dir / (R \vee 1)$, and its expectation $\FDR^\dir$ is the {\em directional FDR.} The next result gives a natural sufficient condition guaranteeing that our method with directional inferences controls the directional FDR:

\begin{lemma}\label{lem:dirFDR}
  Assume $p_i^+$ and $p_i^-$ are valid in the sense of \eqref{eq:two-sided-p}, and that the assumptions of Theorem~\ref{thm:fdr-control} are satisfied for $S_i, \tau_i, \hc_{i,t}$ and $\hR_i$, for $i=1,\ldots,m$. Define
  \[
  g_{i,\theta}^a(c \semic S_i) \;=\; \sup_{P: \,\theta_i(P) = \theta} \EE_P\left[\,\frac{1\{p_i^a \leq \tau_i(c)/2\}}{\hR_i} \mid S_i\,\right], \quad
  \text{for} \;\; a = +, -.
  \]
  If $g_{i,\theta}^+(c \semic S_i)$ is almost surely non-decreasing in $\theta$ for $\theta \leq 0$, and $g_{i,\theta}^-$ is non-increasing in $\theta$ for $\theta \geq 0$, then our three-step method controls the directional FDR.
\end{lemma}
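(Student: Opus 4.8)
The plan is to start from the additive decomposition $\FDR^\dir = \EE[V^\dir/(R\vee1)] = \sum_{i=1}^m \EE\big[(V_i^+ + V_i^-)/(R\vee1)\big]$ and to show each summand is at most $\alpha/m$, which yields $\FDR^\dir \le \alpha$ — with the true nulls accounting for $\alpha m_0/m$ of this and the false nulls for the remaining $\alpha(m-m_0)/m$ worth of possible directional errors. I would split into cases by the sign of $\theta_i(P)$. When $\theta_i(P) = 0$ the index is a true null, every signed rejection is a directional error, so $V_i^+ + V_i^- = V_i = 1\{i\in\cR\}$, and the bound $\EE[V_i/(R\vee1)]\le\alpha/m$ is precisely the content of the proof of Theorem~\ref{thm:fdr-control}, whose hypotheses are assumed; no new work is needed here. (For this to be meaningful one checks that $p_i$ is conditionally superuniform given $S_i$ under $H_i$, which follows from \eqref{eq:two-sided-p} by a union bound on $\{p_i^+\le\alpha/2\}\cup\{p_i^-\le\alpha/2\}$.) The real work is in the cases $\theta_i(P)<0$, where $V_i^- = 0$ so $V_i^\dir = V_i^+$, and $\theta_i(P)>0$, where $V_i^\dir = V_i^-$; by symmetry I describe only the former.

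The first step is a deterministic inclusion. Equal-tailedness gives $p_i = 2\min\{p_i^+, p_i^-\}$, and membership in $\cR^+$ means $i$ is rejected and the positive tail is the more significant one, i.e.\ $p_i^+\le p_i^-$; hence on $\{i\in\cR^+\}$ we have $i\in\cR_+$ and $p_i^+ = \min\{p_i^+,p_i^-\} = p_i/2 \le \tau_i(\hc_i)/2$. Writing $V_i^+ = 1\{i\in\cR_+,\,p_i^+\le p_i^-\}\cdot 1\{u_i\le R/\hR_i\}$, the first indicator is therefore dominated by $1\{p_i^+\le\tau_i(\hc_i)/2\}$. Now I would transcribe the chain in the proof of Theorem~\ref{thm:fdr-control} verbatim, carrying the extra $X$-measurable factor $1\{p_i^+\le p_i^-\}$ along for the ride: apply Lemma~\ref{lem:p-minus-i} to the secondary BH procedure to replace $R$ by $R_i^* = R(X;u^{(i\gets 0)})$ on the relevant event; condition on $(X, u_{-i})$, under which $1\{i\in\cR_+,\,p_i^+\le p_i^-\}$, $R_i^*$ and $\hR_i$ are all fixed and $\EE[1\{u_i\le R_i^*/\hR_i\}\mid X,u_{-i}]\le R_i^*/\hR_i$; use monotone convergence to move the limit in $t$ outside; and condition on $S_i$. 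This bounds $\EE[V_i^+/(R\vee1)]$ by $\lim_t \EE\big[\EE_P[\,1\{p_i^+\le\tau_i(\hc_{i,t})/2\}/\hR_i \mid S_i\,]\big]$.

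The heart of the matter is bounding the inner conditional expectation. Since $\hc_{i,t}$ is $S_i$-measurable, conditionally on $S_i$ it is a constant, so $\EE_P[\,1\{p_i^+\le\tau_i(\hc_{i,t})/2\}/\hR_i\mid S_i\,]\le g_{i,\theta_i(P)}^+(\hc_{i,t}\semic S_i)$ by definition of $g_{i,\theta}^+$. The hypothesis that $g_{i,\theta}^+$ is non-decreasing in $\theta$ over $\theta\le 0$ then gives $g_{i,\theta_i(P)}^+(\hc_{i,t}\semic S_i)\le g_{i,0}^+(\hc_{i,t}\semic S_i)$. Next, the pointwise inequality $1\{p_i^+\le\tau_i(c)/2\}\le 1\{2\min\{p_i^+,p_i^-\}\le\tau_i(c)\} = 1\{p_i\le\tau_i(c)\}$, combined with $\{P:\theta_i(P)=0\} = H_i$, shows $g_{i,0}^+(\hc_{i,t}\semic S_i)\le g_i^*(\hc_{i,t}\semic S_i)$, the latter being the two-sided calibration function from \eqref{eq:def-hc}. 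Finally $g_i^*(\hc_{i,t}\semic S_i)\le\alpha/m$ because $\hc_{i,t}$ was chosen to satisfy \eqref{eq:def-hc}. Thus $\EE[V_i^+/(R\vee1)]\le\alpha/m$; the case $\theta_i(P)>0$ is identical with $p_i^-$, the ``non-increasing on $\theta\ge0$'' hypothesis, and $g_{i,0}^-\le g_i^*$; summing over $i$ finishes the proof.

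I expect the one subtle point to be exactly that final chain: one must connect the directional calibration target — which involves the one-sided $p$-value $p_i^+$, the halved threshold $\tau_i(c)/2$, and the composite one-sided null $\theta_i\le 0$ — back to the quantity $g_i^*$ that actually determines $\hc_i$, which involves the two-sided $p$-value $p_i$, the full threshold $\tau_i(c)$, and the point null $\theta_i=0$. Equal-tailedness ($p_i = 2\min\{p_i^+,p_i^-\}$) is what lets the halved one-sided threshold be absorbed into the full two-sided one, and the monotonicity hypothesis is precisely what allows replacing the true $\theta_i(P)\le0$ by the boundary value $\theta_i=0$; once those two observations are in place the rest is a faithful copy of the argument for Theorem~\ref{thm:fdr-control}, with the only extra care being the measurability of $\hc_{i,t}$ and the order of the limit, conditioning, and supremum operations.
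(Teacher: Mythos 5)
Your proof is correct and follows essentially the same route as the paper's: decompose $\FDR^\dir$ by the sign of $\theta_i$, handle $\theta_i=0$ via Theorem~\ref{thm:fdr-control} directly, and for $\theta_i<0$ (resp.\ $\theta_i>0$) rerun that argument with the halved threshold and then pass through the chain $g^+_{i,\theta_i}\le g^+_{i,0}\le g_i^*\le \alpha/m$. The only difference is that you spell out two small steps the paper leaves implicit — that $i\in\cR^+$ forces $p_i^+\le\tau_i(\hc_i)/2$ under equal-tailedness, and that $1\{p_i^+\le\tau_i(c)/2\}\le 1\{p_i\le\tau_i(c)\}$ is what gives $g_{i,0}^+\le g_i^*$ — which is a faithful filling-in rather than a departure.
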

\begin{proof}
Define $V_i^a = 1\{p_i^a \leq \tau_i(c)/2\}$, so that $V_i = 1\{p_i \leq \tau_i(c)\} = V_i^+ + V_i^-$.
Then
\begin{align*}
  \FDR^\dir &= \sum_{i \in \cH_0^\leq} \EE\left[\frac{V_i^+}{R \vee 1}\right]
              + \sum_{i \in \cH_0^\geq} \EE\left[\frac{V_i^-}{R \vee 1}\right]\\
            &= \sum_{i:\; \theta_i = 0} \EE\left[\frac{V_i}{R \vee 1}\right] 
              + \sum_{i:\; \theta_i < 0} \EE\left[\frac{V_i^+}{R \vee 1}\right]
              + \sum_{i:\; \theta_i > 0} \EE\left[\frac{V_i^-}{R \vee 1}\right].
\end{align*}
Each term in the first sum is no larger than $\alpha/m$, by the argument in Theorem~\ref{thm:fdr-control}. For a generic term in the second sum, we have almost surely
\[
g_{i,\theta_i}^+(c \semic S_i) \;\leq\; g_{i,0}^+(c \semic S_i) \;\leq\; g_i^*(c \semic S_i).
\]
Then we can likewise repeat the argument of Theorem~\ref{thm:fdr-control} to obtain
\begin{align*}
  \EE\left[\frac{V_i^+}{R \vee 1}\right] 
  &\leq \lim_{t \to \infty} \EE \left[ \EE\left[ \frac{1\left\{p_i^+ \leq \tau_i(\hc_{i,t})/2\right\}}{\hR_i} \mid S_i
    \right] \right]\\
  &\leq \lim_{t \to \infty} \EE \left[\, g_{i,\theta_i}^+(\hc_{i,t} \semic S_i) \,\right]\\
  &\leq \lim_{t \to \infty} \EE \left[\, g_i^*(\hc_{i,t} \semic S_i) \,\right],
\end{align*}
which is no larger than $\alpha/m$ by assumption. Likewise, each term in the third sum is no larger than $\alpha/m$, and there are $m$ total terms among the three sums.
\end{proof}

As an immediate consequence of Lemma~\ref{lem:dirFDR} and Proposition~\ref{prop:worst-case-boundary}, we see that we can draw directional conclusions for two-sided multiple testing in exponential family models:

\begin{corollary}
  Consider testing $H_i:\; \theta_i = 0$ for $i = 1,\ldots,m \leq d$ in an exponential family model of the form \eqref{eq:exp-fam} with $0 \in \Theta^{\mathrm{o}} \subseteq \RR^d$. Assume for all $i$ that $p_i = 2\min\{p_i^+,p_i^-\}$ where $p_i^\pm$ are given by the standard one-sided UMPU tests, and that we have, almost surely,
  \begin{enumerate}[(i)]
    \item $\tau_i(c; X) \leq \alpha$, and
    \item Under $\theta_i=0$, the conditional mean is between the conditional lower- and upper-$\alpha/2$ quantiles:
      \begin{align*}
        F_{i,0}(t \mid S_i) &< 1-\alpha/2, \quad \text{for all} \;\; t < \mu_{i,0}(S_i), \quad \text{ and }\\
        F_{i,0}(t \mid S_i) &> \alpha/2, \qquad \text{for all} \;\; t > \mu_{i,0}(S_i),
      \end{align*}
      where $\mu_{i,\theta_i}(S_i) = \EE_{\theta_i}\left[T_i \mid S_i\right]$.
  \end{enumerate}
  Then, our three-step method controls the directional FDR.
\end{corollary}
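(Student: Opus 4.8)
The plan is to derive this corollary as a direct combination of Lemma~\ref{lem:dirFDR} and Proposition~\ref{prop:worst-case-boundary}, applied separately to the two one-sided hypotheses $H_i^{\leq}:\;\theta_i \leq 0$ (rejected for large $T_i$) and $H_i^{\geq}:\;\theta_i \geq 0$ (rejected for small $T_i$). First I would note that since $p_i = 2\min\{p_i^+,p_i^-\}$ is equal-tailed and $\tau_i(c;X) \leq \alpha$ by assumption~(i), the relevant one-sided threshold in the directional analysis is $\tau_i(c)/2 \leq \alpha/2$; this identifies $\alpha' = \alpha/2$ as the effective cap needed to invoke Proposition~\ref{prop:worst-case-boundary}. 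I would also record that the conditional one-parameter exponential family structure for $T_i$ given $S_i$ is exactly the setting of Proposition~\ref{prop:worst-case-boundary}, and that the hypotheses $H_i^\leq$ and $H_i^\geq$ are handled symmetrically (the latter after the sign-flip reparameterization $\theta_i \mapsto -\theta_i$, which reverses stochastic ordering and maps ``small $T_i$'' to ``large $-T_i$'').

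The key steps, in order: (1) Verify the hypotheses of Lemma~\ref{lem:dirFDR}: we already assume \eqref{eq:two-sided-p} (validity of $p_i^\pm$ follows from the UMPU construction, since each is a conditional $p$-value given $S_i$ under $\theta_i = 0$ and super-uniform for $\theta_i$ on the correct side), and the assumptions of Theorem~\ref{thm:fdr-control} hold by the exponential-family setup with $S_i = T_{-i}$; what remains is the monotonicity of $g_{i,\theta}^+(c\semic S_i)$ in $\theta$ for $\theta \leq 0$ and of $g_{i,\theta}^-$ in $\theta$ for $\theta \geq 0$. (2) Apply Proposition~\ref{prop:worst-case-boundary} with $\alpha' = \alpha/2$ to the $p_i^+$ test: assumption~(i) of the proposition holds because the one-sided threshold is $\tau_i(c)/2 \leq \alpha/2 = \alpha'$, and assumption~(ii) is precisely the first displayed inequality in our corollary's assumption~(ii), namely $F_{i,0}(t\mid S_i) < 1-\alpha/2$ for $t < \mu_{i,0}(S_i)$. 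The proof of Proposition~\ref{prop:worst-case-boundary} actually shows the stronger statement that $g_{i,\theta_i}^+(c\semic s)$ is \emph{non-decreasing} in $\theta_i$ for all $\theta_i \leq 0$ (not merely that the max is at $0$), which is exactly the monotonicity Lemma~\ref{lem:dirFDR} requires. (3) Repeat symmetrically for $p_i^-$ under the reparameterization, using the second displayed inequality $F_{i,0}(t\mid S_i) > \alpha/2$ for $t > \mu_{i,0}(S_i)$, to conclude $g_{i,\theta}^-$ is non-increasing in $\theta$ for $\theta \geq 0$. (4) Invoke Lemma~\ref{lem:dirFDR} to conclude directional FDR control.

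I don't anticipate a serious obstacle here — the corollary is essentially a bookkeeping exercise combining two prior results. The one point that requires a little care is confirming that the proof of Proposition~\ref{prop:worst-case-boundary} genuinely delivers the \emph{monotonicity} of $g_{i,\theta}^+$ over the whole ray $\theta_i \leq 0$, rather than just the weaker claim that the supremum over $\theta_i \leq 0$ is attained at $0$; but reading the proof, the sign of $\partial g_{i,\theta_i}/\partial\theta_i$ is shown to be non-negative for every $\theta_i \leq 0$ (because whenever $T_i < \mu_{i,\theta_i}(S_i) \leq \mu_{i,0}(S_i)$ the indicator vanishes, forcing the integrand $h(t,s)(t-\mu_{i,\theta_i}(s))f_{\theta_i}$ to be non-negative pointwise), so the stronger statement does hold. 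A second minor point is that the one-sided null $H_i^\leq$ in the directional decomposition of Lemma~\ref{lem:dirFDR} is conditionally composite, so we need the least-favorability conclusion of Proposition~\ref{prop:worst-case-boundary} — i.e., $g_{i,0}^+ = g_i^*$ restricted to the $p_i^+$ side — which is precisely what ties the two lemmas together; this is immediate once the monotonicity is in hand.
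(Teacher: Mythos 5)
Your proposal is correct and follows essentially the same route as the paper, whose proof is the one-liner ``apply Proposition~\ref{prop:worst-case-boundary} to the modified threshold $\tilde\tau_i(c)=\tau_i(c)/2$ to obtain the hypotheses of Lemma~\ref{lem:dirFDR}''; your steps (including the sign-flip for the $p_i^-$ side and the observation that the proposition's proof actually yields monotonicity of $g_{i,\theta}^\pm$ on the relevant rays, which is the form of the hypothesis Lemma~\ref{lem:dirFDR} states) simply make that bookkeeping explicit.
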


\begin{proof}
  Applying Proposition~\ref{prop:worst-case-boundary} to the modified threshold $\tilde{\tau}_i(c) = \tau_i(c)/2$, we have the hypotheses of Lemma~\ref{lem:dirFDR}.  
\end{proof}



\section{Examples}\label{sec:examples}

In this section we give additional details about several parametric examples arising from the multivariate Gaussian family. Appendix~\ref{app:examples} discusses three further parametric examples --- edge testing in Gaussian graphical models, post-selection $z$- and $t$-testing, and multiple comparisons to control for a one-way layout with binary outcomes --- as well as a nonparametric example, multiple comparisons to control with in a one-way layout with generic responses and $p$-values arising from permutation tests.

Let $Z \sim N_d(\mu, \Sigma)$ with $\Sigma \succ 0$, an exponential family model with density
\begin{align}
  f_{\mu, \Sigma}(z) 
  &= \frac{1}{(2\pi)^{n/2}|\Sigma|} \,\exp\left\{ - \frac{1}{2} (z-\mu)'\Sigma^{-1} (z-\mu)\right\} \\
  \label{eq:gauss-exp-fam}
  &= \frac{1}{(2\pi)^{d/2}|\Sigma|} \,\exp\left\{ \mu'\Sigma^{-1}z - \frac{1}{2} z'\Sigma^{-1}z - \frac{1}{2} \mu'\Sigma^{-1}\mu\right\}.
\end{align}

Defining $A^i = (\Sigma^{-1})_{-i,-i}$, it will be useful to recall the formula 
\begin{equation}\label{eq:Ai}
(\Sigma^{-1})_{-i,i} = A^i \Sigma_{-i,i}\Sigma_{i,i}^{-1}.
\end{equation}

\subsection{Multivariate $z$-statistics}\label{sec:zstats}

First assume $\Sigma$ is known, with $\Sigma_{i,i} = 1$, so that each $Z_i$ is a $z$-statistic for testing $H_i:\; \mu_i = 0$ or $H_i:\;\mu_i \leq 0$, for $i=1,\ldots, m \leq d$, and $p_i$ is the resulting one- or two-sided $p$-value. We can rewrite \eqref{eq:gauss-exp-fam} as a full-rank $d$-parameter exponential family:
\begin{align}
f_{\mu, \Sigma}(z) 
  &= \frac{1}{(2\pi)^{d/2}|\Sigma|} \,\exp\left\{ \mu_i(\Sigma^{-1}z)_i + \mu_{-i}'(\Sigma^{-1}z)_{-i} - \frac{1}{2} z'\Sigma^{-1}z - \frac{1}{2} \mu'\Sigma^{-1}\mu\right\}\\
  \label{eq:mvgauss-exp-fam-i}
  &= \frac{1}{(2\pi)^{d/2}|\Sigma|} \,\exp\left\{ \mu_i(\Sigma^{-1}z)_i + (A^i\mu_{-i})'(z_{-i} - \Sigma_{-i,i}z_i) - \frac{1}{2} z'\Sigma^{-1}z - \frac{1}{2} \mu'\Sigma^{-1}\mu\right\}.
\end{align}

To test $H_i$, the general proposal in Section~\ref{sec:find-s} leads to the conditioning statistic $S_i = Z_{-i} - \Sigma_{-i,i}Z_i$. $S_i$ is independent of $Z_i$ and $p_i$ since $\text{Cov}(Z_i, S_i) = 0.$

To carry out the $\dBH_\gamma(\alpha)$ procedure, we must evaluate for each $i$ whether
\begin{equation}\label{eq:mvgauss-int}
\EE_0\left[ \frac{1\{q_i \leq c\}}{R_i^{\BH(\gamma \alpha)}} \mid S_i \right] \leq \frac{\alpha}{m},
\end{equation}
plugging in $c = q_i(Z)$, the observed BH $q$-value. Because $Z_{-i} = S_i + \Sigma_{-i,i}Z_i$, it is straightforward to evaluate the expectation in \eqref{eq:mvgauss-int} by integrating over the set $\{(z, S_i + \Sigma_{i,-i} z): \; z \in \RR\}$. For one-sided $p$-values $p_i = 1 - \Phi(Z_i)$, $p_{-i}$ is CPRD on $p_i$ if and only if $\Sigma_{ij} \geq 0$ for all $j \leq m$, mirroring the condition for marginal PRD in \citet{benjamini2001control}.

\subsection{Multivariate $t$-statistics}\label{sec:tstats}

A slightly harder case is to assume that $\Sigma = \sigma^2 \Psi$ where $\Psi \succ 0$ is known but $\sigma^2>0$ is unknown. Assume we are still testing $H_i:\; \mu_i = 0$ for $i = 1,\ldots, m \leq d$, with an additional independent vector $W \sim N_{n-d}(0, \sigma^2 I_{n-d})$ available for estimating $\sigma^2$. Then the usual $t$-statistic for testing $H_i$ is
\[
T_i = \frac{Z_i}{\sqrt{\Psi_{i,i} \hat\sigma^2}} \stackrel{H_i}{\sim} t_{n-d}, \qquad \text{ where } \quad(n-d)\hat\sigma^2 = \|W\|^2 \sim \sigma^2\chi_{n-d}^2.
\]

Extending the density in \eqref{eq:mvgauss-exp-fam-i} to include $W$, we obtain the $d+1$-parameter exponential family form
\begin{align*}
  f_{\mu, \Psi, \sigma^2}(z, w) = \frac{1}{(2\pi\sigma^2)^{n/2}|\Psi|} \,\exp\Bigg\{ 
  \frac{\mu_i}{\sigma^2}(\Psi^{-1}z)_i &+ (A^i\mu_{-i})'(z_{-i}-\Psi_{-i,i}\Psi_{i,i}^{-1}z_i)) \\
  &- \frac{1}{2\sigma^2} (w'w + z'\Psi^{-1}z) - \frac{\mu'\Psi^{-1}\mu}{2\sigma^2} \Bigg\},
\end{align*}

Letting $U_i = Z_{-i} - \Psi_{-i,i}\Psi_{i,i}^{-1}Z_i$, the general proposal in Section~\ref{sec:find-s} leads to the conditioning statistic $\left(U_i, \;\|W\|^2 + Z'\Psi^{-1}Z\right)$, or equivalently
\[
S_i(Z,W) = (U_i, V_i), \quad \text{ where } V_i = \|W\|^2 + \frac{Z_i^2}{\Psi_{i,i}}.
\]
since
\[
Z'\Psi^{-1}Z = \frac{Z_i^2}{\Psi_{i,i}} + U_i'(\Psi_{-i,-i} - \Psi_{-i,i}\Psi_{i,i}^{-1}\Psi_{i,-i})^{-1} U_i.
\]

$T_i$, $U_i$, and $V_i$ are mutually independent under $H_i$, and we can reconstruct the other $t$-statistics from their values:
\[
  V_i = \|W\|^2 + \frac{Z_i^2}{\Psi_{ii}}
      = \|W\|^2 \left(1 + \frac{T_i^2}{n-d}\right),
\]
and
\[
  T_j \;\;=\;\; \frac{Z_j}{\sqrt{\Psi_{j,j}\hat\sigma^2}} 
      \;\;=\;\; \frac{U_{ij}}{\sqrt{\Psi_{j,j}\hat\sigma^2}}  + \frac{\Psi_{j,i} Z_i}{\Psi_{i,i}\sqrt{\Psi_{j,j}\hat\sigma^2}} 
      \;\;=\;\; U_{ij}\sqrt{\frac{n-d+T_i^2}{\Psi_{j,j}V_i}} + \frac{\Psi_{j,i}}{\Psi_{i,i}} T_i
\]
Hence, just as in the previous section we can evaluate $g_i^*(q_i)$ by integrating over reconstructed $t$-statistics: 
\[
\left\{\left(t, \;\;\sqrt{\frac{n-d+t^2}{V_i}}\, \diag(\Psi_{-i,-i})^{-1/2}\,U_i \;+\; \frac{\Psi_{-i,i}}{\Psi_{i,i}}\, t\right): t \in \cR\right\},
\]
where $t \sim t_{n-d}$. For two-sided testing with uncorrelated test statistics (diagonal $\Psi$), we see that $T_j^2$ is non-decreasing in $t^2$ for all values of $t$, so the test statistics are CPRDS. For one-sided right-tailed testing with uncorrelated test statistics, in light of Remark~\ref{rem:cprd} we only need to consider values with $t > 0$ and the conditional probability of the event $R_{i}^{\BH(\alpha)}\le r$, which only depends on p-values that are below $\alpha < 0.5$ and thus positive $T_j$ values. Then, $T_j$ is non-decreasing in $t$ and positive provided that $\Psi_{j,i} = 0$ and $U_{ij} = Z_j > 0$. As a result, even though the one-sided test statistics are not CPRDS, the second claim of Theorem~\ref{thm:safe} holds. When the test statistics are correlated, even this relaxed version of the CPRDS condition does not hold for either one- or two-sided testing.

\subsection{Testing coefficients in linear models}\label{sec:linearmodels}

A third example is the Gaussian linear model in which we observe covariates $x_i \in \RR^d$, with response
\[
Y_i \sim x_i'\beta + \epsilon_i, \quad \text{ for } \epsilon_i \simiid N(0,\sigma^2), \quad i = 1,\ldots,n,
\]
with $\beta \in \RR^d$ and $\sigma^2 > 0$ unknown. Typically we wish to test $H_j:\; \beta_j = 0$ (or analogous one-sided hypotheses), for $j=1,\ldots, m \leq d$.

If the design matrix $\bX = (x_1, \ldots, x_n)' \in \RR^{d \times n}$ has full column rank then a sufficiency reduction boils the data set down to ordinary least squares coefficients and residual sum of squares:
\[
\hat\beta = (\bX'\bX)^{-1}\bX'Y \sim N_d(\beta,\; \sigma^2 (\bX'\bX)^{-1}), \quad \text{ and } \quad \|Y - \bX\hat\beta\|^2 \sim \sigma^2 \chi_{n-d}^2.
\]

Thus we can reduce the linear model case to multivariate $t$-testing problem we discuss above, with $Z = \hat\beta$, $\Psi = (\bX'\bX)^{-1}$ and $\|W\|^2 = \|Y - \bX\hat\beta\|^2 = RSS$, the residual sum of squares.

Defining $\bX_j$ as the $j$th column of $\bX$, and $\bX_{-j}$ as the remaining columns, let 
\[
\bb^j(\bX) \;=\; (\bX_{-j}'\bX_{-j})^{-1}\bX_{-j}'\bX_j \;=\; \Psi_{-j,j}\Psi_{j,j}^{-1},\quad \text{and} \quad \br^j(\bX) \;=\; \bX_j - \bX_{-j}\bb^j
\]
denote the coefficients and residuals from an OLS regression of $\bX_i$ on $\bX_{-i}$. Then we can write
\[ 
U_j  \;=\; \hat\beta_{-j} -  \hat\beta_j \bb^j, \quad \text{and} \quad V_j \;=\; RSS + \frac{\hat\beta_j^2}{((\bX'\bX)^{-1})_{j,j}} \;=\; RSS + \hat\beta_j^2 \,\|\br^j\|^2.
\]
Applying the logic of the previous section, the one- and two-sided $t$-statistics are CPRDS if $\bX'\bX$ is diagonal. In many cases, only a subset of regression coefficients are of interest; for example we would rarely test for an intercept term. Assuming the coefficients are defined so that only the first $m < d$ are of interest, we use $\hat{\beta}_{[m]} \sim N_{m}(\beta_{[m]},\; \sigma^2 [(\bX'\bX)^{-1}]_{[m], [m]})$ and $\|Y - \bX\hat\beta\|^2 \sim \sigma^2 \chi_{n-d}^2$, which has the same structure as testing the full set of coefficients.



\section{Computation}\label{sec:computation}

\subsection{An exact homotopy algorithm for $\dBH_\gamma(\alpha)$}\label{sec:homotopy}
In this section we discuss an exact homotopy algorithm for $\dBH_\gamma(\alpha)$. It can be easily generalized to $\dSU_{\gamma, \Delta}(\alpha)$ at the cost of more complex notation; see Appendix \ref{app:computation} for details. Recalling the definition of $\dBH_\gamma(\alpha)$ in Section \ref{sec:dbh-procedure} and that of $\tau^{\BH}(c; X)$ in \eqref{eq:q-value-threshold}, $g_i^*$ can be equivalently formulated as
\[g_i^*(c \semic S_i) = \sup_{P\in H_i}\EE\left[\frac{1\{p_{i}\le c R^{\BH(c)}(X) / m\}}{R^{\BH(\gamma \alpha)}(X)} \mid S_i\right].\]
For all examples discussed in Section \ref{sec:examples}, $H_i$ is conditionally simple and $p_i = \eta_i(T_i)$ for some univariate transformation $\eta_i$ of test statistic $T_i$, and there exists a bijective mapping $\xi_i$ from $(T_i, S_i)$ to $(T_1, T_2, \ldots, T_m)$. For the multivariate Gaussian case, $T_i = Z_i$, $\eta_i(t) = 1 - \Phi(t)$ for one-sided testing and $\eta_i(t) = 2(1 - \Phi(|t|))$ for two-sided testing, and $\xi_i(t; s) = (\xi_{i1}(t), \ldots, \xi_{im}(t))$ where $\xi_{ii}(t) = t$ and $\xi_{ij}(t) = s_j + \Sigma_{j, i}t$. Since $T_i$ is independent of $S_i$, $g_i^*(c \semic S_i)$ can be equivalently formulated as the following univariate integral:
\begin{equation}
  \label{eq:integral}
  g_i^*(c \semic S_i) = \int_{\RR} \frac{1\{\eta_i(t)\le c R_i^{(c)}(t) / m\}}{R_i^{(\gamma \alpha)}(t)}dP_i(t)
\end{equation}
where $P_i(t)$ is the marginal distribution of $T_i$ under $H_i$, and $R_i^{(c)}(t) = R^{\BH(c)}(\xi_i(t; S_i))$ denotes the number of rejections by $\BH(c)$ if the observed test statistics $\xi_i(T_i; S_i)$ are replaced by $\xi_i(t; S_i)$. As remarked at the end of Section \ref{sec:dbh-procedure}, if $q_i$ is the observed $q$-value then
\[\cR_{+} = \left\{i: g_i^*(q_i \semic S_i)\le \frac{\alpha}{m}\right\}.\]
Therefore, it is left to compute $R_i^{(c)}(t)$ with $c\in \{q_i, \gamma\alpha\}$.

The test statistics $(T_1, \ldots, T_m)$ as well as the induced $p$-values $(p_1, \ldots, p_m)$ from the new dataset $\xi_i(t; S_i)$ are both functions of $t$: $T_j(t) = \xi_{ij}(t)$ and $p_j(t) = \eta_j(T_j(t)) = \eta_j(\xi_{ij}(t))$. Since $R_i^{(c)}(t)$ is an integer-valued function, it must be piecewise constant. We call each point at which $R_i^{(c)}(t)$ changes the value a \emph{knot}. Then a point is a knot only if $p_j(t)$ crosses $c r / m$ for some $j, r\in [m]$. 

For all examples discussed in Section \ref{sec:examples}, the domain of $T_i$ is $\RR$, and $\eta_i, \xi_{ij}$ are differentiable. Then the set of potential knots of $R_i^{(c)}(t)$ is
\begin{equation}
  \label{eq:cK}
  \cK_i = \bigcup_{j, r\in [m]}\cK_{i, j, r}, \quad \mbox{where }\cK_{i, j, r} = \left\{t: \eta_j(\xi_{ij}(t)) = \frac{cr}{m}\right\},
\end{equation}
For the one-sided multivariate Gaussian testing problem, 
\[\cK_{i, i, r} = \left\{t: 1 - \Phi(t) = \frac{cr}{m}\right\} = \left\{\Phi^{-1}\lb 1 - \frac{cr}{m}\rb\right\},\]
and for each $j\not = i$,
\begin{align*}
  \cK_{i, j, r} &= \left\{t: 1 - \Phi(S_{ij} + \Sigma_{j,i}t) = \frac{cr}{m}\right\}\\
  & = \left\{
    \begin{array}{ll}
      \left\{(\Phi^{-1}\lb 1 - \frac{cr}{m}\rb - S_{ij}) / \Sigma_{j, i}\right\} & (\mbox{if }\Sigma_{j, i}\not = 0)\\
      \emptyset & (\mbox{if }\Sigma_{j, i} = 0 \mbox{ and }1 - \Phi(S_{ij}) \not = \frac{cr}{m})\\
      \RR & (\mbox{otherwise})
    \end{array}
    \right..
\end{align*}
Note that $S_{ij}$ has an absolutely continuous density, $\PP(\cK_{i, j, r} = \RR\mbox{ for any }j, r) = 0$. Thus, with probability $1$,
\[\cK_i = \bigcup_{r\in [m]}\bigcup_{j: \Sigma_{j, i}\not = 0}\cK_{i, j, r}\]
where $\cK_{i, j, r}$ is a singleton. Similarly, for the two-sided multivariate Gaussian testing problem, it is easy to verify that $\cK_i$ has the same form as above except that each $\cK_{i, j, r}$ has two elements. For multivariate t-statistics, $\cK_{i, j, r}$ has a more complicated structure though it can still be computed efficiently; see Appendix \ref{app:knots_t} for details.

Let $t_1 < t_2 < \ldots < t_N$ denote the elements of $\cK$ with $(j_1, r_1), (j_2, r_2), \ldots, (j_N, r_N)$ denoting the indices such that $t_k \in \cK_{i, j_k, r_k}$. Let 
\[B^{(c)}_\ell(t) = \bigg|\left\{\ell: p_\ell(t) \le \frac{c\ell}{m}\right\}\bigg| - \ell, \quad (\ell = 0, \ldots, m).\]
By definition of $\BH(c)$,
\begin{equation}
  \label{eq:Rit_Bt}
  R_i^{(c)}(t) = \max\left\{\ell: B^{(c)}_\ell(t) = 0\right\}.
\end{equation}
As $t$ moves from $t_{k-1}$ to $t_k$, $B_{\ell}^{(c)}(t)$ remain the same for $\ell \not = r_k$ while $B_{r_k}^{(c)}(t_k)$ is incremented by $+1$ or $-1$, depending on whether $p_{j_k}(t)$ is increasing or decreasing at $t_k$. For all examples discussed in Section \ref{sec:examples}, $p_j(t)$ is differentiable, and thus
\begin{equation}
  \label{eq:update_B}
  B_{r_k}^{(c)}(t_k) - B_{r_k}^{(c)}(t_{k-1}) = \sign\lb p'_{j_k}(t_k)\rb.
\end{equation}
For one-sided multivariate Gaussian testing problems, $\eta_{j_k}'(z) = -\phi(z) < 0$ for any $z\in \RR$ and $\xi_{ij_k}'(t) = \Sigma_{j_k, i}$. As a result, 
\[B_{r_k}^{(c)}(t_k) - B_{r_k}^{(c)}(t_{k-1}) = \sign(\Sigma_{j_k, i}).\]

This motivates a homotopy algorithm to calculate $B_\ell^{(c)}(t)$ sequentially based on \eqref{eq:update_B} and $R_i^{(c)}(t)$ based on \eqref{eq:Rit_Bt}. It is not hard to see he computational cost of the homotopy algorithm for a single hypothesis $H_i$ is $\cO(|\cK_i|)$. Therefore, the total cost for $\dBH_\gamma(\alpha)$ is of order
\begin{equation}
  \label{eq:sumijr}
  \sum_{i}|\cK_i| \le \sum_{i}\sum_{j}\sum_{r}|\cK_{i, j, r}|.
\end{equation}
Naively, it requires $\cO(m^3)$ computation since there are $m$ summands for $i$ and $j$, corresponding to the hypotheses, and $m$ summands for $r$, corresponding to the thresholds. Nonetheless, we can significantly reduce the size of each sum above by using a step-up method similar to $\BH$, but with sparse increments so that there are only $\log_2 m$ distinct threshold values:
\begin{equation}
  \label{eq:sparse_dSU}
  \Delta_{\alpha}(r) = \frac{\alpha \beta(r)}{m}, \quad \text{with } \beta(r) = 2^{\lfloor \log_2 r\rfloor}.
\end{equation}
We define the {\em sparse $\dBH_\gamma$} ($\sdBH_\gamma$) method as the $\dSU_{\gamma,\Delta}$ method with thresholds given in \eqref{eq:sparse_dSU}. In this case, even the naive method only requires $\cO(m^2 \log m)$ computation. 

With all the tricks that are detailed in Appendix \ref{app:computation}, the number of summands in all of the three sums can be further drastically reduced. For multivariate Gaussian testing problems, the number of $i$ has the same order of $R_\BH(2\alpha)$, the number of $j$ given $i$ has the same order as the range of non-negligible correlation, and the number of $r$ given $i$ and $j$ may be far lower than the total number of thresholds when $\Sigma_{j, i}$ is small. For the case with short-ranged dependence like in the autoregressive (AR) process, and a bounded number of signals, $R_{\BH}(2\alpha)$ is bounded with high probability and thus the computation cost of $\dBH_\gamma(\alpha)$ is at most $\cO(m)$. Thus, although the worst-case performance is poor, the cost is highly instance-specific and we find that the algorithm is reasonably fast in many cases.

\subsection{An approximate numerical integration for $\dBH^2_\gamma(\alpha)$}\label{sec:numerical_integration}

Similar to \eqref{eq:integral}, the conditional expectation $g_i^*(c\mid S_i)$ in $\dBH^2_\gamma(\alpha)$ can be formulated as
\begin{equation}
  \label{eq:integral2}
  g_i^*(c \semic S_i) = \int_{\RR} \frac{1\{\eta_i(t)\le c R_i^{(c)}(t) / m\}}{\td{R}_i(t)}dP_i(t),
\end{equation}
where $\td{R}_i(t) = R^{\dBH_\gamma(\alpha)}(\xi_i(t; S_i))$ denotes the number of rejections by $\dBH_\gamma(\alpha)$ if the test statistics shift from $X = (T_i, S_i)$ to $\xi_i(t; S_i)$. Unlike $\dBH_\gamma(\alpha)$, the denominator $\td{R}_i(t)$ has a much more complicated structure and we do not have an efficient homotopy algorithm to calculate the whole path. In principle, Monte-Carlo integration can guarantee almost sure convergence as the number of random samples grows to infinity because the integrand is bounded. However, it introduces extra randomness to the procedure which is undesirable. For this reason, we approximate \eqref{eq:integral2} via a heuristic numerical integration method that has no guarantee in theory but works well in practice. For illustration, we focus on the one-sided multivariate Gaussian testing problem. 

The first step is to reduce \eqref{eq:integral2} to a finite-range integral. Since $R_i^{(c)}(t)\le m$, the integrand is $0$ whenever $\eta_i(t) > c$, or equivalently $t < t_\lo \triangleq \Phi^{-1}(1 - c)$. On the other hand, let $t_\hi \triangleq \Phi^{-1}(1 - \alpha \epsilon / m)$ for some $\epsilon < 1$, then the integral \eqref{eq:integral2} from $t_\hi$ to $\infty$ is upper bounded by $\alpha \epsilon / m$ because the integrand is bounded by $1$ and $P_i$ is the standard Gaussian distribution. As a consequence,
$\int_{t_\lo}^{t_\hi} r_i(t)dP_i(t)\le g_i^*(c \semic S_i) \le \int_{t_\lo}^{t_\hi} r_i(t)dP_i(t) + \frac{\alpha\epsilon}{m}$, where $r_i(t)$ denotes the integrand. If we take $\epsilon$ to be small, e.g. $\epsilon = 0.01$, then the approximation error of $\int_{t_\lo}^{t_\hi}r_i(t)dP_i(t)$ is negligible.

To compute $\int_{t_\lo}^{t_\hi}r_i(t)dP_i(t)$, a naive method is to approximate $r_i(t)$ by a piecewise constant function evaluated on an equi-spaced grid of $[t_\lo, t_\hi]$. However, it may be inefficient since $r_i(t) = 0$ whenever $\eta_i(t) > c R_i^{(c)}(t) / m$. A simple improved version is to find the region of $t$ in which $\eta_i(t) \le c R_i^{(c)}(t) / m$ using the exact homotopy algorithm for $R_i^{(c)}(t)$, and then discretize the resulting region to approximate $\int_{t_\lo}^{t_\hi}r_i(t)dP_i(t)$.

Naively, the computational cost is the product of the number of hypotheses $m$, the grid size and the cost of the homotopy algorithm to calculate a single $\td{R}_i(t)$. However, as with the homotopy algorithm, we discussed a few tricks in Appendix \ref{app:computation} that can drastically reduce the number of hypotheses for which the integral $g_i^*(q_i \semic S_i)$ needs to be computed. For instance, for a safe procedure, Theorem \ref{thm:recursive} guarantees that the hypotheses rejected by $\dBH_\gamma(\alpha)$ are also rejected by $\dBH^2_\gamma(\alpha)$, for which the computation of $g_i^*(q_i \semic S_i)$ can be avoided. With all tricks discussed in Appendix \ref{app:computation}, it is even possible that no integral needs to be evaluated, in which case the computational cost of $\dBH^2_\gamma(\alpha)$ reduces to that of $\dBH_\gamma(\alpha)$. In a nutshell, the computational cost of the above algorithm is highly instance-specific. 

\subsection{Illustration of scalability}
With all the tricks discussed in Appendix \ref{app:computation}, both algorithms are efficient and scalable to problems of reasonably large size. We illustrate it using a simple simulation study on multivariate z-statistics with an autoregressive covariance structure with autocorrelation $0.8$. We consider the number of hypotheses $m \in \{10^2, 10^3, 10^4, 10^5, 10^6\}$. For each size $m$, we consider both one- and two-sided tests, with either $10$ or $30$ non-nulls in the front of the list with mean $\sqrt{2\log m}$. In each case, we implement $\dBH_{1}(\alpha)$ and $\sdBH_{1}(\alpha)$ via the homotopy algorithm and implemented $\dBH_{1}^{2}(\alpha)$ and $\sdBH_{1}^{2}(\alpha)$ via the approximate numerical integration with $20$ knots and $40$ knots for one- and two-sided tests, respectively. Figure \ref{fig:runtime} presents the median running time over $100$ simulations of each method. For $\dBH_{1}(\alpha)$ and $\sdBH_{1}(\alpha)$, the homotopy algorithm can handle $10^6$ hypotheses in a few minutes, while for $\dBH_{1}^{2}(\alpha)$ and $\sdBH_{1}^{2}(\alpha)$, the approximate numerical integration can handle $10^5$ hypotheses in $20$ minutes. The results corroborate the intuition in previous sections that (a) using a sparse threshold collection yields faster algorithms, and (b) both algorithms are more efficient for sparser problems.

\begin{figure}[H]
  \centering
  \includegraphics[width = 0.8\textwidth]{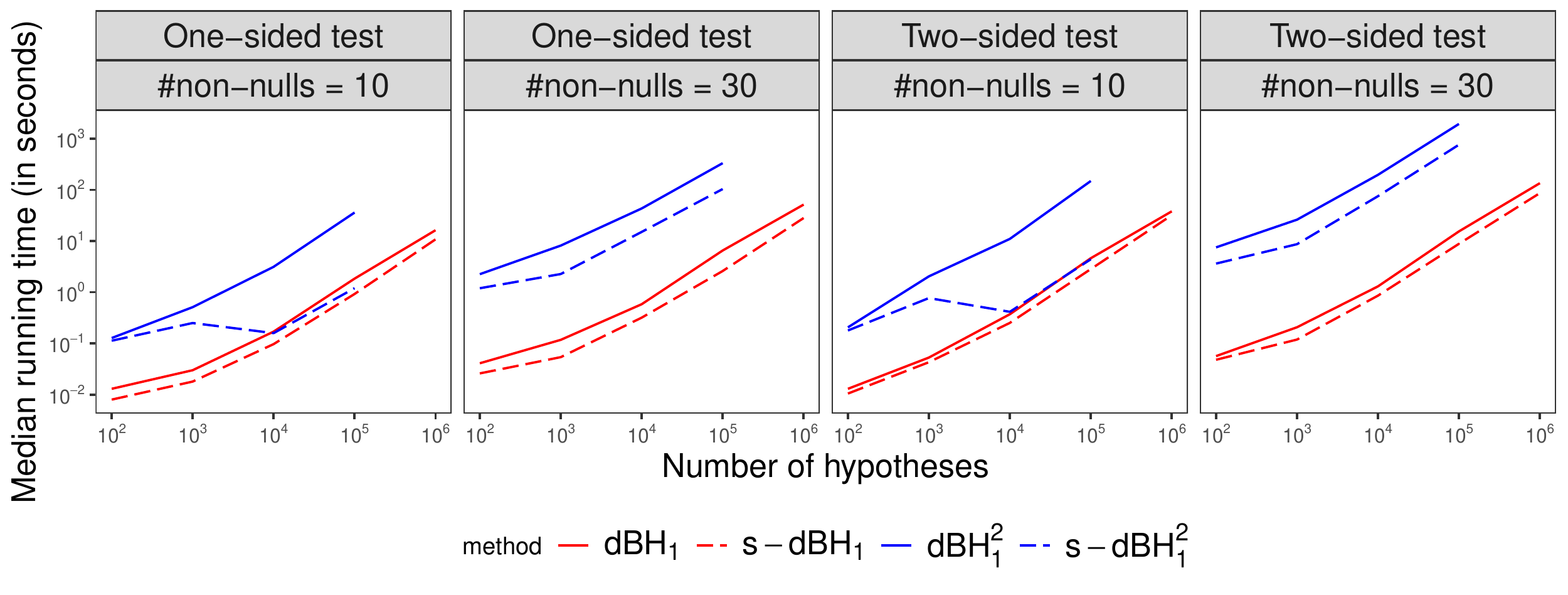}
  \caption{Median running time (in seconds) over $100$ independent simulations.}\label{fig:runtime}
\end{figure}



\section{Selected simulations}\label{sec:simulations}

In Appendix~\ref{app:full-simulations} we provide extensive simulations to compare the power of our approach with the power of several competing procedures including the $\BH(\alpha)$ and $\BH(\alpha/L_m)$ procedures as well as the fixed-X knockoffs \citep{barber15}, where appropriate. This section includes some highlights from our simulation results.

We start from a multivariate Gaussian case with $m = 1000$ and $z\sim N_m(\mu, \Sigma)$ where 
\[\mu_1 = \cdots = \mu_{10} = \mu^*, \quad \mu_{11} = \cdots = \mu_{1000} = 0,\]
We consider two types of covariance structures: (1) an autoregressive structure with $\Sigma_{ij} = (0.8)^{|i - j|}$; and (2) a block dependence structure with $\Sigma_{ii} = 1$ and $\Sigma_{ij} = 0.5 \cdot 1(\lceil i / 20\rceil = \lceil j / 20\rceil)$. We perform both one- and two-sided testing using $\BH(\alpha)$, $\dBH_\gamma(\alpha)$, $\dBH^2_\gamma(\alpha)$, $\BY(\alpha)$, $\dBY(\alpha)$ and $\dBY^2(\alpha)$. All these methods are implemented in the \texttt{R} package \texttt{dbh}. For one-sided testing, we choose $\gamma = 1$ because the $p$-values are CPRD, as shown in Section \ref{sec:zstats}. For two-sided testing, we choose $\gamma = 0.9$. We set the level $\alpha=0.05$ and tune the signal strength $\mu^*$ such that $\BH(\alpha)$ has approximately $30\%$ power in a separate Monte-Carlo simulation. We run each of the above 12 methods on $1000$ independent samples of z-values and estimate the \FDR ~and power, presented in Figure \ref{fig:mvgauss}.
\begin{figure}[H]
  \centering
  \begin{subfigure}[t]{0.48\textwidth}
    \includegraphics[width = 1\textwidth]{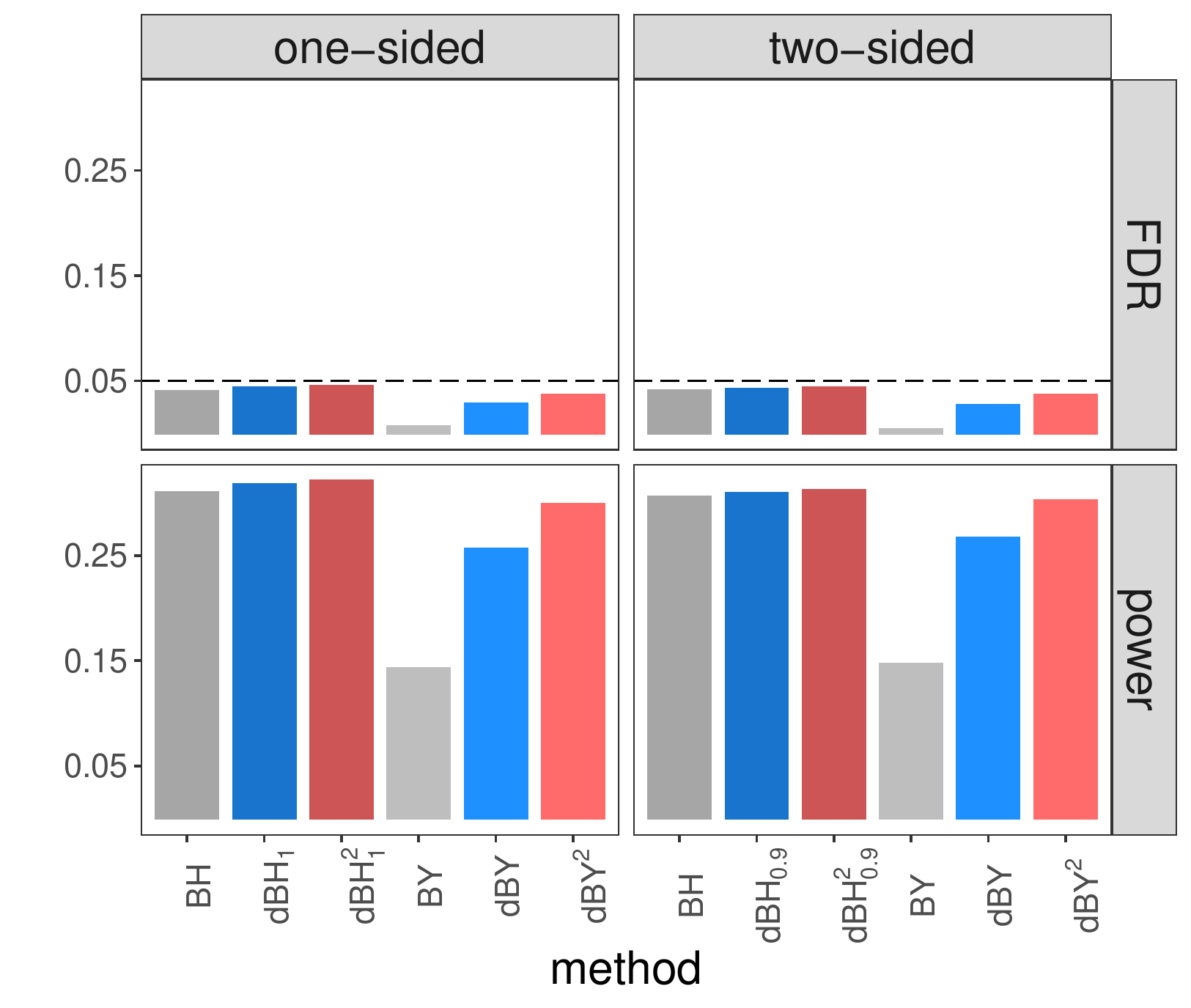}
    \caption{AR$(0.8)$ process}
  \end{subfigure}
  \begin{subfigure}[t]{0.48\textwidth}
    \includegraphics[width = 1\textwidth]{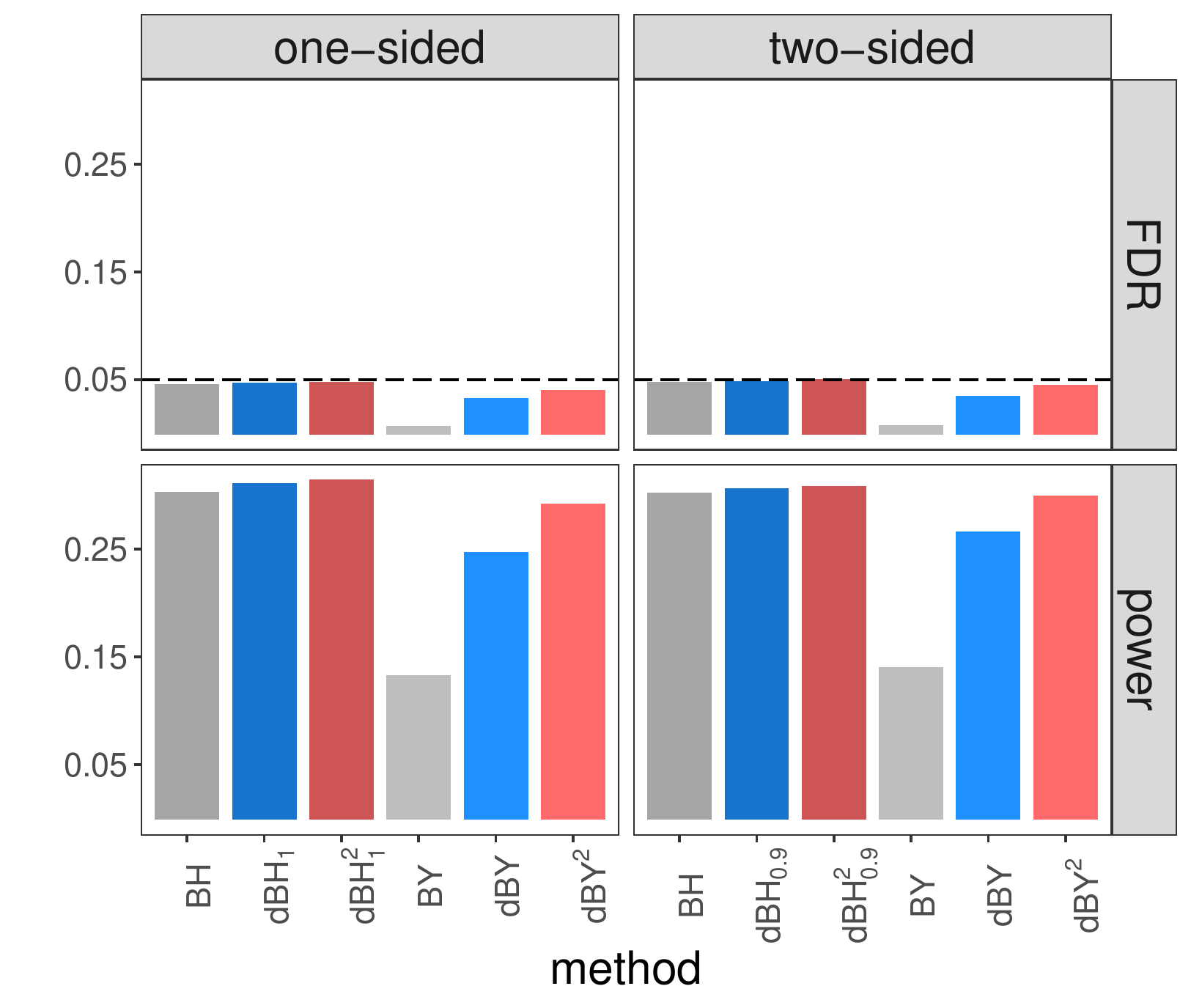}
    \caption{Block dependence}
  \end{subfigure}
  \caption{Estimated \FDR ~and power for multivariate z-statistics.}\label{fig:mvgauss}
\end{figure}

We observe that $\dBH$ and $\dBH^2$ slightly improve the power of $\BH$, while $\dBY$ and $\dBY^2$ significantly improve the power of $\BY$, for one- and two-sided testing with both covariance structures. For one-sided testing, the $p$-values are CPRD, so Theorem~\ref{thm:recursive} guarantees that $\BH, \dBH_1,$ and $\dBH^2_1$ are all safe procedures with nested rejection sets. For two-sided testing, $\BH$ does not provably control the FDR, unlike the other five methods.  

For two-sided testing, $\BH(\alpha)$ does not provably control FDR, but the other five methods do. Although the $\dBY_{0.9}^2(\alpha)$ procedure is not safe, the randomized pruning is never invoked over $1000$ realizations of each simulation scenario. In all four scenarios, the power of $\dBY^2$ is comparable to that of $\BH$.

\begin{figure}[H]
  \centering
  \begin{subfigure}[t]{0.48\textwidth}
    \includegraphics[width = 1\textwidth]{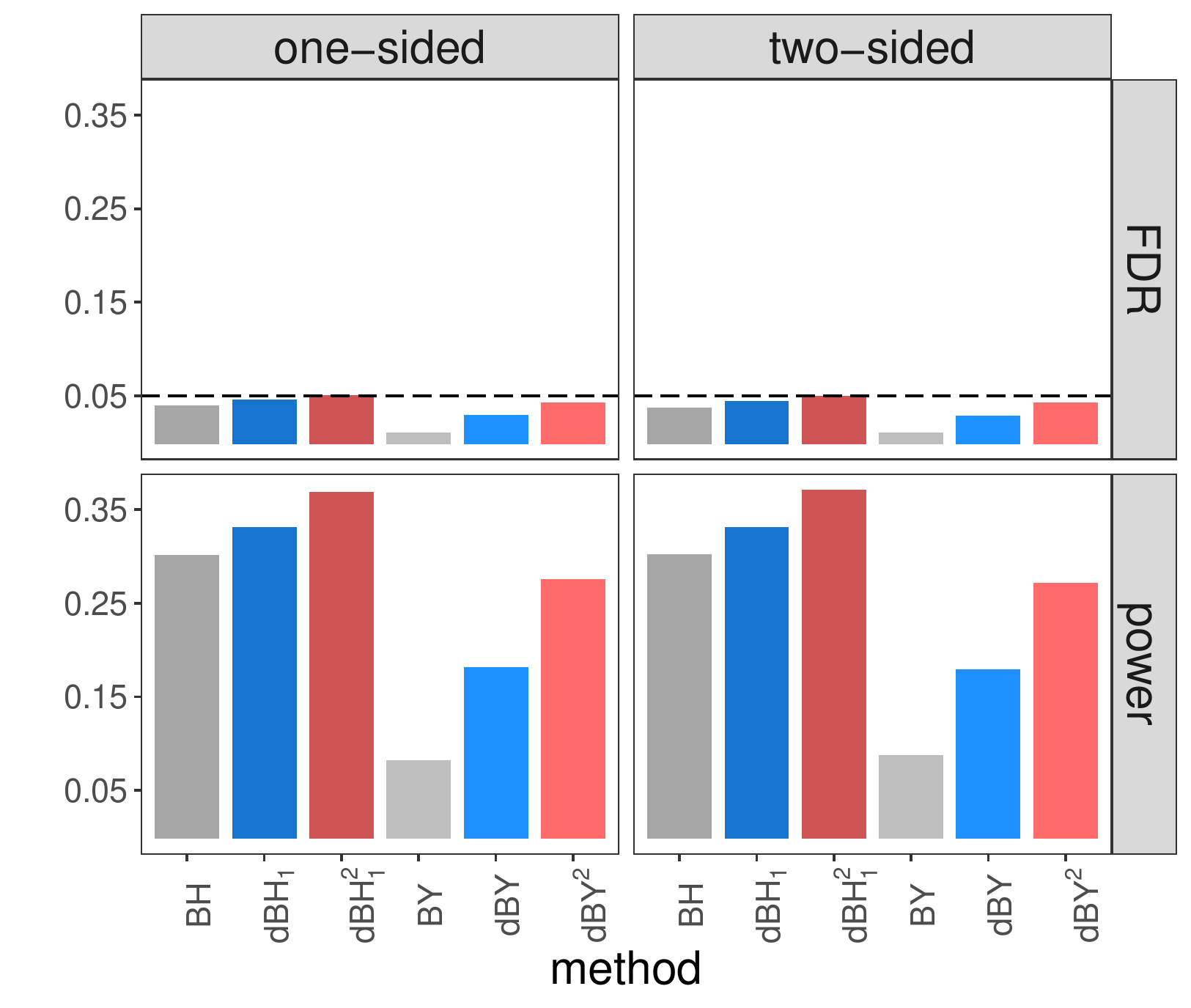}
    \caption{$m = 100, \,\, n - d = 5$}\label{fig:mvt:small}
  \end{subfigure}
  \begin{subfigure}[t]{0.48\textwidth}
    \includegraphics[width = 1\textwidth]{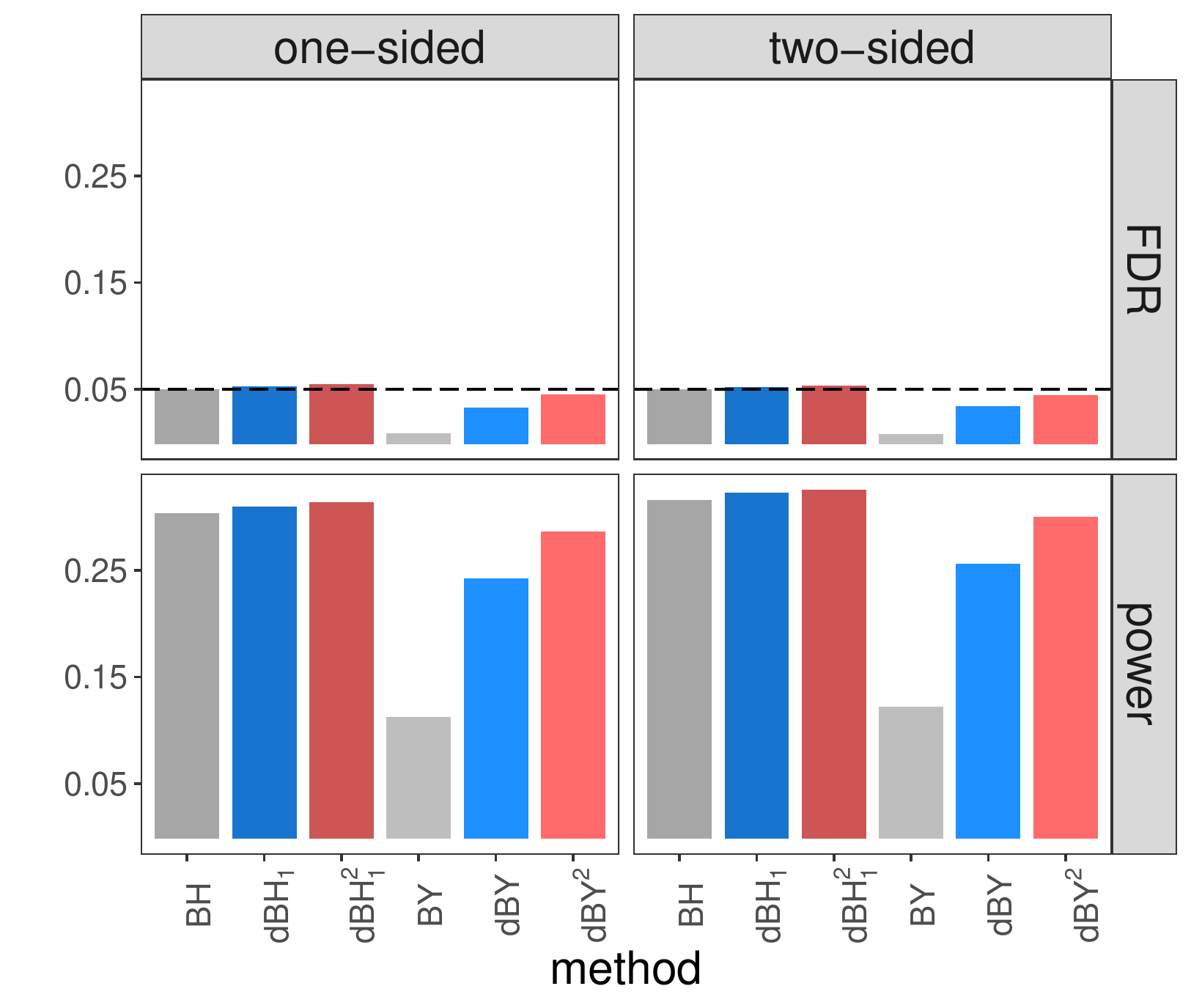}
    \caption{$m = 1000, \,\, n - d = 50$}\label{fig:mvt:large}
  \end{subfigure}
  \caption{Estimated \FDR ~and power for multivariate t-statistics.}\label{fig:mvt}
\end{figure}

Figure~\ref{fig:mvt} shows results for uncorrelated multivariate t-statistics with either $m = 100, n - m = 5$ (Figure~\ref{fig:mvt:small}) or $m = 1000, n - m = 50$ (Figure~\ref{fig:mvt:large}). In the first case, the marginal null distribution of each test statistic is heavy-tailed, and very large values tend to be observed together due to the common variance estimate. In both cases we set the first $10$ hypotheses as alternatives with an equal signal strength, tuned so that $\BH(0.05)$ has approximately $30\%$ power. We evaluate the same six methods as in the multivariate Gaussian case, except that $\gamma$ is taken as $1$ for both one- and two-sided testing because both are CPRD. The results are qualitatively similar to the multivariate Gaussian results. Notably, the power gains of $\dBH_1$ and $\dBH^2_1$ over $\BH$ are more pronounced for heavier-tailed t-statistics.

Finally, we consider two linear modeling scenarios, for which we evaluate the fixed-X knockoff method as an extra competitor. To apply the knockoff method, we always consider two-sided testing problems with $n > 2d$. In this section we simulate the fixed design matrix $\bX$ as one realization of a random matrix with i.i.d. Gaussian entries with $n = 3000$ and $d = 1000$, and simulate $1000$ independent copies of homoscedastic Gaussian error vectors with $\sigma^{2} = 1$, each generating an outcome vector $Y = \bX\beta + \epsilon$ with $\beta_{1} = \cdots = \beta_{10} = \beta^*$ and $\beta_{11} = \cdots = \beta_{1000} = 0$. Again, $\beta^*$ is tuned so that $\BH(0.05)$ has approximately $30\%$ power. For all $\dBH_{\gamma}$ procedures, we choose $\gamma = 0.9$ and find that the randomized pruning step is never invoked for any method in the $1000$ simulations. For the knockoff method, we generate the knockoff matrix via the default semidefinite programming procedure and choose the knockoff statistics as the maximum penalty level at which the variable is selected \citep{barber15}. We use the knockoff+ method in all cases to ensure FDR control at the advertised level. The estimated \FDR ~and power with $\alpha = 0.05$ and $\alpha = 0.2$ are shown in Figure \ref{fig:lm}. The comparison between the $\dBH$ ($\dBY$) procedures and $\BH$ ($\BY$) procedure is qualitatively similar to the previous examples. The fixed-X knockoff has much higher power than all other methods when $\alpha = 0.2$, but has near-zero power when $\alpha = 0.05$. The former may result from the knockoff method's use of the lasso for variable selection \citep{tibshirani1996regression}, while the latter is due to the small-sample issue discussed in Section~\ref{sec:knockoffs}. Appendix~\ref{app:examples} gives several more linear modeling examples showing the same qualitative pattern for random design matrices with different correlation structures.

\begin{figure}[H]
  \centering
  \begin{subfigure}[t]{0.48\textwidth}
    \includegraphics[width = 1\textwidth]{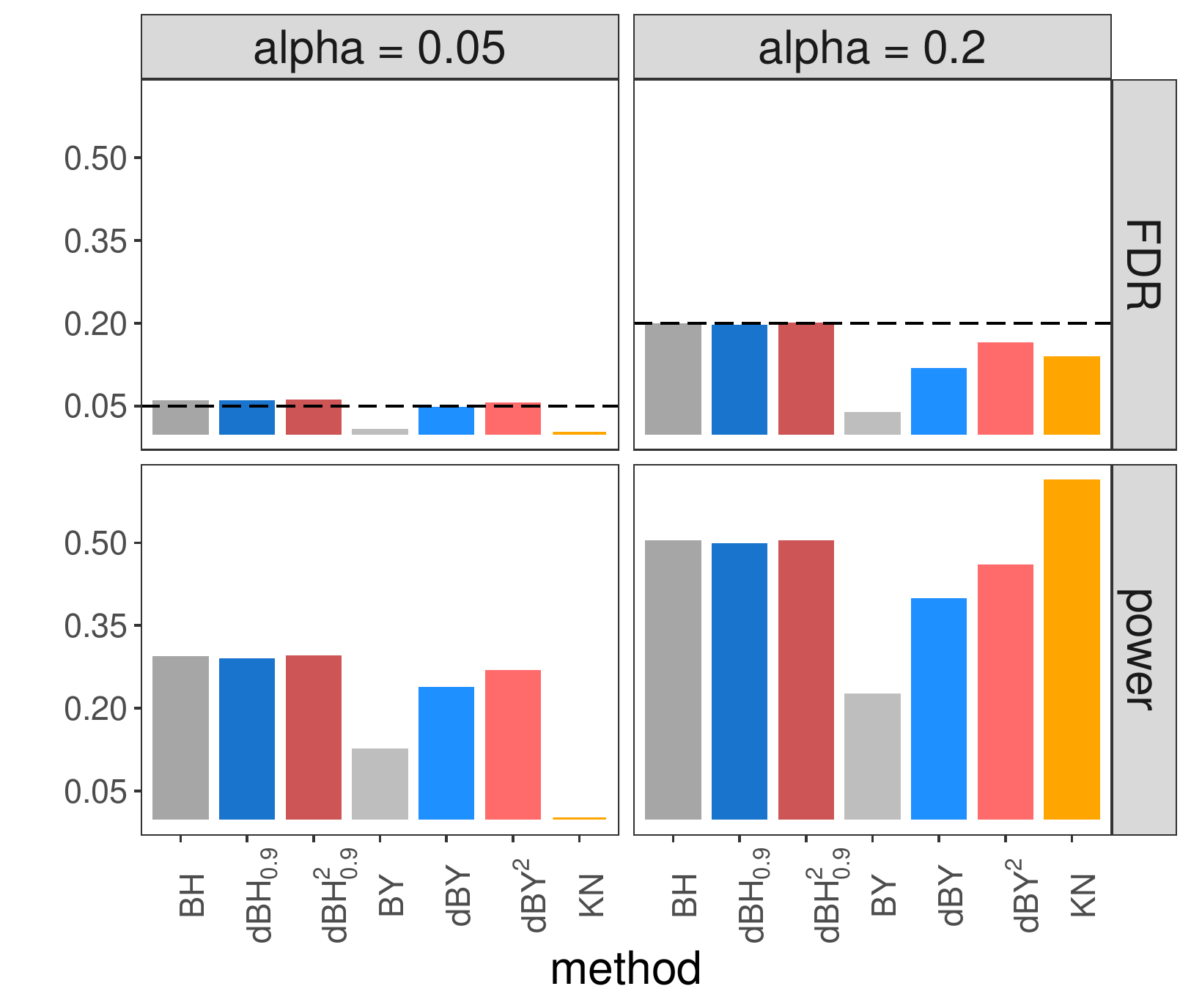}
    \caption{$X$ as a realization of a random matrix}\label{fig:lm}
  \end{subfigure}
  \begin{subfigure}[t]{0.48\textwidth}
    \includegraphics[width = 1\textwidth]{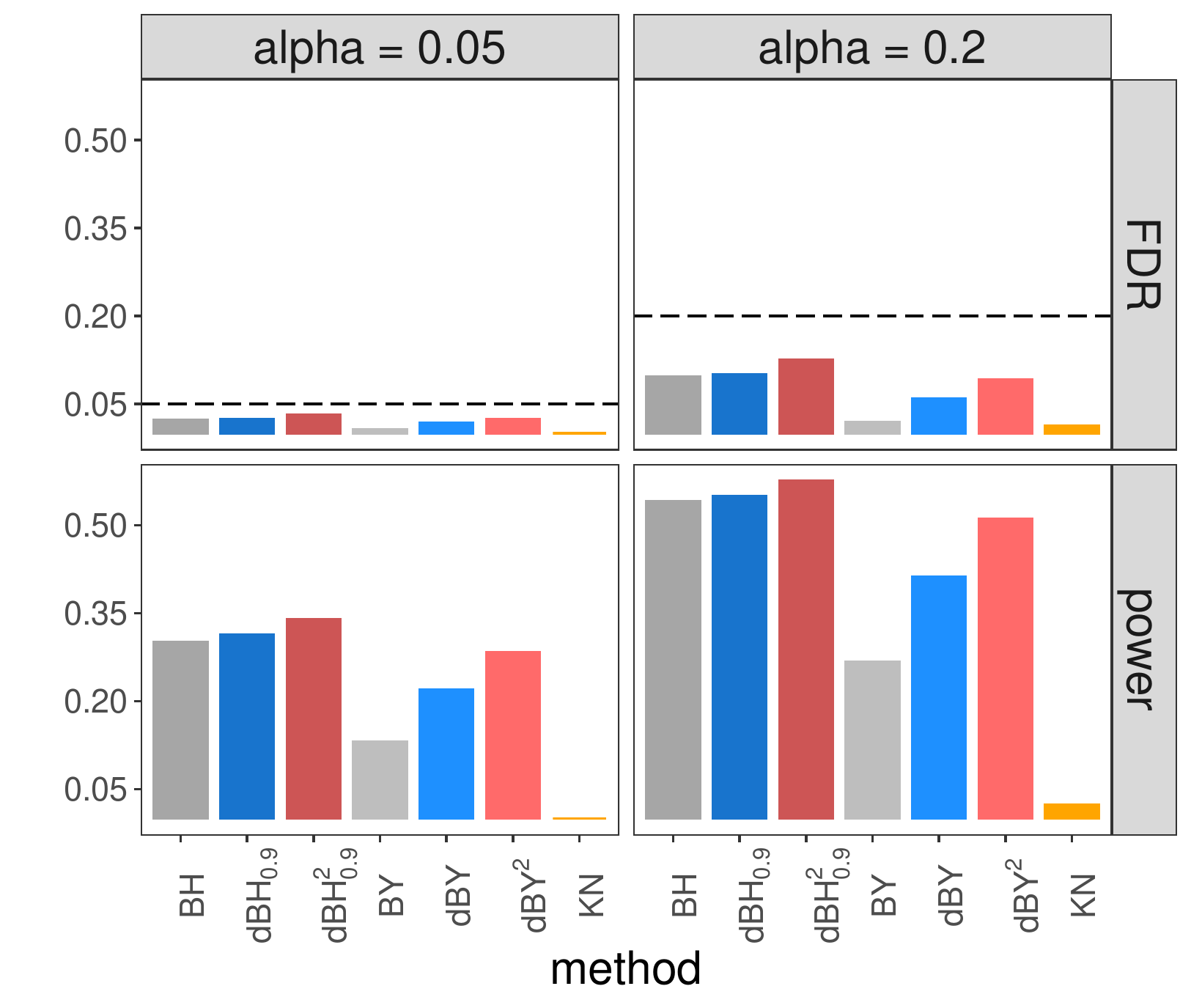}
    \caption{Multiple comparison to control}\label{fig:mcc}
  \end{subfigure}
  \caption{Estimated \FDR ~and power for two types of linear models.}\label{fig:linearmodel}
\end{figure}

While the knockoffs method often outperform the others when $\alpha$ is large enough, the reverse can also occur, as we illustrate in a second linear modeling example: the problem of multiple comparisons to control (MCC) in a one-way layout. For each of $100$ treatment groups, we sample $30$ independent replicates from $N(\mu_i, \sigma^2)$, and a control group with $30$ independent replicates sampled from $N(\mu_0, \sigma^2)$. We then test $H_i: \mu_i = \mu_0$ based on the two-sample t-statistics. In this case, the test statistics are positively equi-correlated. By coding dummy variables this MCC problem is equivalent to a homoscedastic Gaussian linear model with a design matrix $\bX \in \{0,1\}^{3030 \times 101}$ and a coefficient vector $(\mu_1 - \mu_0, \ldots, \mu_{100} - \mu_0, \mu_0)$.  To ameliorate the small-sample issue of the knockoff method, we set the first $30$ hypotheses to be non-nulls with an equal $\mu_i$ that is tuned so that $\BH(\alpha)$ has approximately $30\%$ power. All of the coefficients are inferential targets except $\mu_{0}$, which is effectively an intercept term. For fixed-X knockoffs, we follow \cite{barber15} to generate a knockoff matrix that is orthogonal to the column corresponding to $\mu_{0}$, which is a vector with all entries $1$ in this case. The results are presented in Figure \ref{fig:mcc}. In this case, the fixed-X knockoff is almost powerless for either $\alpha = 0.05$ or $\alpha = 0.2$. In contrast to the previous case, the lack of power is caused by the huge amount of noise generated by knockoffs to handle the equi-correlated covariance structure. The other six methods are less sensitive to this correlation. 



\section{HIV drug resistance data}\label{sec:hiv}

This section compares our method's performance against the BH, BY, and knockoff procedures on the Human Immunodeficiency Virus (HIV) drug resistance data of \citet{rhee2006genotypic}, reproducing and extending the analysis of \citet{barber15}. In each of three separate data sets, we test for associations between mutations present in different HIV samples and resistance to each of 16 different drugs. The data come from three experiments, each for a different drug category: protease inhibitors (PIs), nucleoside reverse transcriptase inhibitors (NRTIs), and  nonnucleoside reverse transcriptase inhibitors (NNRTIs). 

Following \citet{barber15}, we encode mutations as binary with $x_{ij}=1$ if the $j$th mutation is present in the $i$th sample, discard mutations that occur fewer than three times, and remove duplicated columns in the resulting design matrix $\bX \in \{0,1\}^{n \times d}$. For each drug there is a different response vector $Y \in \RR^n$ representing a measure of drug resistance. As in \citet{barber15} we do not include an intercept in the model. We also evaluate replicability in the same way as \citet{barber15}, by comparing the rejection set to the set of mutations identified in the treatment-selected mutation (TSM) panel of \citet{rhee2005hiv}. We refer to Section 4 of \cite{barber15} for further details.

Figure~\ref{fig:hiv-twenty} shows results comparing results for the fixed-X knockoffs, $\BH$, $\dBH_{0.9}^2$, and $\dBY^2$, at significance level $\alpha = 0.2$, as used in \citet{barber15}. For the knockoff method, we generate equi-correlated knockoff copies and use as the knockoff statistic the maximum penalty level at which the variable is selected, following the vingette of the \texttt{knockoff} package \citep{knockoff}. 
The latter three have similar power for all seven responses, with the behavior of $\dBH_{0.9}^2$ nearly identical to $\BH$ and $\dBY^2$ very slightly less powerful. By contrast, the knockoffs method makes somewhat fewer rejections overall than the other methods, but the differences are modest for most drugs. Knockoffs appears to have a higher replicability rate for the TSM panel, possibly because the method is achieving a better tradeoff between Type I and Type II error by using the lasso algorithm to select variables. Alternatively, it may be that the other methods are better able to detect weak signals which are less likely to be replicated in an independent experiment. The $\dBH_{0.9}^2$ method does not require randomization for any of the 16 drugs.

Figure~\ref{fig:hiv-five} shows the same results at the more conservative significance level $\alpha = 0.05$, where knockoffs suffers from the small-sample issues discussed in Section \ref{sec:knockoffs}. The relationships between the other three methods are qualitatively the same. Again, $\dBH_{0.9}^2$ does not require randomization for any of the drugs.

\begin{figure} 
  \centering
  \includegraphics[width=0.9\columnwidth]{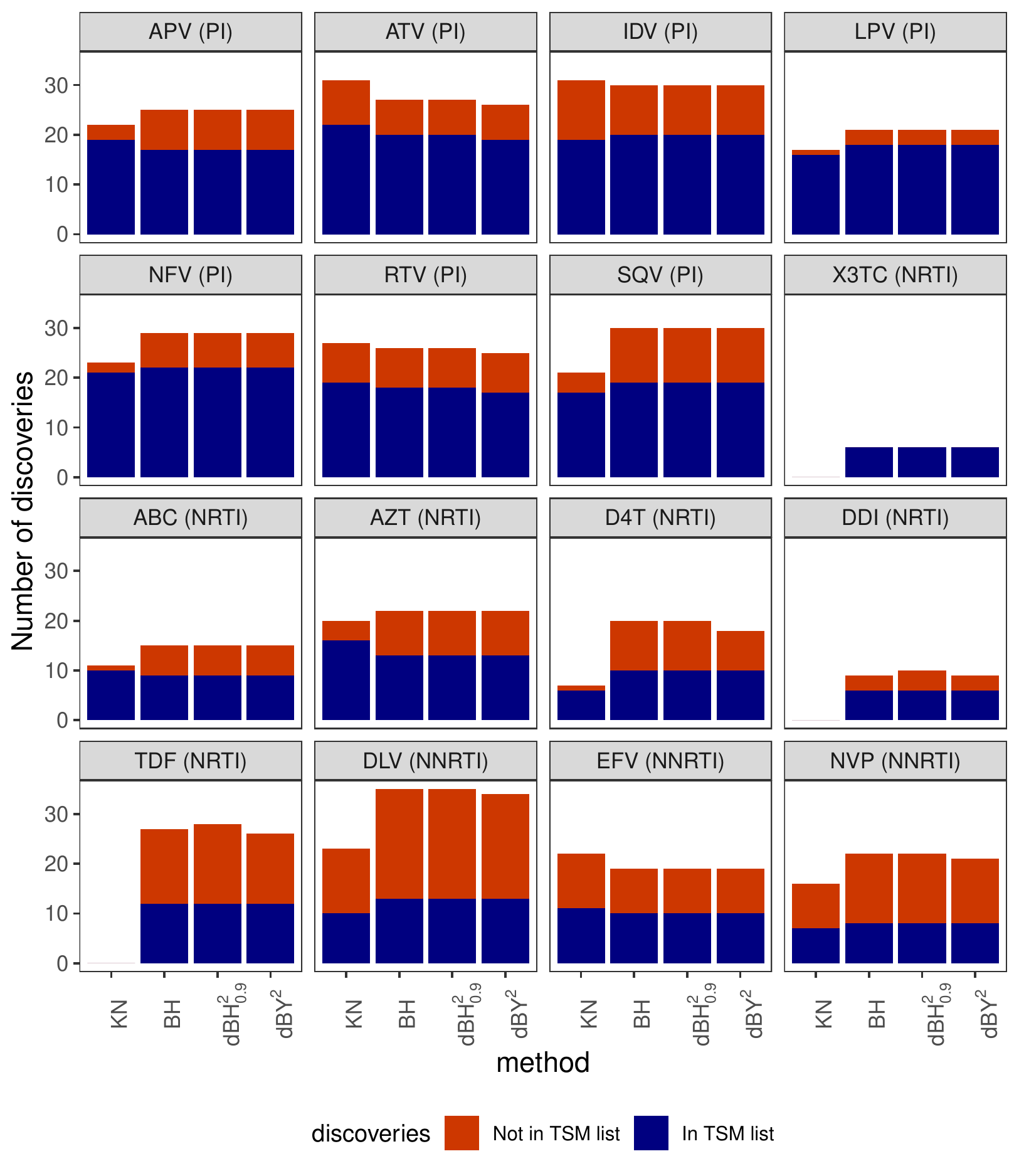}

  \caption{Results on the HIV drug resistance data with $\alpha = 0.2$. The blue segments represent the number of discoveries that were replicated in the TSM panel, while the orange segments represent the number that were not. Results are shown for the fixed-X knockoffs, $\BH$, $\dBH_{0.9}^2$, and $\dBY$.}
 \label{fig:hiv-twenty}
\end{figure}

\begin{figure}
  \centering
  \includegraphics[width=0.9\columnwidth]{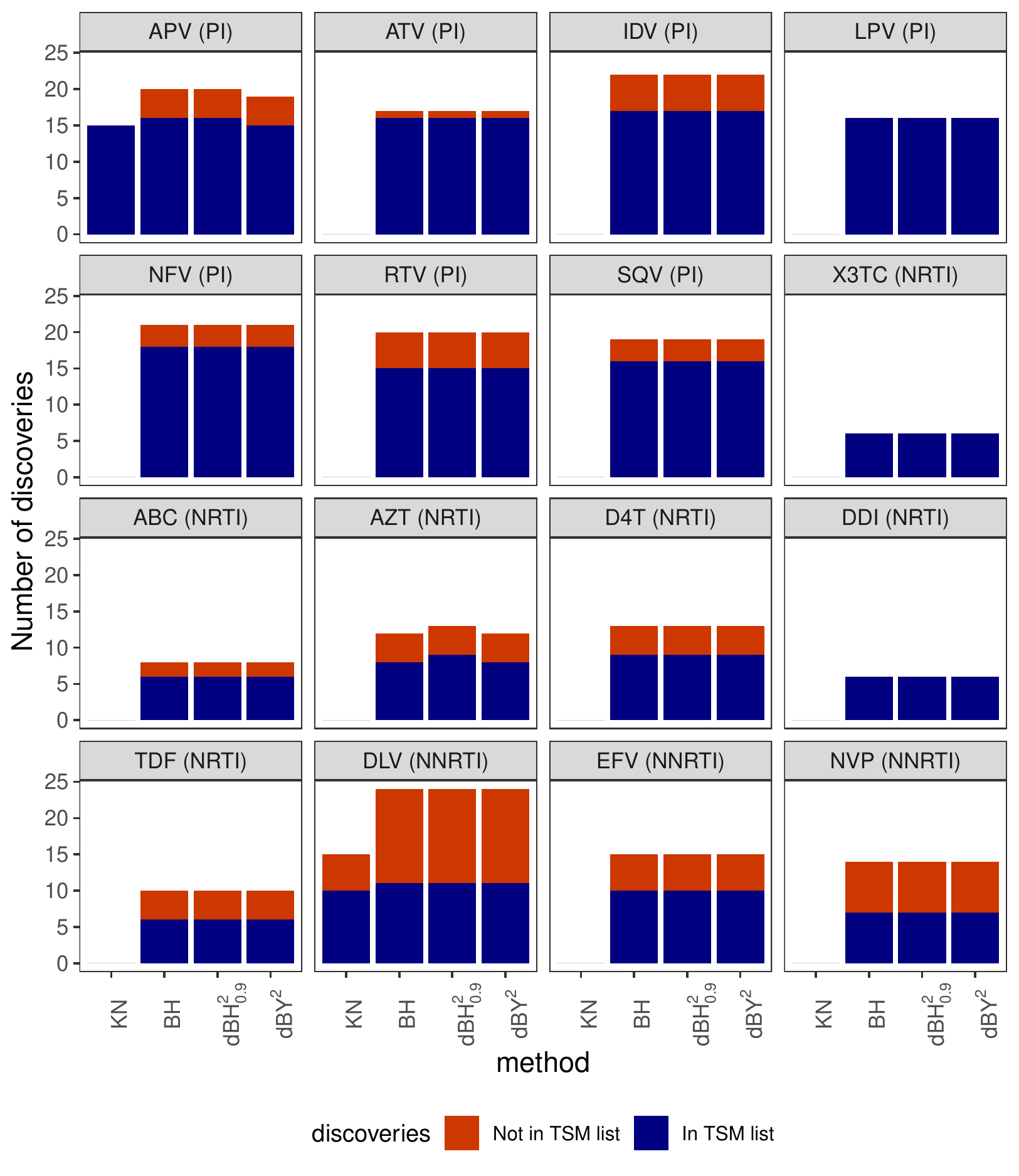}
  \caption{Results on the HIV drug resistance data with $\alpha = 0.05$. The blue segments represent the number of discoveries that were replicated in the TSM panel, while the orange segments represent the number that were not. Results are shown for the fixed-X knockoffs, $\BH$, $\dBH_{0.9}^2$, and $\dBY$.}
  \label{fig:hiv-five}
\end{figure}




\section{Discussion}\label{sec:discussion}

We have presented a new approach for controlling FDR in dependent settings, and proposed new dependence-adjusted step-up methods including the $\dBH_\gamma$, $\dBY$, and $\dSU_\Delta$ procedures. The $\dBH_1$ procedure uniformly improves on the $\BH$ procedure under (conditional) positive dependence, while the $\dBY$ procedure uniformly improves on the $\BY$ procedure. Likewise, our $\dSU$ method can uniformly improve on any shape function method in the style of \citet{blanchard2008two}. 

Practically speaking, our methods offer an alternative to the BH and BY procedures in applications where theoretical FDR control guarantees are attractive. In particular, $\dBY^2$ improves dramatically on the BY procedure and is often competitive even with BH. The $\dBH_{0.9}^2$ procedure offers a balanced approach that is commonly more powerful than BH, and requires randomization only very rarely.

More generally, conditional calibration as proposed here is a general-purpose technical device that may prove useful for supplying FDR control proofs in other contexts like grouped, hierarchical, multilayer, or partial conjunction hypothesis testing \citep[e.g.][]{benjamini14,barber16,lynch16,benjamini2008screening}.

Numerous challenges remain for future work, including investigation into models that constrain the dependence graph, as in Example~\ref{ex:nonpar-graph}. In addition, further development is needed to produce algorithms and software for some of the models we did not implement in this work. Finally, the next sections suggest directions of further methodological innovation.

\subsection{Comparison with knockoffs}\label{sec:knockoffs}

Both the $\dBH$ method and the knockoff filter offer finite-sample FDR control for linear models, but with very different statistical tools and methods of proof. In our simulation experiments neither method is a clear overall winner, but some qualitative trends emerge. First, as expected, the dBH procedure performs similarly to BH in power comparisons, so any comparison between knockoffs and dBH is also a comparison between knockoffs and BH. Second, the dBH and BH procedures consistently enjoy better power in experiments where the total number of rejections is relatively small, either because there are very few non-null coefficients to find, the signals are weak, or the FDR significance level is small. This pattern has a clear theoretical explanation: to make rejections, the knockoff+ method requires $(1 + A_t)/R_t \leq \alpha$, where $A_t$ is the count of $W_j$-values smaller than $-t$ and $R_t$ is the count above $t$. As a result the method cannot make any rejections unless it makes at least $1/\alpha$, and it can be unstable if the number of rejections is on the order of $1/\alpha$. 

Apart from small-sample issues, it remains unclear in which contexts we should expect one method to outperform the other, and this is an interesting question for future research. Because the knockoffs framework is very general and allows the analyst to bring a great deal of prior knowledge to bear, we expect it can enjoy substantial advantages over BH and dBH in problems where the $j$th $t$-statistic carries only a small fraction of the total evidence against $H_j:\; \beta_j = 0$. In particular, $U_j = \hat\beta_{-j} - \hat\beta_j \bb^j$ is independent of the $t$-statistic and $p$-value for $H_j$, but it may hold a wealth of information about $\beta_j$, especially if we reasonably expect that $\beta$ is approximately sparse. In our view, it is an important open problem to develop methods that can likewise exploit this kind of information, for example by using adaptive weights as proposed below, while avoiding the randomization and binarization inherent to knockoff methods.

Our method is also extensible to many settings where no knockoff method has been proposed, for example edge testing in the Gaussian graphical model and the discrete and nonparametric examples discussed in Appendix~\ref{app:examples}. Because our method operates directly on $p$-values it is easily extensible to testing composite hypotheses about parameters, for example to test $H_j:\; |\beta_j| \leq \delta$ for a fixed $\delta > 0$. Finally, in regression problems, there is no requirement that $n \geq 2d$; we require only that $n \geq d + 1$, the same dimension required to test individual regression coefficients. Conversely, we have not extended our framework to conditional randomization tests as proposed in \citet{candes2018panning}, and this may be very challenging in general. Computationally, our method is more scalable for some problems because it avoids solving a semidefinite program or eigendecomposing a large matrix, but the recursively refined variants of our method pose substantial computational challenges of their own. Nevertheless, computational efficiency depends on problem specifics.

\subsection{Extension: adapting to the non-null proportion}

One arguable weakness of the present work is its conservative control of the FDR at level $\alpha m_0/m$. In some problems it would be very useful to correct for this conservatism; for example, in post-screening or other post-selection inference, we may expect $\pi_0 = m_0/m$ to be substantially smaller than 1. For independent $p$-values, various plug-in methods apply a standard method such as BH at an adjusted level $\alpha \hat\pi_0^{-1}$, for some estimator $\hat\pi_0(X)$. \citep[e.g.][]{genovese2002operating, storey02, storey04, benjamini2006adaptive, blanchard2009adaptive}.

Inspired by this approach, we can modify our calibration procedure as discussed in Remark~\ref{rem:rhs} to use the calibration constraint $\kappa_i(S_i) = \alpha \hat\pi_{0,i}(S_i)^{-1} / m$ in place of $\alpha/m$ in \eqref{eq:def-hc}. The resulting method would control FDR at level $\alpha$ provided that
\[
\sum_{i\in \cH_0} \EE \hat\pi_{0,i}(S_i)^{-1} \leq \pi_{0,i}^{-1} m_0 = m,
\]
for which a sufficient condition is that $\EE_{H_i} \hat\pi_0(S_i)^{-1} \leq \pi_0^{-1}$ for each $i = 1,\ldots, m$. Given any pre-existing estimator $\tilde\pi_0(X)$ for which $\EE \tilde\pi_0(X)^{-1} \leq \pi_0^{-1}$, we can construct such an estimator by Rao-Blackwellization:
\[
\hat\pi_{0,i}(S_i) = \left( \EE_{H_i} \left[\,\tilde\pi_0(X)^{-1} \mid S_i\,\right] \right)^{-1},
\]
which amounts to a simple calculation if $H_i$ is conditionally simple.

\subsection{Extension: adaptive weights}

Another promising extension, the full exploration of which is outside the scope of this work, is to use adaptive weights that exploit side information about the hypotheses. There are a variety of setting where $p$-value weights $w_i$ can substantially improve the power of multiple-testing methods \citep{benjamini1997multiple,genovese2006false, dobriban2015optimal}. For fixed weights $w_1,\ldots,w_m$ that sum to one, it is straightforward to generalize our framework by replacing $\alpha/m$ by $\alpha w_i$ in the right-hand side of \eqref{eq:def-hc}.

More interestingly, however, we might wish to use learn the weights from the data, adaptively allowing for some hypotheses to contribute more to the FDR than others. While there is a robust literature on adaptive $p$-value weighting for independent hypotheses \citep[e.g.][]{ignatiadis2016data, boca2017regression, li2019multiple, lei2018adapt, ignatiadis2017covariate, xia2017neuralfdr, tansey2018black}, there is very little work on adaptive weighting for dependent $p$-values. This is a major gap in the literature, since true independence between $p$-values is rare in applied problems

Similarly to the strategy described above for estimating $\pi_0$, we can accommodate data-adaptive weights by using $\kappa_i(S_i) = \alpha w_i(S_i)$, provided that $\sum_{i \in \cH_0} \EE w_i(S_i) \leq 1$. As above, we can Rao-Blackwellize initial weights $\widetilde w$ by setting $w_i(S_i) = \EE_{H_i}\left[\, \widetilde w_i \mid S_i\,\right]$. If $\sum_i \widetilde w_i(X) \leq 1$ almost surely, then
\[
\sum_{i \in \cH_0} \EE \kappa_i(S_i) = \alpha \sum_{i \in \cH_0} \EE \widetilde{w}_i \leq \alpha.
\]

We defer exploration of this idea to future work.

\section*{Reproducibility}

Our \texttt{R} package \texttt{dbh} is available to download at \url{https://github.com/lihualei71/dbh}. A public github repo accompanying the paper with code to reproduce the figures herein can be found at \url{https://github.com/lihualei71/dbhPaper}.

\section*{Acknowledgments}

William Fithian is supported in part by the NSF DMS-1916220 and a Hellman Fellowship from Berkeley. We are grateful to Patrick Chao and Jonathan Taylor for helpful feedback on a draft of this paper.

\bibliographystyle{plainnat}
\bibliography{biblio.bib}

\begin{thebibliography}{45}
\providecommand{\natexlab}[1]{#1}
\providecommand{\url}[1]{\texttt{#1}}
\expandafter\ifx\csname urlstyle\endcsname\relax
  \providecommand{\doi}[1]{doi: #1}\else
  \providecommand{\doi}{doi: \begingroup \urlstyle{rm}\Url}\fi

\bibitem[Barber and Cand{\`e}s(2015)]{barber15}
Rina~Foygel Barber and Emmanuel~J Cand{\`e}s.
\newblock Controlling the false discovery rate via knockoffs.
\newblock \emph{The Annals of Statistics}, 43\penalty0 (5):\penalty0
  2055--2085, 2015.

\bibitem[Barber and Ramdas(2016)]{barber16}
Rina~Foygel Barber and Aaditya Ramdas.
\newblock The p-filter: multilayer false discovery rate control for grouped
  hypotheses.
\newblock \emph{Journal of the Royal Statistical Society: Series B (Statistical
  Methodology)}, 2016.

\bibitem[Benjamini and Bogomolov(2014)]{benjamini14}
Yoav Benjamini and Marina Bogomolov.
\newblock Selective inference on multiple families of hypotheses.
\newblock \emph{Journal of the Royal Statistical Society: Series B (Statistical
  Methodology)}, 76\penalty0 (1):\penalty0 297--318, 2014.

\bibitem[Benjamini and Heller(2008)]{benjamini2008screening}
Yoav Benjamini and Ruth Heller.
\newblock Screening for partial conjunction hypotheses.
\newblock \emph{Biometrics}, 64\penalty0 (4):\penalty0 1215--1222, 2008.

\bibitem[Benjamini and Hochberg(1995)]{bh95}
Yoav Benjamini and Yosef Hochberg.
\newblock Controlling the false discovery rate: a practical and powerful
  approach to multiple testing.
\newblock \emph{Journal of the Royal Statistical Society. Series B
  (Methodological)}, pages 289--300, 1995.

\bibitem[Benjamini and Hochberg(1997)]{benjamini1997multiple}
Yoav Benjamini and Yosef Hochberg.
\newblock Multiple hypotheses testing with weights.
\newblock \emph{Scandinavian Journal of Statistics}, 24\penalty0 (3):\penalty0
  407--418, 1997.

\bibitem[Benjamini and Yekutieli(2001)]{benjamini2001control}
Yoav Benjamini and Daniel Yekutieli.
\newblock The control of the false discovery rate in multiple testing under
  dependency.
\newblock \emph{Annals of statistics}, pages 1165--1188, 2001.

\bibitem[Benjamini et~al.(2006)Benjamini, Krieger, and
  Yekutieli]{benjamini2006adaptive}
Yoav Benjamini, Abba~M Krieger, and Daniel Yekutieli.
\newblock Adaptive linear step-up procedures that control the false discovery
  rate.
\newblock \emph{Biometrika}, 93\penalty0 (3):\penalty0 491--507, 2006.

\bibitem[Blanchard and Roquain(2008)]{blanchard2008two}
Gilles Blanchard and Etienne Roquain.
\newblock Two simple sufficient conditions for fdr control.
\newblock \emph{Electronic journal of Statistics}, 2:\penalty0 963--992, 2008.

\bibitem[Blanchard and Roquain(2009)]{blanchard2009adaptive}
Gilles Blanchard and {\'E}tienne Roquain.
\newblock Adaptive false discovery rate control under independence and
  dependence.
\newblock \emph{Journal of Machine Learning Research}, 10\penalty0
  (Dec):\penalty0 2837--2871, 2009.

\bibitem[Boca and Leek(2017)]{boca2017regression}
Simina~M Boca and Jeffrey~T Leek.
\newblock A regression framework for the proportion of true null hypotheses.
\newblock \emph{Preprint bioRxiv}, 35675, 2017.

\bibitem[Cand\`{e}s et~al.(2018)Cand\`{e}s, Fan, Janson, and
  Lv]{candes2018panning}
Emmanuel Cand\`{e}s, Yingying Fan, Lucas Janson, and Jinchi Lv.
\newblock Panning for gold: 'model-x' knockoffs for high dimensional controlled
  variable selection.
\newblock \emph{Journal of the Royal Statistical Society: Series B (Statistical
  Methodology)}, 80\penalty0 (3):\penalty0 551--577, 2018.

\bibitem[Dobriban et~al.(2015)Dobriban, Fortney, Kim, and
  Owen]{dobriban2015optimal}
Edgar Dobriban, Kristen Fortney, Stuart~K Kim, and Art~B Owen.
\newblock Optimal multiple testing under a gaussian prior on the effect sizes.
\newblock \emph{Biometrika}, 102\penalty0 (4):\penalty0 753--766, 2015.

\bibitem[Farcomeni(2006)]{farcomeni2006more}
Alessio Farcomeni.
\newblock More powerful control of the false discovery rate under dependence.
\newblock \emph{Statistical Methods and Applications}, 15\penalty0
  (1):\penalty0 43--73, 2006.

\bibitem[Farcomeni(2007)]{farcomeni2007some}
Alessio Farcomeni.
\newblock Some results on the control of the false discovery rate under
  dependence.
\newblock \emph{Scandinavian Journal of Statistics}, 34\penalty0 (2):\penalty0
  275--297, 2007.

\bibitem[Ferreira and Zwinderman(2006)]{ferreira2006benjamini}
JA~Ferreira and AH~Zwinderman.
\newblock On the benjamini--hochberg method.
\newblock \emph{The Annals of Statistics}, 34\penalty0 (4):\penalty0
  1827--1849, 2006.

\bibitem[Finner(1999)]{finner1999stepwise}
Helmut Finner.
\newblock Stepwise multiple test procedures and control of directional errors.
\newblock \emph{The Annals of Statistics}, 27\penalty0 (1):\penalty0 274--289,
  1999.

\bibitem[Fithian et~al.(2014)Fithian, Sun, and Taylor]{fithian2014optimal}
William Fithian, Dennis Sun, and Jonathan Taylor.
\newblock Optimal inference after model selection.
\newblock \emph{arXiv preprint arXiv:1410.2597}, 2014.

\bibitem[Genovese and Wasserman(2002)]{genovese2002operating}
Christopher Genovese and Larry Wasserman.
\newblock Operating characteristics and extensions of the false discovery rate
  procedure.
\newblock \emph{Journal of the Royal Statistical Society: Series B (Statistical
  Methodology)}, 64\penalty0 (3):\penalty0 499--517, 2002.

\bibitem[Genovese and Wasserman(2004)]{genovese2004stochastic}
Christopher Genovese and Larry Wasserman.
\newblock A stochastic process approach to false discovery control.
\newblock \emph{The Annals of Statistics}, 32\penalty0 (3):\penalty0
  1035--1061, 2004.

\bibitem[Genovese et~al.(2006)Genovese, Roeder, and
  Wasserman]{genovese2006false}
Christopher~R Genovese, Kathryn Roeder, and Larry Wasserman.
\newblock False discovery control with p-value weighting.
\newblock \emph{Biometrika}, 93\penalty0 (3):\penalty0 509--524, 2006.

\bibitem[Ignatiadis and Huber(2017)]{ignatiadis2017covariate}
Nikolaos Ignatiadis and Wolfgang Huber.
\newblock Covariate-powered weighted multiple testing with false discovery rate
  control.
\newblock \emph{arXiv preprint arXiv:1701.05179}, 2017.

\bibitem[Ignatiadis et~al.(2016)Ignatiadis, Klaus, Zaugg, and
  Huber]{ignatiadis2016data}
Nikolaos Ignatiadis, Bernd Klaus, Judith~B Zaugg, and Wolfgang Huber.
\newblock Data-driven hypothesis weighting increases detection power in
  genome-scale multiple testing.
\newblock \emph{Nature methods}, 2016.

\bibitem[Kim and van~de Wiel(2008)]{kim2008effects}
Kyung~In Kim and Mark~A van~de Wiel.
\newblock Effects of dependence in high-dimensional multiple testing problems.
\newblock \emph{BMC bioinformatics}, 9\penalty0 (1):\penalty0 114, 2008.

\bibitem[Lee et~al.(2016)Lee, Sun, Sun, and Taylor]{lee2016exact}
Jason~D Lee, Dennis~L Sun, Yuekai Sun, and Jonathan~E Taylor.
\newblock Exact post-selection inference, with application to the lasso.
\newblock \emph{The Annals of Statistics}, 44\penalty0 (3):\penalty0 907--927,
  2016.

\bibitem[Lehmann and Romano(2005)]{lehmann2005testing}
EL~Lehmann and Joseph~P Romano.
\newblock \emph{Testing statistical hypotheses}.
\newblock New York:. Springer, 2005.

\bibitem[Lehmann and Scheff{\'e}(1955)]{lehmann1955completeness}
EL~Lehmann and Henry Scheff{\'e}.
\newblock Completeness, similar regions, and unbiased estimation: Part ii.
\newblock \emph{Sankhy{\=a}: The Indian Journal of Statistics (1933-1960)},
  15\penalty0 (3):\penalty0 219--236, 1955.

\bibitem[Lei and Fithian(2018)]{lei2018adapt}
Lihua Lei and William Fithian.
\newblock Adapt: an interactive procedure for multiple testing with side
  information.
\newblock \emph{Journal of the Royal Statistical Society: Series B (Statistical
  Methodology)}, 80\penalty0 (4):\penalty0 649--679, 2018.

\bibitem[Li and Barber(2019)]{li2019multiple}
Ang Li and Rina~Foygel Barber.
\newblock Multiple testing with the structure-adaptive benjamini--hochberg
  algorithm.
\newblock \emph{Journal of the Royal Statistical Society: Series B (Statistical
  Methodology)}, 81\penalty0 (1):\penalty0 45--74, 2019.

\bibitem[Lynch and Guo(2016)]{lynch16}
Gavin Lynch and Wenge Guo.
\newblock On procedures controlling the {FDR} for testing hierarchically
  ordered hypotheses.
\newblock \emph{arXiv preprint arXiv:1612.04467}, 2016.

\bibitem[Patterson and Sesia(2018)]{knockoff}
Evan Patterson and Matteo Sesia.
\newblock \emph{knockoff: The Knockoff Filter for Controlled Variable
  Selection}, 2018.
\newblock URL \url{https://CRAN.R-project.org/package=knockoff}.
\newblock R package version 0.3.2.

\bibitem[Rhee et~al.(2005)Rhee, Fessel, Zolopa, Hurley, Liu, Taylor, Nguyen,
  Slome, Klein, Horberg, et~al.]{rhee2005hiv}
Soo-Yon Rhee, W~Jeffrey Fessel, Andrew~R Zolopa, Leo Hurley, Tommy Liu,
  Jonathan Taylor, Dong~Phuong Nguyen, Sally Slome, Daniel Klein, Michael
  Horberg, et~al.
\newblock Hiv-1 protease and reverse-transcriptase mutations: correlations with
  antiretroviral therapy in subtype b isolates and implications for
  drug-resistance surveillance.
\newblock \emph{The Journal of infectious diseases}, 192\penalty0 (3):\penalty0
  456--465, 2005.

\bibitem[Rhee et~al.(2006)Rhee, Taylor, Wadhera, Ben-Hur, Brutlag, and
  Shafer]{rhee2006genotypic}
Soo-Yon Rhee, Jonathan Taylor, Gauhar Wadhera, Asa Ben-Hur, Douglas~L Brutlag,
  and Robert~W Shafer.
\newblock Genotypic predictors of human immunodeficiency virus type 1 drug
  resistance.
\newblock \emph{Proceedings of the National Academy of Sciences}, 103\penalty0
  (46):\penalty0 17355--17360, 2006.

\bibitem[Romano et~al.(2008)Romano, Shaikh, and Wolf]{romano2008control}
Joseph~P Romano, Azeem~M Shaikh, and Michael Wolf.
\newblock Control of the false discovery rate under dependence using the
  bootstrap and subsampling.
\newblock \emph{Test}, 17\penalty0 (3):\penalty0 417, 2008.

\bibitem[Shaffer(1980)]{shaffer1980control}
Juliet~Popper Shaffer.
\newblock Control of directional errors with stagewise multiple test
  procedures.
\newblock \emph{The Annals of Statistics}, pages 1342--1347, 1980.

\bibitem[Storey(2002)]{storey02}
John~D Storey.
\newblock A direct approach to false discovery rates.
\newblock \emph{Journal of the Royal Statistical Society: Series B (Statistical
  Methodology)}, 64\penalty0 (3):\penalty0 479--498, 2002.

\bibitem[Storey(2003)]{storey2003positive}
John~D Storey.
\newblock The positive false discovery rate: a bayesian interpretation and the
  q-value.
\newblock \emph{The Annals of Statistics}, 31\penalty0 (6):\penalty0
  2013--2035, 2003.

\bibitem[Storey et~al.(2004)Storey, Taylor, and Siegmund]{storey04}
John~D Storey, Jonathan~E Taylor, and David Siegmund.
\newblock Strong control, conservative point estimation and simultaneous
  conservative consistency of false discovery rates: a unified approach.
\newblock \emph{Journal of the Royal Statistical Society: Series B (Statistical
  Methodology)}, 66\penalty0 (1):\penalty0 187--205, 2004.

\bibitem[Tansey et~al.(2018)Tansey, Wang, Blei, and Rabadan]{tansey2018black}
Wesley Tansey, Yixin Wang, David~M Blei, and Raul Rabadan.
\newblock Black box fdr.
\newblock \emph{arXiv preprint arXiv:1806.03143}, 2018.

\bibitem[Tian and Taylor(2018)]{tian2018selective}
Xiaoying Tian and Jonathan Taylor.
\newblock Selective inference with a randomized response.
\newblock \emph{The Annals of Statistics}, 46\penalty0 (2):\penalty0 679--710,
  2018.

\bibitem[Tibshirani(1996)]{tibshirani1996regression}
Robert Tibshirani.
\newblock Regression shrinkage and selection via the lasso.
\newblock \emph{Journal of the Royal Statistical Society: Series B
  (Methodological)}, 58\penalty0 (1):\penalty0 267--288, 1996.

\bibitem[Tibshirani et~al.(2016)Tibshirani, Taylor, Lockhart, and
  Tibshirani]{tibshirani2016exact}
Ryan~J Tibshirani, Jonathan Taylor, Richard Lockhart, and Robert Tibshirani.
\newblock Exact post-selection inference for sequential regression procedures.
\newblock \emph{Journal of the American Statistical Association}, 111\penalty0
  (514):\penalty0 600--620, 2016.

\bibitem[Troendle(2000)]{troendle2000stepwise}
James~F Troendle.
\newblock Stepwise normal theory multiple test procedures controlling the false
  discovery rate.
\newblock \emph{Journal of Statistical Planning and Inference}, 84\penalty0
  (1-2):\penalty0 139--158, 2000.

\bibitem[Weinstein et~al.(2013)Weinstein, Fithian, and
  Benjamini]{weinstein2013selection}
Asaf Weinstein, William Fithian, and Yoav Benjamini.
\newblock Selection adjusted confidence intervals with more power to determine
  the sign.
\newblock \emph{Journal of the American Statistical Association}, 108\penalty0
  (501):\penalty0 165--176, 2013.

\bibitem[Xia et~al.(2017)Xia, Zhang, Zou, and Tse]{xia2017neuralfdr}
Fei Xia, Martin~J Zhang, James~Y Zou, and David Tse.
\newblock Neuralfdr: Learning discovery thresholds from hypothesis features.
\newblock In \emph{Advances in Neural Information Processing Systems}, pages
  1541--1550, 2017.

\end{thebibliography}

\appendix


\section{Proofs}\label{app:proofs}

We restate and prove several of the technical results from the paper.

\lempmi*
\begin{proof}
Let $p^0 = \pmi$, $R^0 = R(\pmi)$, $R = R(p)$. By the properties of step-up procedures, we always have $p_{(R^0)}^0 \leq \Delta(R^0)$ and $p_{(R)} \leq \Delta(R)$, and $j \in \cR(p)$ iff $p_j \leq \Delta(R)$. By monotonicity, we have $R^0 \geq R$ and $\Delta(R^0) \geq \Delta(R)$.
  
  $(1 \Rightarrow 2)$. Assume $p_i \leq \Delta(R^0)$. Because $p_{(R^0)}^0 \leq \Delta(R^0)$, and $p$ and $p^0$ only differ in their $i$th coordinate, we have $p_{(R^0)} \leq p_i \vee p_{(R^0)}^0 \leq \Delta(R^0)$. As a result, $R \geq R^0$ and $\Delta(R) \geq \Delta(R^0) \geq p_i$.

  $(2 \Rightarrow 3)$. Assume $i \in \cR(p)$. Then $p_i \leq p_{(R)} \leq \Delta(R)$, and $p_{(r)} > \Delta(r)$ for $r > R$. Because $p_{(r)}^0 = p_{(r)}$ for all order statistics with $p_{(r)} \geq p_i$, we have $R^0 = R$ as well. As a result,
\[
\cR(p) = \{j:\; p_j \leq \Delta(R)\} = \{j:\; p_j \leq \Delta(R^0)\} = \cR(p^0)
\]

  $(3 \Rightarrow 1,2)$. Assume $\cR(p) = \cR(\pmi)$. Because $p^0_i = 0$, $i \in \cR(\pmi) = \cR(p)$, so we must have $p_i \leq \Delta(R) = \Delta(R^0)$.
\end{proof}

\thmsafe*
\begin{proof}

  ~\\{\bf \em Claim 1.} If $R_i^0 = R^{\BH(\alpha)}(p^{(i \gets 0)})$, then by Lemma~\ref{lem:p-minus-i} we have
  \[
  p_i \leq \frac{\alpha R_i^0}{m} \iff i \in \cR^{\BH(\alpha)},
  \]
  with $\hR_i = R^{\BH(\alpha)} = R_i^0$ on the same set. As a result,
  \[
  g_i^*(\alpha; p_{-i}) = \EE_{H_i}\left[\frac{1\{p_i \leq \alpha R_i^0/m\}}{R_i^0} \mid p_{-i}\right] = \frac{\alpha}{m}.
  \]
  For $c > \alpha$, we have $c R_i^0/m \leq \tau^{\BH}(c)$, so
  \[
  g_i^*(c; p_{-i}) \geq \frac{\alpha}{m} + \EE_{H_i}\left[\frac{1\{\alpha R_i^0/m < p_i \leq c R_i^0/m\}}{m} \mid p_{-i}\right] > \frac{\alpha}{m}
  \]
  As a result, $\hc_i = 1$ for all $i = 1,\ldots, m$, so $\cR = \cR_+ = \cR^{\BH(\alpha)}$.

  ~\\{\bf \em Claim 2.} Because $\{\hR_i \leq r\}$ is a non-decreasing set for all $r \leq m$ we have for  $P \in H_i$,
  \begin{align}
    \label{eq:cprds-1}
    \PP_P&\left[\;\hR_i \leq r \;\;\big|\;\; p_i \leq \frac{\alpha r}{m}, \;S_i\;\right]
         \;+\; \PP_P\left[\;\hR_i = r+1 \;\;\big|\;\; p_i \leq \frac{\alpha (r+1)}{m}, \;S_i\;\right]\\[10pt]
    \label{eq:cprds-2}
       &\leq\; \PP_P\left[\;\hR_i \leq r \;\;\big|\;\; p_i \leq \frac{\alpha (r+1)}{m}, \;S_i\;\right]
         \;+\; \PP_P\left[\;\hR_i = r+1 \;\;\big|\;\; p_i \leq \frac{\alpha (r+1)}{m}, \;S_i\;\right]\\[10pt]
    \label{eq:cprds-3}
       &=\; \PP_P\left[\;\hR_i \leq r+1 \;\;\big|\;\; p_i \leq \frac{\alpha (r+1)}{m}, \;S_i\;\right].
  \end{align}
  Beginning with $\{\hR_i \leq 1\} = \{\hR_i = 1\}$ and then iteratively applying the above inequality, we obtain
  \begin{equation}\label{eq:telescope}
    \sum_{r=1}^m \PP_P\left[\;\hR_i = r \;\big|\; p_i \leq \frac{\alpha r}{m}, \;S_i\;\right] 
    \leq \PP_P\left[\;\hR_i \leq m \;\big|\; p_i \leq \alpha, \;S_i\;\right] = 1,
  \end{equation}
  and
  \begin{align}
    \EE_P\left[\; \frac{1\left\{p_i \leq \alpha \hR_i/m\right\}}{\hR_i} \;\big|\; S_i\;\right]
    &= \sum_{r=1}^m \frac{1}{r} \;\PP_P\left[\; \hR_i = r, \; p_i \leq \frac{\alpha r}{m} \;\big|\; S_i\;\right]\\
    \label{eq:cond-prob-p}
    &\leq \sum_{r=1}^m \frac{\alpha}{m} \;\PP_P\left[\; \hR_i = r \;\big|\; p_i \leq \frac{\alpha r}{m} , \;S_i\;\right]\\
    &\leq \frac{\alpha}{m},
  \end{align}
  so $g_i^*(\alpha \semic S_i) \leq \alpha / m$ and $\hc_i \geq \alpha$. As a result, $\cR_+ \supseteq \cR^{\BH(\alpha)}$, so $R_+ \geq \hR_i$ for all $i \in \cR_+$.

  ~\\{\bf \em Claim 4.} Define $R^\alpha = R^{\SU_\Delta(\alpha)}$, $\Delta_\alpha(0)=0$, and the intervals $I_k = \left(\Delta_{\alpha}(k-1), \Delta_{\alpha}(k)\right]$ for $k=1,\ldots,m$. By the nature of step-up procedures, $i\in\cR^{\SU_\Delta(\alpha)}$ and $R^\alpha = \hR_i$ on the set $\{p_i \leq \Delta_\alpha(R^\alpha)\}$. Then we have for $P \in H_i$,
  \begin{align}
    \EE_P\left[\; \frac{1\left\{p_i \leq \tau_i(\alpha)\right\}}{\hR_i} \mid S_i\;\right]
    &\;=\; \EE_P\left[\; \frac{1\left\{p_i \leq \alpha \beta\left(R^{\alpha}\right)/m\right\}}{R^\alpha} \;\big|\; S_i\;\right]\\
    &\;=\; \sum_{k=1}^m \,\EE_P\left[\; \frac{1\left\{p_i \leq \alpha \beta\left(R^{\alpha}\right)/m\right\}}{R^\alpha} \,\cdot\, 1\{p_i \in I_k\} \;\big|\; S_i\;\right]\\
    \label{eq:cLm-2}
    &\;\leq\; \sum_{k=1}^m  \frac{1}{k}\, \PP_P\left[\; p_i \in I_k \;\big|\; S_i\;\right]\\
    \label{eq:cLm-3}
    &\;\leq\; \sum_{k=1}^m\frac{1}{k}\,\cdot\,\left(\Delta_\alpha(k)-\Delta_\alpha(k-1)\right)\\
    &\;=\; \frac{\alpha}{m} \sum_{k=1}^m \nu(\{k\}) \;=\; \frac{\alpha}{m}.
  \end{align}
  The inequality in \eqref{eq:cLm-2} follows from the fact that
  \[
  p_i \in I_k \;\text{and}\; p_i \leq \frac{\alpha \beta(R^\alpha)}{m} \;\;\Longrightarrow\;\; R^\alpha \geq k.
  \]
  Because $1/k$ is decreasing in $k$, the uniform distribution maximizes the sum in \eqref{eq:cLm-2} among all superuniform distributions, leading to the inequality in \eqref{eq:cLm-3}.

  ~\\{\bf \em Claim 3.} If we define $\tau^{\BY}(c) = \tau^{\BH}(c/L_m)$, a rescaled version of the effective BH threshold, then the claim follows as a special case of Claim 4, with  $\nu(\{k\}) = (k L_m)^{-1}$.
\end{proof}


\section{Further examples}\label{app:examples}

\subsection{Edge testing in Gaussian graphical models}

In the usual Gaussian graphical model (GGM) setting, we observe $X_1,\ldots,X_n \simiid N_d(\mu,\Sigma)$, and attempt to reconstruct the partial dependence graph $\cG$ of pairs $(j,k)$ for which $X_{ij}$ and $X_{ik}$ are not independent conditional on $X_{i,-(jk)}$. As a hypothesis testing problem, the edge $(j,k)$ is not present if and only if $\Theta_{ij} = 0$, where $\Theta = \Sigma^{-1}$; thus
\[
H_{jk}:\; \Theta_{ij} = 0 \Longleftrightarrow (i,j) \notin \cG.
\]

We begin by constructing the sample covariance matrix
\[
V(X) = \frac{1}{n-1} \sum_i (X_i-\overline X)(X_i - \overline X)', \quad \text{with } (n-1)V \sim \text{Wishart}(\Sigma,n-1).
\]

The Wishart distribution is an exponential family with complete sufficient statistic $T(X) = (\overline X, V)$, and the standard test for $H_{jk}$ is simply the $t$-test for the coefficient of $X_j$ in a multiple regression of $X_k$ on the other $d-1$ variables.

A homotopy algorithm for the Wishart problem is more complex than the homotopy algorithm for the other Gaussian-derived problems discussed above.

\subsection{Post-selection $z$- and $t$-tests}

Another potentially interesting application of our work is in post-selection multiple testing of regression coefficients after some . The post-selection distribution of regression coefficients follows a truncated multivariate Gaussian leading to post-selection $z$- and $t$-tests depending on whether the error variance is known or unknown, as investigated in various works including \citet{tibshirani2016exact,lee2016exact, fithian2014optimal, tian2018selective}. Because the post-selection distribution is a continuous exponential family, methods closely related to the ones discussed above may be used, with more computational effort. We leave full investigation of these examples to future work.

\subsection{Multiple comparisons to control for binary outcomes}\label{app:binomial_mcc}

Our methods extend to discrete as well as continuous models. For example, consider a clinical trial or A/B test with binary outcomes, in which $m$ different treatments are compared in the same experiment to a common control treatment, known as the {\em multiple comparisons to control} (MCC) problem in multiple testing. Let $n_i$ denote the number of experimental subjects in the $i$th treatment group, and let $X_i \sim \text{Binom}(n_i, \theta_i)$ denote the number whose binary response is positive. In addition, let $X_0 \sim \text{Binom}(n_0, \theta_0)$ denote the number of positive responses in the control group. Assume that $n_0,\ldots,n_m$ are fixed and known.

To test $H_i:\; \theta_i = \theta_0$ (or $H_i:\; \theta_i \leq \theta_0$), we can reject for extreme values (respectively large values) of the statistic $X_i$, whose null distribution is hypergeometric conditional on $S_i = (X_0 + X_i, X_{-i})$. The resulting $p$-values are correlated with each other through their common dependence on $X_0$, since larger values of $X_0$ shift the null distributions of all $X_i$ to the right, increasing the tests' critical values.

Conditional expectations given $S_i$ in this model can be evaluated exactly, since $X_i$ is conditionally supported on 
the finite set $\{0 \vee (X_0 + X_i - n_0),\ldots, n_i \wedge (X_0+X_i)\}$.

\subsection{Nonparametric multiple comparisons to control}
Conditionally simple models can arise in other contexts than exponential families; for example, consider a nonparametric one-way layout problem with real-valued observations:
\[
X_{ij} \simiid F_i, \quad \text{for } i = 0,\ldots,m, \;\; j=1,\ldots,n_i.
\]

Such a model might arise in an A/B testing context where we wish to compare each $F_i$ for $i \geq 1$ with a common control distribution $F_0$, a nonparametric version of the MCC problem. A complete sufficient statistic for the full model $\cP$ is the set of order statistics for each of the $m+1$ samples, or equivalently the empirical distributions of each sample $T(X) = (\hF_0, \ldots, \hF_m)$, where
\[
\hF_i(x) = \frac{1}{n_i} \sum_j 1\{X_{ij} \leq x\}.
\]

The null hypothesis $H_i:\; F_i = F_0$ defines a submodel with complete sufficient statistic 
\[
S_i = \left(\frac{n_0\hF_0 + n_i\hF_i}{n_0+n_i}, \;\hF_1, \ldots, \hF_{i-1},\hF_{i+1}, \ldots, \hF_m\right),
\]
or equivalently the pooled order statistics of sample $i$ and the control sample, as well as the separate order statistics of each of the other samples. Under $H_i$, every permutation of the order statistics is equally likely, and the $p$-value any two-sample permutation test of $H_i$ will be uniformly distributed on $\left\{\frac{1}{B+1}, \ldots, \frac{B}{B+1}, 1\right\}$ where $B$ is the number of random permutations used (ruling out ties).


\section{Algorithmic Details}\label{app:computation}

In this Appendix we discuss the algorithmic details as well as the computational tricks of both the homotopy algorithm and approximate numerical integration.

\subsection{Useful subclasses of $\dSU_{\gamma, \Delta}$}
For any threshold collection $\{\Delta_\alpha(r): r\in [m]\}$, the homotopy algortihm in Section \ref{sec:homotopy} can be applied to $\dSU_{\gamma, \Delta}$ with slight modification. Here we introduce several subclasses of sparse threshold collections that can reduce computational cost without losing nice theoretical guarantees. Given any integers $1 \le a_1 < a_2 < \cdots < a_L \le m$, define
\begin{equation}
  \label{eq:dSU}
  \Delta_\alpha(r) = \frac{\alpha\beta(r)}{m} = \frac{\alpha a_\ell}{m}, \quad (r \in [a_\ell, a_{\ell+1}), \,\,\ell = 0, 1, \ldots, L),
\end{equation}
where $a_0 = 0$ and $a_{L+1} = m + 1$ for convenience. Note that $\beta(r)\le r$. Using the same argument as in the proof of Theorem \ref{thm:safe} Claim 1 and 2, we can prove that $\dSU_{\gamma, \Delta}$ with \eqref{eq:dSU} controls \FDR ~at level $\alpha$ in finite samples when the p-values are independent or CPRDS. Further, we can define the safe version by setting $\gamma = 1 / L_{m, a}$ where
\[L_{m, a} = \sum_{\ell = 1}^{L}\frac{a_\ell - a_{\ell - 1}}{a_\ell}.\]
This is safe because $\beta(r) / L_{m, a}$ can be rewritten as $\sum_{i=1}^r i\nu (\{i\})$ where
\[\nu(\{a_\ell\}) =  (a_\ell - a_{\ell - 1}) / a_\ell L_{m, a}, \,\, \mbox{and} \,\,\nu(\{i\}) = 0, \,\, (i\not\in \{a_1, \ldots, a_L\}).\]
It is easy to verify that $\nu$ is a density function on $[m]$ and thus the proof of Theorem \ref{thm:safe} Claim 4 guarantees that $\dSU_{1 / L_{m, a}, \Delta}$ is safe. 

Indeed, \eqref{eq:dSU} includes various interesting cases.
\begin{itemize}
\item When $L = 1$ and $a_1 = 1$, \eqref{eq:dSU} recovers the Bonferroni correction. Moreover $L_{m, a} = 1$, implying that the Bonferroni correction is safe without correcting $\alpha$.
\item When $L = m$ and $a_\ell = \ell$, \eqref{eq:dSU} recovers $\BH(\alpha)$. In this case, $L_{m, a} = \sum_{\ell = 1}^{m}(1 / \ell)$ so the safe version recovers the $\BY(\alpha)$ procedure.
\item When $L = \lfloor \log_2 m\rfloor$ and $a_\ell = 2^\ell$, \eqref{eq:dSU} recovers the setting \eqref{eq:sparse_dSU} and $L_{m, a} = L / 2$.
\end{itemize}
Motivated by \eqref{eq:sparse_dSU}, we design a class of ``geometrically increasing'' integer sequences with simple analytical forms as follows:
\begin{equation}
  \label{eq:geom}
  a_\ell = \left\lceil \frac{\beta^{\ell - 1} - 1}{\beta - 1} + 1\right\rceil, \,\, \ell \in [L]\mbox{ where } L = \left\lfloor\frac{\log \{(\beta - 1)(m - 1) + 1\}}{\log \beta}\right\rfloor 
\end{equation}
It is easy to see that \eqref{eq:sparse_dSU} is a special case of \eqref{eq:geom} with $\beta = 2$. Since $a_1 = 1$, $\dSU_{1, \Delta}(\alpha)$ is never less powerful than the Bonferroni correction. 

As for power, $\dSU_{1, \Delta}(\alpha)$ is strictly dominated by $\dBH_{1}(\alpha)$ since $\beta(r)\le r$. However, $\dSU_{1 / L_{m, a}, \Delta}(\alpha)$ may be more powerful than $\dBH_{1 / L_{m}}(\alpha)$ when $L_{m, a}\le L_{m}$. For instance, $L_{m} = \log m + \cO(1)$ while $L_{m, a} = (\log m) ((\beta - 1) / \beta \log \beta) + \cO(1)$ for $\Delta$ defined in \eqref{eq:geom}. As long as $\beta > 1$, $L_{m, a} < L_{m}$ for sufficiently large $m$. Furthermore, it is easy to verify that the mapping $\beta \mapsto (\beta - 1) / \beta \log \beta$ is decreasing in $\beta$. Thus a higher $\beta$ would produce a smaller correction factor $L_{m, a}$.

As for computational efficiency, $\dSU_{\gamma, \Delta}(\alpha)$ may be faster than $\dBH_{\gamma}(\alpha)$, when the number of distinct positive thresholds is $L$ is much smaller than $m$, in which case the number of potential knots is reduced as discussed in Section \ref{sec:computation}. For the class \eqref{eq:geom}, the number of distinct thresholds is $\log m / \log \beta + \cO(1)$.

\subsection{Computation tricks}

\subsubsection{Efficient update of $R_i^{(c)}(t)$}
In principle, $R_{i}^{(c)}(t)$ can be recovered from $\{B_\ell^{(c)}(t): \ell = 0, \ldots, m\}$ by \eqref{eq:Rit_Bt}. However, this naive method involves a search with computational cost up to $m$ for each knot. Indeed, $R_i^{(c)}(t_k)$ can also be updated sequentially as follows:
\[R_i^{(c)}(t_k) = \left\{
    \begin{array}{ll}
      R_i^{(c)}(t_{k-1}) & (\mbox{if }r_k < R_i^{(c)}(t_{k-1}))\\
      R_i^{(c)}(t_{k-1}) & (\mbox{if }r_k \ge R_i^{(c)}(t_{k-1})\mbox{ and }B_{r_k}^{(c)}(t_k) \not = 0)\\
      r_k & (\mbox{if }r_k > R_i^{(c)}(t_{k-1})\mbox{ and }B_{r_k}^{(c)}(t_k) = 0)\\
      \max\left\{\ell < r_k: B^{(c)}_\ell(t) = 0\right\} & (\mbox{if }r_k = R_i^{(c)}(t_{k-1})\mbox{ and }B_{r_k}^{(c)}(t_k) = 0)
    \end{array}
  \right..\]
In all but the last scenario, the search cost is zero. Only when the value of $R_i^{(c)}(t)$ decreases, the search cost is nonzero and equal to $R_i^{(c)}(t^{-}) - R_i^{(c)}(t)$. For the independent and CPRDS case, $R_i^{(c)}(t)$ is strictly increasing and thus no search is needed at all. For other cases, we observe that in most cases $R_i^{(c)}(t)$ is strictly increasing or has occasional drops by a small amount. Therefore, the search cost of updating $R_i^{(c)}(t)$ is negligible.

\subsubsection{Q-value capping to reduce $\sum_i$}\label{app:qcap}
As discussed in Section \ref{sec:homotopy}, the first determinant of the computational cost is the size of $\sum_i$ in \eqref{eq:sumijr}, namely the number of hypotheses for which $g_i^{*}(q_i \semic S_i)$ needs to be evaluated. Intuitively, the maximal $\hat{c}_i$ cannot be much larger than $\alpha$. For $\dBH_1(\alpha)$ on independence p-values, $g_i^*(\alpha\mid S_i) = \alpha / m$ and thus $\hat{c}_i = \alpha$. For CPRDS cases, although $\hat{c}_i \ge \alpha$, we observed that it is always below $2\alpha$ in all our pilot numerical studies. Of course there is no theoretical guarantee that $\hat{c}_i \le 2\alpha$. Nonetheless, if we cap $\hat{c}_i$ at $2\alpha$, the $\dBH$ procedures still control \FDR ~in finite samples because this operation is equivalent to modifying $\tau_i(c; X)$ as $\tau_i(c\wedge 2\alpha; X)$ which is still non-decreasing in $c$ for all $X$. This trick excludes all hypotheses with q-values above $2\alpha$, without the need to compute $g_i^*$. As a result, the size of $\sum_i$ is reduced to $R_{\BH}(2\alpha)$, which is usually a few orders of magnitude smaller than $m$. Meanwhile, as we observed, it is typical that  $\hat{c}_i < 2\alpha$ and so this capping step does not lose power. 

\subsubsection{Screening to reduce $\sum_j$ and $\sum_r$}\label{app:screening}
For a given hypothesis $H_i$, we need to find $\cK_{i, j, r}$ defined in \eqref{eq:cK}. As discussed in Section \ref{sec:numerical_integration}, for one-sided testing, we can also reduce the range of the integral \eqref{eq:integral} from $\RR$ to a finite interval $[t_\lo, t_\hi]$ with a tiny approximation error $\alpha \epsilon / m$. Similarly, the range can be reduced into $[-t_\hi, -t_\lo]\cup [t_\lo, t_\hi]$ for two-sided testing considered in Section \ref{sec:examples}. For simplicity, we only discuss one-sided testing in this subsection and discuss a shortcut to handle two-sided testing in the next subsection. For this reason, we only need to find knots lying in this interval. Our goal is to find an efficient way to identify pairs $(j, r)$ for which $\cK_{i, j, r}\cap [t_\lo, t_\hi]$ is empty and to ignore them in the computation.

The idea is to compute the minimum $p_{j, \min}$ and maximum $p_{j, \max}$ of $p_j(t)$ over $[t_\lo, t_\hi]$ and to find all thresholds between $[p_{j, \min}, p_{j, \max}]$. As a result, those with no thresholds in the interval can be excluded directly, thereby reducing the size of $\sum_j$, while given $j$, the thresholds outside the interval can be excluded, thereby reducing the size of $\sum_r$. It can be implemented efficiently if $p_{j, \min}$ and $p_{j, \max}$ have analytical forms.

For the one-sided multivariate Gaussian testing problem, since $\eta_i$ is increasing and $\xi_{ij}$ is linear, the minimum and maximum are achieved at $t_\lo$ (resp. $t_\hi$) and $t_\hi$ (resp. $t_\lo$) if $\Sigma_{j, i} > 0$ (resp. $\Sigma_{j, i} < 0$). For short-ranged covariance structures like in the AR process, $\Sigma_{j, i}$ is tiny for most $j$'s. For such a $j$, $p_{j, \min}$ is very close to $p_{j, \max}$ and it is likely that $\cK_{i, j, r}\cap [t_\lo, t_\hi]$ is empty. So the screening step can adaptively remove the hypotheses with low correlation with $H_i$.

\subsubsection{A shortcut for two-sided testing}\label{app:two-sided}
For all examples considered in Section \ref{sec:examples}, $\xi_{ij}(t)$ is identical for one- and two-sided testing. For the latter, $f_j(t) = 2(1 - F(|t|))$ where $F$ is the marginal distribution function. In addition, since the effective range of the integral \eqref{eq:integral} can be reduced to $[-t_\hi, -t_\lo]\cup [t_\lo, t_\hi]$, it remains to compute
\begin{align*}
  &\cK_{i, j, r}\cap \lb [-t_\hi, -t_\lo]\cup [t_\lo, t_\hi]\rb = \left\{t\in [-t_\hi, -t_\lo]\cup [t_\lo, t_\hi]: \xi_{ij}(t) = \pm F^{-1}\lb 1 - \frac{cr}{2m}\rb\right\}.
\end{align*}
This can be written as the union of four sets $\cK_{i, j, r}^{++}\cup \cK_{i, j, r}^{+-}\cup \cK_{i, j, r}^{-+} \cup \cK_{i, j, r}^{--}$ where
\begin{align*}
  &\cK_{i, j, r}^{++} = \left\{t: t\in [t_\lo, t_\hi], \xi_{ij}(t) = F^{-1}\lb 1 - \frac{cr}{2m}\rb\right\}\\
  &\cK_{i, j, r}^{+-} = \left\{t: t\in [t_\lo, t_\hi], -\xi_{ij}(t) = F^{-1}\lb 1 - \frac{cr}{2m}\rb\right\}\\
  &\cK_{i, j, r}^{-+} = \left\{-t: t\in [t_\lo, t_\hi], \xi_{ij}(-t) = F^{-1}\lb 1 - \frac{cr}{2m}\rb\right\}\\
  &\cK_{i, j, r}^{--} = \left\{-t: t\in [t_\lo, t_\hi], -\xi_{ij}(-t) = F^{-1}\lb 1 - \frac{cr}{2m}\rb\right\}.
\end{align*}
Each of them has the same structure as in the one-sided testing counterpart, with $\xi_{ij}(t)$ replaced by $\xi_{ij}(t), \xi_{ij}(-t), -\xi_{ij}(t), -\xi_{ij}(-t)$, and we compute each of the four sets separately. 

\subsubsection{Screening for $\dBH^2_\gamma(\alpha)$}
Denote by $g_i^{(1)}$ and $g_i^{(2)}$ the conditional expectations $g_i^*(q_i \semic S_i)$ in $\dBH_\gamma(\alpha)$ and $\dBH^2_\gamma(\alpha)$, respectively. Note that $g_i^{(2)}$ is much more expensive to compute than $g_i^{(1)}$. When the procedure is safe, Theorem \ref{thm:recursive} guarantees that $g_i^{(2)} \le g_i^{(1)}$ almost surely. As a consequence, $g_i^{(2)}$ is below $\alpha / m$ whenever $g_i^{(1)}$ is and thus we can avoid computing $g_i^{(2)}$ for all rejected hypotheses by $\dBH_\gamma(\alpha)$.

When the procedure is not safe, through extensive numerical studies, we observed that $g_i^{(1)}$ and $g_i^{(2)}$ are typically not significantly different for any given $i$. Thus we can view $g_i^{(1)}$ as a proxy for $g_i^{(2)}$. Since it is only necessary to decide whether $g_i^{(2)}\le \alpha / m$, we avoid the computation if $g_i^{(1)}\le C\alpha / m$ for some constant $C < 1$. In particular, we choose $C = 0.9$ in our implementation as the default.

\subsection{Finding knots for multivariate t-statistics}\label{app:knots_t}
Recall the definitions of $\eta_i$ and $\xi_{ij}$ from Section \ref{sec:homotopy}. For multivariate t-statistics, as shown in Section \ref{sec:tstats},
\[\xi_{ij}(t) = a_{ij}\sqrt{n - d + t^2} + b_{ij}t, \,\, \mbox{where } a_{ij} = \frac{U_i}{\sqrt{V_i \Psi_{j, j}}}, \mbox{ and } b_{ij} = \frac{\Psi_{j,i}}{\Psi_{i, i}}.\]
Note that $\xi_{ij}(t), \xi_{ij}(-t), -\xi_{ij}(t), -\xi_{ij}(-t)$ all have this form. Recalling the definition of $\cK_{i, j, r}$ and the discussion in Appendix \ref{app:two-sided}, we need to solve equations in the following form (after transforming $t$ to $t\sqrt{n - d}$):
\begin{equation}
  \label{eq:quasi-quadratic}
  a\sqrt{1 + t^2} + bt = c.
\end{equation}
Moreover, if $\alpha < 0.5$, we only need to find \emph{positive} solutions in $[t_\lo, t_\hi]$ and we know that $c > 0$. Nonetheless, the solution of \eqref{eq:quasi-quadratic} is more complicated than it appears to be. Although it is attempting to solve the induced quadratic equation $a^2(1 + t^2) = (c - bt)^2$, the solution of the latter may not satisfy \eqref{eq:quasi-quadratic} since we need $\sign(c - bt) = \sign(a)$. Moreover, as shown in \eqref{eq:update_B} in Section
 \ref{sec:homotopy}, we also need to compute $\sign(p_j'(t))$. In this case,
 \[\sign(p_j'(t)) = \sign(\eta_j'(\xi_{ij}(t))\xi_{ij}'(t)) = \sign(\xi_{ij}'(t)).\]
With a generic form \eqref{eq:quasi-quadratic}, $\xi_{ij}'(t) = at /\sqrt{1 + t^2} + b$. 

In order to apply the screening step discussed in Appendix \ref{app:screening}, we need analytical formulae for the minimum and maximum of the function $t\mapsto a\sqrt{1 + t^2} + bt$.

\begin{proposition}\label{prop:screen_t}
  Write $m(t)$ for $a\sqrt{1 + t^2} + bt$. Given any $0 < t_\lo < t_\hi$, let $m_- = \min\{m(t_\lo), m(t_\hi)\}$ and $m_+ = \max\{m(t_\lo), m(t_\hi)\}$. 
  \begin{itemize}
  \item If $|a| \le |b|$ or $\sign(a) = \sign(b)$,
    \[\min_{t\in [t_\lo, t_\hi]}m(t) = m_-, \quad \max_{t\in [t_\lo, t_\hi]}m(t) = m_+.\]
  \item If $|a| > |b|$ and $\sign(a) = \sign(b)$. Let $t^* = \sqrt{b^2 / (a^2 - b^2)}$.
    \begin{itemize}
    \item If $t^* \not\in [t_\lo, t_\hi]$,
      \[\min_{t\in [t_\lo, t_\hi]}m(t) = m_-, \quad \max_{t\in [t_\lo, t_\hi]}m(t) = m_+.\]
    \item If $t^* \in [t_\lo, t_\hi]$,
      \[\min_{t\in [t_\lo, t_\hi]}m(t) = \min\{m_-, m(t^*)\}, \quad \max_{t\in [t_\lo, t_\hi]}m(t) = \max\{m_+, m(t^*)\}.\]
    \end{itemize}
  \end{itemize}
\end{proposition}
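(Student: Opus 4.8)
The plan is to study the one-variable function $m(t)=a\sqrt{1+t^2}+bt$ through its derivative on the interval $[t_\lo,t_\hi]\subset(0,\infty)$. Differentiating gives $m'(t)=\dfrac{at}{\sqrt{1+t^2}}+b$, and the key elementary facts are that for $t>0$ the quantity $\dfrac{at}{\sqrt{1+t^2}}$ is strictly monotone in $t$, has the same sign as $a$, satisfies $\left|\dfrac{at}{\sqrt{1+t^2}}\right|<|a|$, and tends to $0$ as $t\to 0^+$ and to $a$ as $t\to\infty$. Everything then follows from a case split on the signs of $a,b$ and on whether $|a|\le|b|$.

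First I would dispatch the ``first bullet'' cases, in which $m$ is monotone on $[t_\lo,t_\hi]$ so the extrema are at the endpoints. If $\sign(a)=\sign(b)$ (the degenerate case $a=0$ or $b=0$ is automatically here, since then $m$ is monotone), then $m'(t)$ carries the common sign of $a$ and $b$ for all $t>0$, so $m$ is strictly monotone; hence $\min m=m_-$ and $\max m=m_+$. If instead $|a|\le|b|$, then $\left|\dfrac{at}{\sqrt{1+t^2}}\right|<|a|\le|b|$ forces $m'(t)$ to have the sign of $b$ throughout $(0,\infty)$ (or $m\equiv 0$ when $a=b=0$), and again $m$ is monotone, giving the same conclusion. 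These two subcases exhaust the first bullet.

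It remains to handle $|a|>|b|$ with $a$ and $b$ of opposite sign, in particular $b\neq 0$. Here setting $m'(t)=0$ and squaring gives $a^2t^2=b^2(1+t^2)$, i.e.\ $t^2(a^2-b^2)=b^2$, whose unique positive root is $t^*=\sqrt{b^2/(a^2-b^2)}$ (the denominator is positive precisely because $|a|>|b|$); the sign check $\sign(at)=\sign(-b)$ needed to recover a genuine zero of $m'$ holds automatically for $t>0$. Since $m'(0^+)=b$ has the sign of $b$ while $m'(t)\to a+b$ has the sign of $a$ (as $|a|>|b|$), $m'$ changes sign exactly once on $(0,\infty)$, at $t^*$: from negative to positive if $a>0$ and from positive to negative if $a<0$. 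Thus $t^*$ is the unique interior critical point and is the global minimizer of $m$ on $(0,\infty)$ when $a>0$, and the global maximizer when $a<0$. If $t^*\notin[t_\lo,t_\hi]$, then $m$ is monotone on the interval and the extrema are at the endpoints. If $t^*\in[t_\lo,t_\hi]$, then for $a>0$ we have $m(t^*)\le m_-$ and the max stays at an endpoint, so $\min m=m(t^*)=\min\{m_-,m(t^*)\}$ and $\max m=m_+=\max\{m_+,m(t^*)\}$; for $a<0$ the same two formulas hold with the roles of minimum and maximum reversed. Writing the answer as $\min\{m_-,m(t^*)\}$ and $\max\{m_+,m(t^*)\}$ therefore covers both sign cases at once, which is the second bullet.

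The whole argument is elementary calculus; the only places that need a little care are (i) verifying that the root extracted by squaring is a genuine critical point (the sign condition $\sign(at)=\sign(-b)$, automatic since $t>0$ and $\sign(a)=-\sign(b)$), (ii) confirming that $m'$ truly crosses zero at $t^*$ rather than merely touching it, which I get from the limiting signs $m'(0^+)=b$ and $m'(\infty)=a+b$, and (iii) checking that the single pair of formulas $\min\{m_-,m(t^*)\}$, $\max\{m_+,m(t^*)\}$ reproduces the correct extremum for both $a>0$ and $a<0$. None of these is a serious obstacle; the main bookkeeping burden is simply keeping the sign cases straight.
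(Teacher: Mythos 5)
Your proof is correct and follows essentially the same route as the paper's: analyze $m'(t)=at/\sqrt{1+t^2}+b$, show $m$ is monotone when $|a|\le|b|$ or $a,b$ share a sign, and in the remaining opposite-sign case with $|a|>|b|$ identify $t^*$ as the unique interior critical point (a global minimizer or maximizer depending on $\sign(a)$), which yields both sub-bullets. You also correctly read the second bullet's hypothesis as $\sign(a)\neq\sign(b)$ (a typo in the statement), exactly as the paper's own proof does.
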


\begin{proof}
  Note that $m'(t) = at / \sqrt{1 + t^2} + b$. If $|a| \le |b|$,
  \[m'(t)\sign(b) \ge |b|\lb 1 - \frac{t}{\sqrt{1 + t^2}}\rb \ge 0.\]
  As a result, $m(t)$ is either non-decreasing or non-increasing. Thus the extremes are achieved at the boundaries. Similarly, if $\sign(a) = \sign(b)$, $m'(t)$ has the same sign with $b$ on $[0, \infty)$, implying that the extremes are also achieved at the boundaries.

  If $|a| > |b|$ and $\sign(a) \not= \sign(b)$, $m'(t^*) = 0$. Without loss of generality we assume $b > 0$ and $a < 0$. Then $m'(t) > 0$ for $t < t^*$ and $m'(t) < 0$ for $t > t^*$. Thus $m(t)$ is increasing on $[0, t^*]$ and decreasing on $[t^*, \infty)$. This proves the second case. The case with $b < 0$ and $a > 0$ can be proved similarly.
\end{proof}

The screening step guarantees that each equation of concern in the form of \eqref{eq:quasi-quadratic} has at least one positive solution. The following proposition provides neat analytical formulae for the solutions of \eqref{eq:quasi-quadratic} as well as $m'(t)$ for each solution. Albeit straightforward, it avoids unnecessary algebraic operations and thus is important for an efficient implementation of the homotopy algorithm. 

\begin{proposition}\label{prop:quasi-quadratic}
Assume that $c > 0$ and $m(t) = c$ has at least one positive root. 
 \begin{enumerate}[(1)]
  \item If $a = 0$, $m(t) = c$ has only one positive root $t_1 = \frac{c}{b}$ with $m'(t_1) > 0$.
  \item If $b = 0$, $m(t) = c$ has only one positive root $t_1 = \sqrt{\frac{c - a}{c}}$ with $m'(t_1) > 0$.
  \item If $a > 0$ and $b = \pm a$, $m(t) = c$ has only one positive root $t_1 = \frac{c^2 - b^2}{2bc}$ with $\sign(m'(t)) = \sign(b)$.    
  \item If $a < |b|$ or $a > b > 0$, $m(t) = c$ has only one positive root $t_1 = \frac{bc - \sign(b)a\sqrt{b^2 + c^2 - a^2})}{b^2 - a^2}$ with $\sign(m'(t_1)) = \sign(b)$.
  \item If $c > a > -b > 0$, $m(t) = c$ has only one positive root $t_1 = \frac{bc - a\sqrt{b^2 + c^2 - a^2}}{b^2 - a^2}$ with $m'(t_1) > 0$.
  \item If $a > -b > 0$ and $a\ge c$, $m(t) = c$ has two positive roots $t_1 = \frac{bc - a\sqrt{b^2 + c^2 - a^2}}{b^2 - a^2}$ and $t_2 = \frac{bc + a\sqrt{b^2 + c^2 - a^2}}{b^2 - a^2}$ with $m'(t_1) > 0$ and $m'(t_2) < 0$.
  \end{enumerate}
  Furthermore, these are all settings in which $m(t) = c$ can have at least one positive root.
\end{proposition}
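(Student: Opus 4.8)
The plan is to reduce the irrational equation to a quadratic, discard the spurious root introduced by squaring via a sign condition, and read off $\sign(m'(t_i))$ from the monotonicity picture already established in Proposition~\ref{prop:screen_t}. First I would rewrite $m(t) = c$ as $a\sqrt{1+t^2} = c - bt$; since $\sqrt{1+t^2} > 0$, any positive solution must satisfy $\sign(c - bt) = \sign(a)$ (with $c = bt$ when $a = 0$). Squaring gives
\[
Q(t) \;:=\; (a^2 - b^2)\,t^2 + 2bc\,t + (a^2 - c^2) \;=\; 0,
\]
with discriminant $4a^2(b^2 + c^2 - a^2)$, and conversely a root of $Q$ solves $m(t) = c$ exactly when it obeys the sign condition. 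The argument then splits according to whether $Q$ is degenerate.

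When $a^2 = b^2$ the polynomial $Q$ is at most linear. If $a = 0$ then $Q(t) = -(bt-c)^2$, so $t = c/b$ is the only candidate, positive iff $b > 0$, and $m'(t) = b > 0$ --- item (1). If $a = |b| > 0$, i.e.\ $b = \pm a$, then $Q$ is linear with root $t_1 = (c^2 - b^2)/(2bc)$, and $c - bt_1 = (c^2 + a^2)/(2c) > 0$, so the sign condition forces $a > 0$, which is item (3); and $\sign(m'(t_1)) = \sign(b)$ follows from $m'(t) = at/\sqrt{1+t^2} + b$ and $0 < t/\sqrt{1+t^2} < 1$.

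When $a^2 \neq b^2$ the quadratic formula gives the candidates $t_\pm = \bigl(-bc \pm |a|\sqrt{b^2+c^2-a^2}\bigr)/(a^2 - b^2)$, and I would run through the sign patterns of $(a,b)$ together with the comparison of $c$ with $m(0) = a$. The key auxiliary fact is the shape of $m$ on $[0,\infty)$, exactly as recorded in the proof of Proposition~\ref{prop:screen_t}: $m$ is strictly monotone when $|a| \le |b|$ or $\sign(a) = \sign(b)$, and is U-shaped with minimizer $t^* = \sqrt{b^2/(a^2 - b^2)}$ when $|a| > |b|$ and $\sign(a) \neq \sign(b)$. Combined with $m(0) = a$ and the limiting behaviour $m(t)/t \to a + b$, this determines, in each regime, how many of $t_\pm$ are positive, which of them obeys $\sign(c - bt) = \sign(a)$, and which monotone branch the genuine root sits on --- hence $\sign(m'(t_i))$. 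It also gives the completeness claim: if $a \le 0$ then $a = 0$ is item (1) (needing $b > 0$), while $a < 0$ forces $0 < |a| < b$ (since $m(t) < (a+b)t$ for $t > 0$ and $m(t) = c > 0$ requires $a + b > 0$), landing in item (4); if $a > 0$ and $b = 0$ one needs $c > a$, which is item (2); and if $a > 0$ with $b \neq 0$, the cases $b > 0$ versus $b < 0$, refined by $|a|$ versus $|b|$ and by $c$ versus $a$, exhaust items (3)--(6).

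The main obstacle will be the case bookkeeping rather than any single computation: for each branch one must check whether the relevant root of $Q$ is genuinely positive (which in several subcases turns on the sign of $(b^2 - a^2)(c^2 - a^2)$), eliminate the extraneous root from squaring using $\sign(c - bt) = \sign(a)$, and confirm $\sign(m'(t_i))$ against the monotone branch. Each ingredient is elementary given the quadratic and Proposition~\ref{prop:screen_t}, but carrying this out uniformly across the six regimes --- and verifying that no other $(a,b,c)$ admits a positive root --- is where essentially all of the effort lies.
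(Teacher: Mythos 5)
Your proposal follows essentially the same route as the paper's proof: rewrite $m(t)=c$ as the quadratic $(b^2-a^2)t^2-2bc\,t+(c^2-a^2)=0$ together with the sign condition $\sign(c-bt)=\sign(a)$, split on the degenerate case $a^2=b^2$, exhaust the sign patterns of $(a,b)$ refined by $c$ versus $a$, and close with the same completeness argument that $a<0$ with $b\le|a|$ forces $m(t)<0$. The only real difference is one of verification style: the paper identifies the valid root and computes $\sign(m'(t_i))$ algebraically (using $c-bt_i=a\sqrt{1+t_i^2}$ and the identity $\frac{c-bt}{a}=\frac{b^2+c^2}{ca\mp b\sqrt{b^2+c^2-a^2}}$), whereas you read both off the monotone/unimodal shape of $m$ from Proposition~\ref{prop:screen_t} together with $m(0)=a$ and the asymptotic slope $a+b$ --- equally valid, with the minor caveat that for $a<0<b$, $|a|>|b|$ the interior critical point is a maximizer rather than a minimizer, which is immaterial since that regime admits no positive root.
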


\begin{proof}
  In case (1), $m(t) = c$ reduces to a linear function with root $c / b$. Since we assume $m(t) = c$ has at least one positive solution, it must be the positive solution and thus $b > 0$ and $m'(t_1) = b > 0$. The case (2) can be proved similarly. For the remaining cases, note that
  \begin{align}
    m(t) = c &\Longleftrightarrow a^2(1 + t^2) = (c - bt)^2 \mbox{ and }\sign(c - bt) = \sign(a)\nonumber\\
    & \Longleftrightarrow (b^2 - a^2)t^2 - 2bct + (c^2 - a^2) = 0 \mbox{ and }\frac{c - bt}{a}\ge 0.\label{eq:solution}
  \end{align}
  When $b^2 - a^2 = 0$, \eqref{eq:solution} reduces to a linear equation with solution $t = (c^2 - b^2) / 2bc$. Since we assume $m(t) = c$ has at least one positive solution, it must be the positive solution. Moreover, as shown in the proof of Proposition \ref{prop:screen_t}, when $|b| = |a|$, $\sign(m'(t)) = \sign(b)$. Thus, (3) is proved. 

  When $b^2 - a^2 \not = 0$, the first equation in \eqref{eq:solution} is quadratic. Thus, 
  \begin{align}
    m(t) = c &\Longleftrightarrow t = \frac{bc \pm a\sqrt{b^2 + c^2 - a^2}}{b^2 - a^2}  \mbox{ and }\frac{c - bt}{a}\ge 0.\nonumber
  \end{align}
  For each of the two potential solutions,
  \begin{align*}
    \frac{c - bt}{a} &= \frac{c(b^2 - a^2) - b(bc\pm a\sqrt{b^2 + c^2 - a^2})}{(b^2 - a^2)a} = \frac{-ca \mp b\sqrt{b^2 + c^2 - a^2}}{b^2 - a^2}\\
    & = \frac{b^2 + c^2}{ca \mp b\sqrt{b^2 + c^2 - a^2}}.
  \end{align*}
  Therefore,
  \begin{equation}
    \label{eq:solution2}
    m(t) = c \Longleftrightarrow t = \frac{bc \pm a\sqrt{b^2 + c^2 - a^2}}{b^2 - a^2}  \mbox{ and }ca \mp b\sqrt{b^2 + c^2 - a^2}\ge 0.
  \end{equation}
  First we prove the part for $m'(t)$ in case (4) -- (6). If $t = \frac{bc + a\sqrt{b^2 + c^2 - a^2}}{b^2 - a^2} > 0$ is the solution of \eqref{eq:solution2}, then $c - bt = a\sqrt{1 + t^2}$ and
  \begin{align*}
    \sign(m'(t)) & = \sign\lb \frac{at}{\sqrt{1 + t^2}} + b\rb = \sign(at + b\sqrt{1 + t^2})\\
                 & = \sign\lb at + \frac{b(c - bt)}{a}\rb = \sign(a) \sign((a^2 - b^2)t + bc)\\
    & = \sign(a)\sign(-a\sqrt{b^2 + c^2 - a^2}) = -1.
  \end{align*}
  Similarly, if $t = \frac{bc - a\sqrt{b^2 + c^2 - a^2}}{b^2 - a^2} > 0$ is the solution of \eqref{eq:solution2}, then $\sign(m'(t)) = 1$. This derivation covers case (4) -- (6).

  Next we compute the solutions \eqref{eq:solution2} in case (4) -- (6). We consider each case separately.
  \begin{itemize}
  \item If $a < |b|$,
  \begin{align*}
    (b^2 - a^2)(b^2 + c^2) > 0&\Longrightarrow b^2(b^2 + c^2 - a^2) > a^2c^2 \Longrightarrow ca - |b|\sqrt{b^2 + c^2 - a^2} < 0\\
                              & \Longrightarrow ca - \sign(b)b\sqrt{b^2 + c^2 - a^2} < 0\\
    & \Longrightarrow \frac{b^2 + c^2}{ca - \sign(b)b\sqrt{b^2 + c^2 - a^2}} < 0.
  \end{align*}
  This means $t = \frac{bc + a\sign(b)\sqrt{b^2 + c^2 - a^2}}{b^2 - a^2}$ does not satisfy the second condition of \eqref{eq:solution2}. By the assumption that $m(t) = c$ has at least one positive solution, $t = \frac{bc - a\sign(b)\sqrt{b^2 + c^2 - a^2}}{b^2 - a^2}$ must be the only positive solution. 
\item If $a > b > 0$, $\frac{bc + a\sqrt{b^2 + c^2 - a^2}}{b^2 - a^2} < 0$ and thus cannot be the positive solution of \eqref{eq:solution2}. By the assumption that $m(t) = c$ has at least one positive solution and \eqref{eq:solution2}, $t  = \frac{bc - a\sqrt{b^2 + c^2 - a^2}}{b^2 - a^2} = \frac{bc - a\sign(b)\sqrt{b^2 + c^2 - a^2}}{b^2 - a^2}$ must be the only positive solution. 
\item If $c > a > -b > 0$,
  \[\frac{bc + a\sqrt{b^2 + c^2 - a^2}}{b^2 - a^2} = \frac{c^2 - a^2}{bc - a\sqrt{b^2 + c^2 - a^2}} < 0.\]
  Similar to the last case, $t  = \frac{bc - a\sqrt{b^2 + c^2 - a^2}}{b^2 - a^2}$ must be the only positive solution.
\item If $a > -b > 0$ and $a \ge c$, we can easily verify that $\frac{bc \pm a\sqrt{b^2 + c^2 - a^2}}{b^2 - a^2}$ both satisfy \eqref{eq:solution2}. 
\end{itemize}

Finally, it is not hard to see that the only scenario that is not covered by case (1) -- (6) is that $a < 0$ and $b\le |a|$. In this case, $m(t) < a(\sqrt{1 + t^2} - t) < 0$ and thus $m(t) = c$ cannot have any solution.
\end{proof}



\section{Full simulation results}\label{app:full-simulations}

This section includes a fuller description of our simulation results. All experimental settings are listed below.
\begin{itemize}
\item Multivariate z-statistics drawn from $N(\mu, \Sigma)$ with $m = 1000$ and $10$ non-nulls in the top of the list with equal mean. We consider three types of covariance matrices:
  \begin{itemize}
  \item the AR$(0.8)$ process, i.e. $\Sigma_{ij} = (0.8)^{|i - j|}$. In this case, one-sided p-values are CPRD while two-sided p-values are not.
  \item the AR$(-0.8)$ process, i.e. $\Sigma_{ij} = (-0.8)^{|i - j|}$. In this case, neither one- nor two-sided p-values are not CPRD.
  \item the block dependent structure with $\Sigma_{ii} = 1$ and $\Sigma_{ij} = 0.5 \cdot 1(\lceil i / 20\rceil = \lceil j / 20\rceil)$. In this case, one-sided p-values are CPRD while two-sided p-values are not.
  \end{itemize}
\item Multivariate t-statistics with degree-of-freedom $n - d\in \{5, 50\}$. The null distribution is heavy-tailed for the former and is light-tailed for the latter. The z-statistics are drawn from $N(\mu, \Sigma)$ with $m = 100$ when $n - d = 5$ and $m = 1000$ when $n - d = 50$, and with $10$ non-nulls in the top of the list with equal mean. We consider three types of covariance matrices:
  \begin{itemize}
  \item the AR$(0.8)$ process, i.e. $\Sigma_{ij} = (0.8)^{|i - j|}$. In this case, one-sided p-values are CPRD while two-sided p-values are not.
  \item uncorrelated structure, i.e. $\Sigma_{ij} = 1(i\not = j)$. In this case, both one- and two-sided p-values are CPRD.
  \item the block dependent structure with $\Sigma_{ii} = 1$ and $\Sigma_{ij} = 0.5 \cdot 1(\lceil i / 20\rceil = \lceil j / 20\rceil)$. In this case, one-sided p-values are CPRD while two-sided p-values are not.
  \end{itemize}
\item Fixed-design homoscedastic Gaussian linear models with $n = 3000$ and $d = 1000$. The first $10$ coefficients are set to be non-zero with a equal size and the intercept is set to be $0$. As with Section \ref{sec:simulations}, we only consider two-sided testing with $\alpha\in \{0.05, 0.2\}$ for a fair comparison with the fixed-X knockoffs. The design matrix is generated as a realization of a random Gaussian matrix with i.i.d. rows drawn from $N(0, \Sigma)$, where $\Sigma$ takes one of the three form as in the multivariate-t case. The p-values are not CPRD in any case.
\item Multiple comparisons to control for Gaussian outcomes with $m = 100$ groups with either $3$ or $30$ replicates in each group. In both cases, we set the first $30$ groups as non-nulls with equal effect sizes. The p-values are not CPRD in any case.
\end{itemize}
For all above settings, the signal strength is tuned such that $\BH(0.05)$ has approximately $30\%$ power through a separate Monte-Carlo simulation. As with Section \ref{sec:simulations}, the \FDR ~and power are estimated on $1000$ independent simulations for each setting. Apart from the methods considered in Section \ref{sec:simulations}, we also include their sparse counterparts with threshold collection defined in \eqref{eq:sparse_dSU}, denoted by s-BH, s-dBH, s-dBH$^2$, s-BY, s-dBY, and s-dBY$^2$. For all CPRD cases we take $\gamma = 1$ in dBH, dBH$^2$, s-dBH and s-dBH$^2$, while in all other cases $\gamma$ is set to be $0.9$. For the knockoffs method, we generate the knockoff matrix using both the equicorrelated and semi-definite programming-based constructions. As in Section \ref{sec:simulations}, the intercept term is not included for the linear models but is included for the multiple comparisons to control for knockoffs. All experimental results are qualitatively the same as those in Section \ref{sec:simulations}. 

\newpage
\subsection{Testing on multivariate z-statistics}
\begin{figure}[H]
  \centering
  \begin{subfigure}[t]{0.48\textwidth}
    \includegraphics[width = 0.9\textwidth]{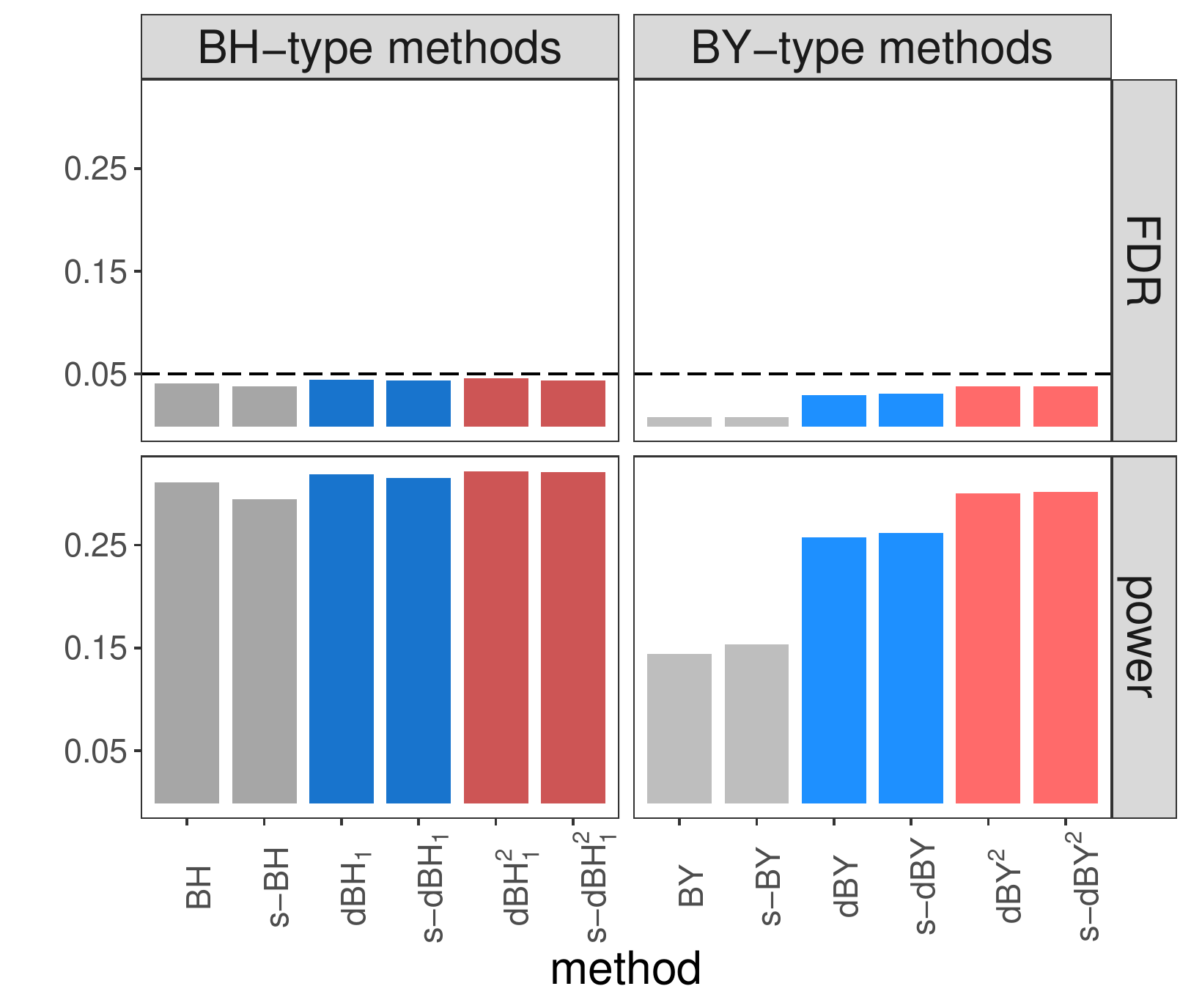}
    \caption{One-sided testing}
  \end{subfigure}
  \begin{subfigure}[t]{0.48\textwidth}
    \includegraphics[width = 0.9\textwidth]{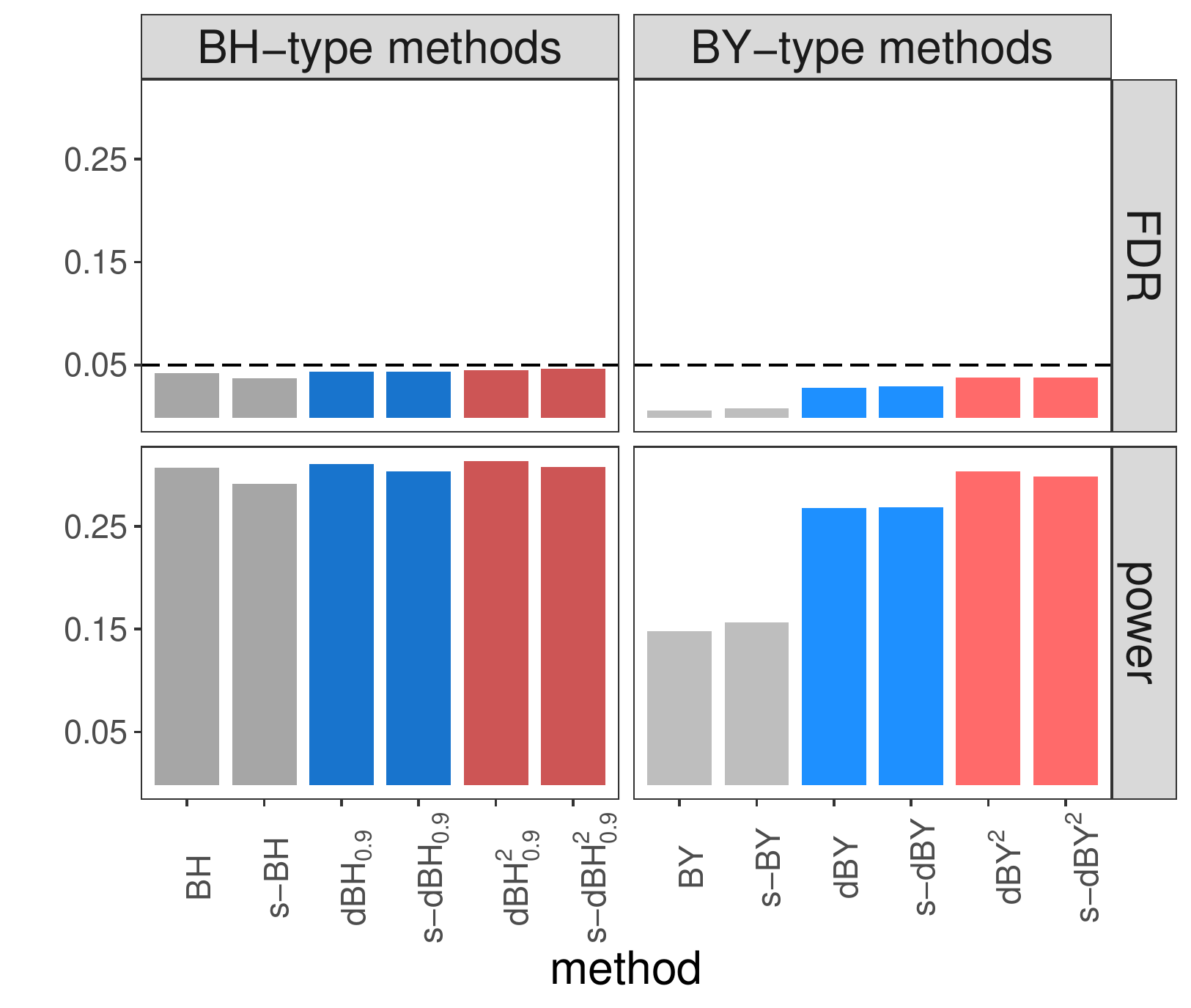}
    \caption{Two-sided testing}
  \end{subfigure}
  \caption{Multivariate z-statistics with AR$(0.8)$ covariance structure.}
\end{figure}

\begin{figure}[H]
  \centering
  \begin{subfigure}[t]{0.48\textwidth}
    \includegraphics[width = 0.9\textwidth]{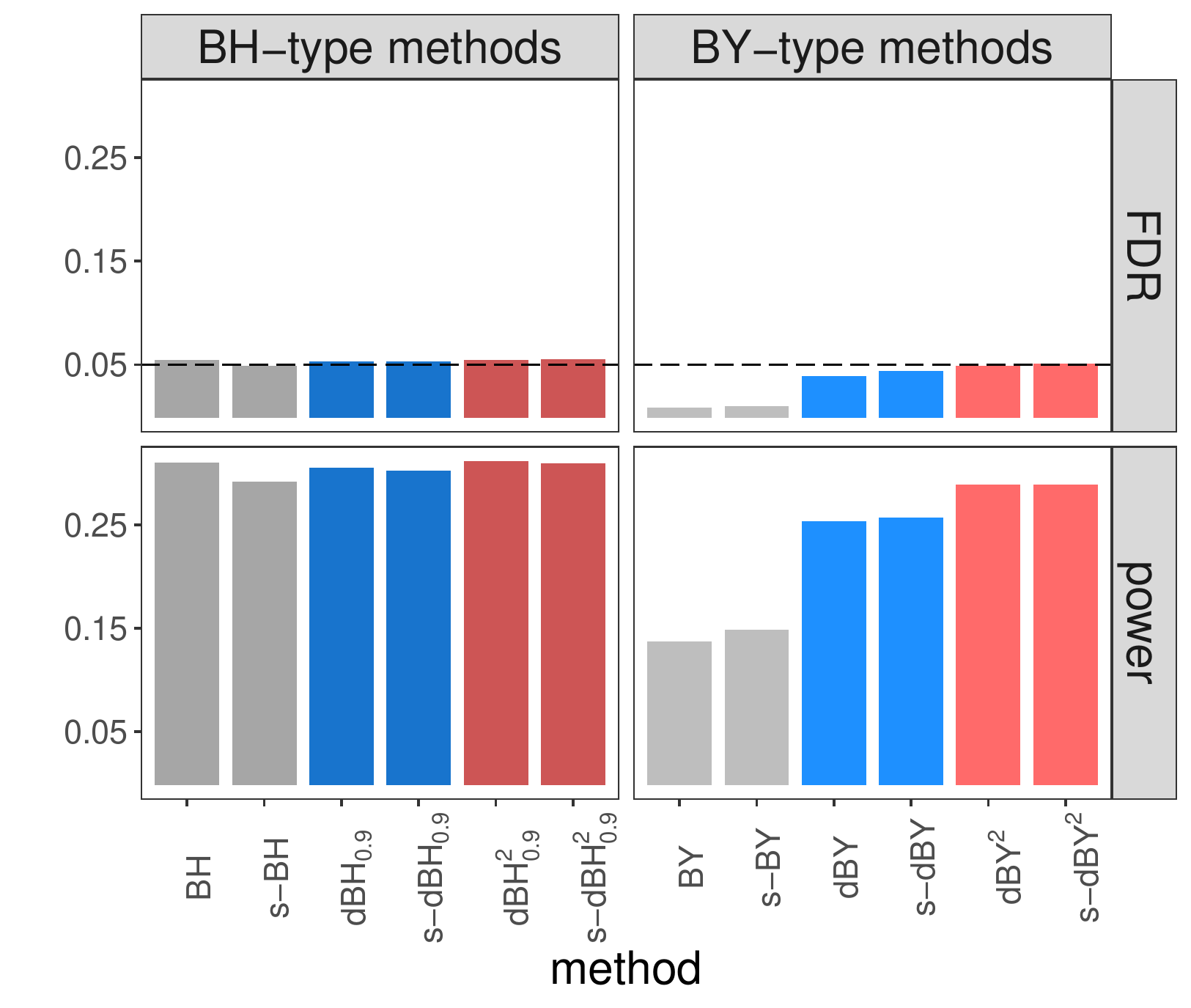}
    \caption{One-sided testing}
  \end{subfigure}
  \begin{subfigure}[t]{0.48\textwidth}
    \includegraphics[width = 0.9\textwidth]{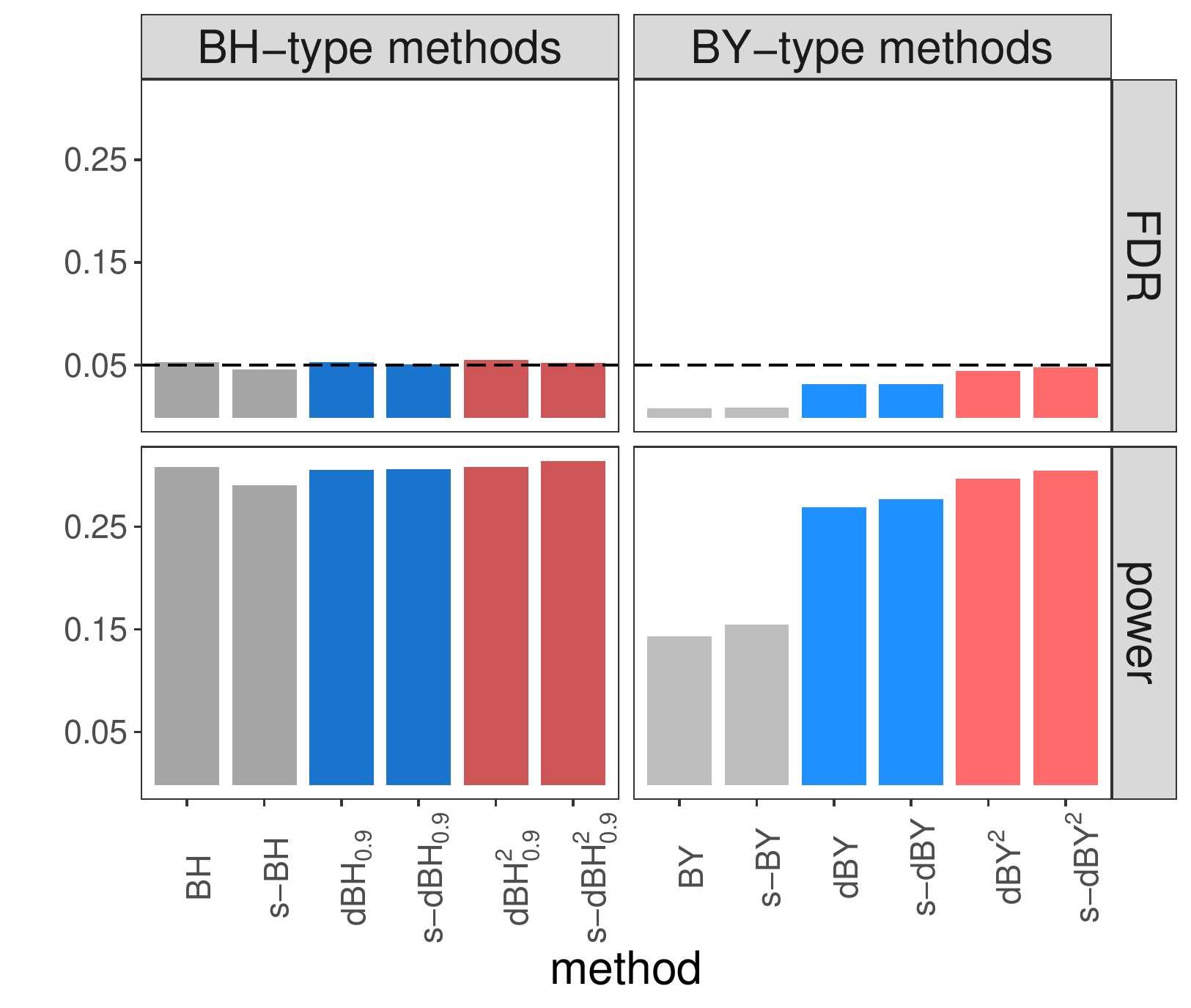}
    \caption{Two-sided testing}
  \end{subfigure}
  \caption{Multivariate z-statistics with AR$(-0.8)$ covariance structure.}
\end{figure}

\begin{figure}[H]
  \centering
  \begin{subfigure}[t]{0.48\textwidth}
    \includegraphics[width = 0.9\textwidth]{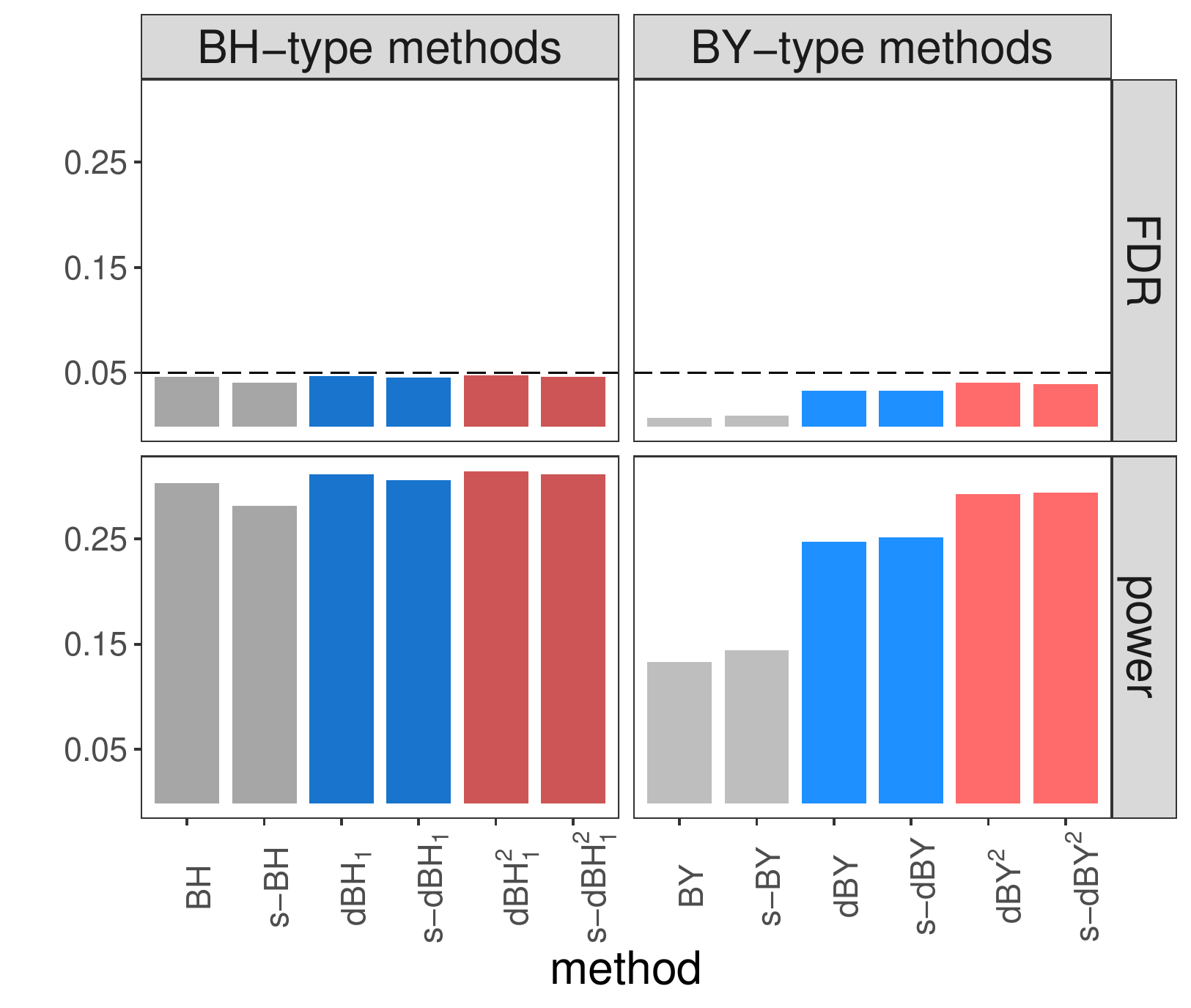}
    \caption{One-sided testing}
  \end{subfigure}
  \begin{subfigure}[t]{0.48\textwidth}
    \includegraphics[width = 0.9\textwidth]{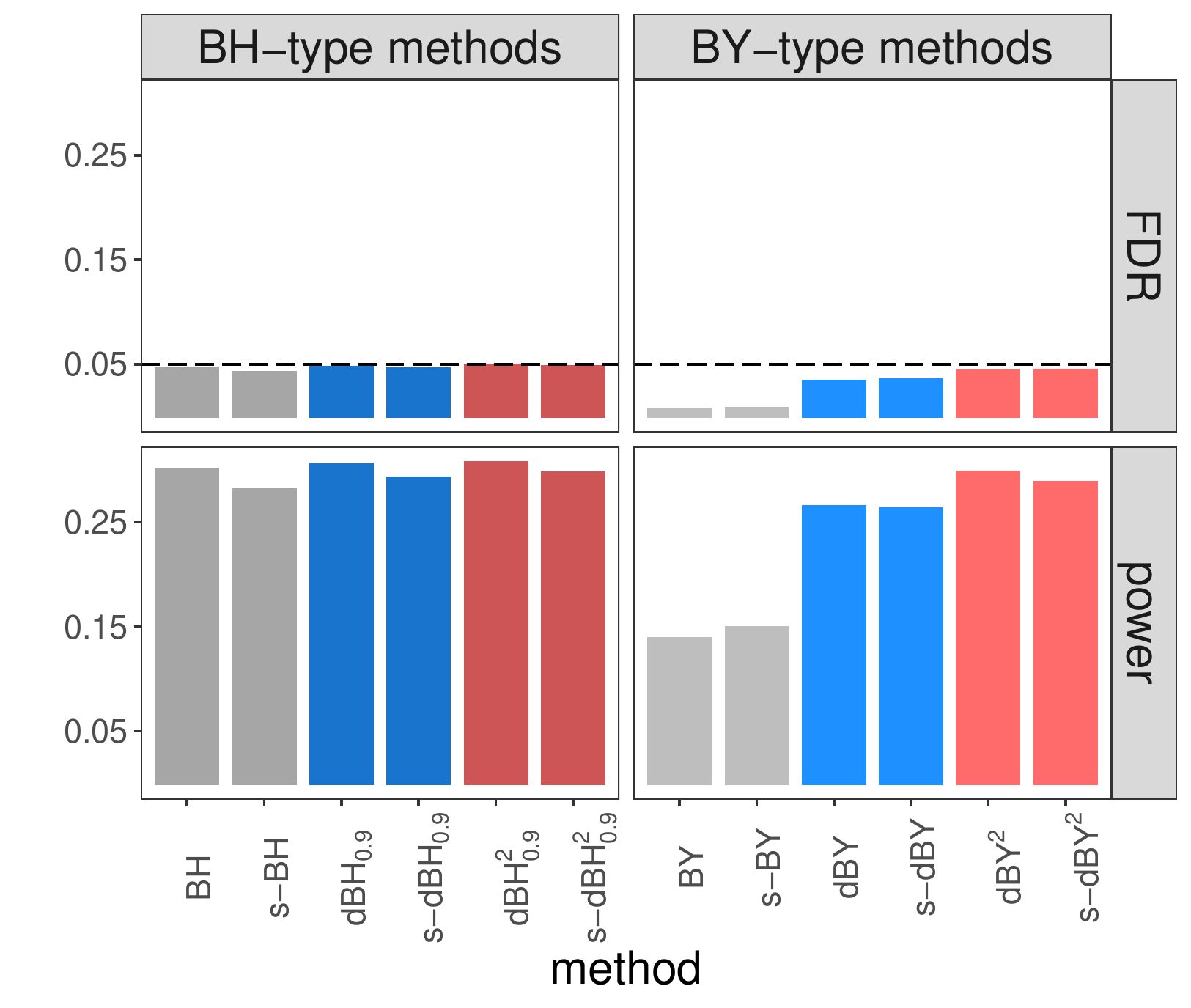}
    \caption{Two-sided testing}
  \end{subfigure}
  \caption{Multivariate z-statistics with block covariance structure.}
\end{figure}

\subsection{Testing on heavy-tailed multivariate t-statistics}
\begin{figure}[H]
  \centering
  \begin{subfigure}[t]{0.48\textwidth}
    \includegraphics[width = 0.9\textwidth]{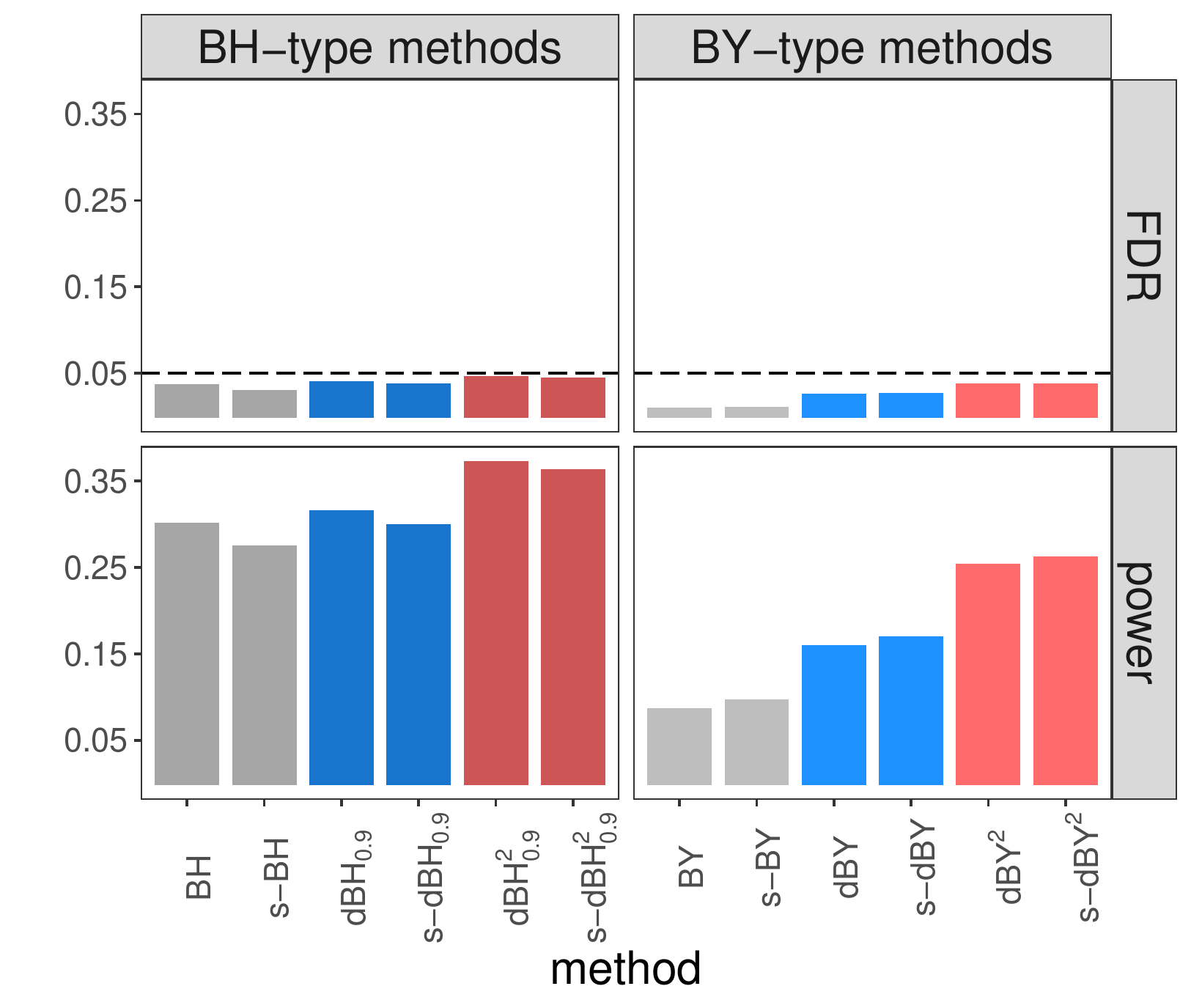}
    \caption{One-sided testing}
  \end{subfigure}
  \begin{subfigure}[t]{0.48\textwidth}
    \includegraphics[width = 0.9\textwidth]{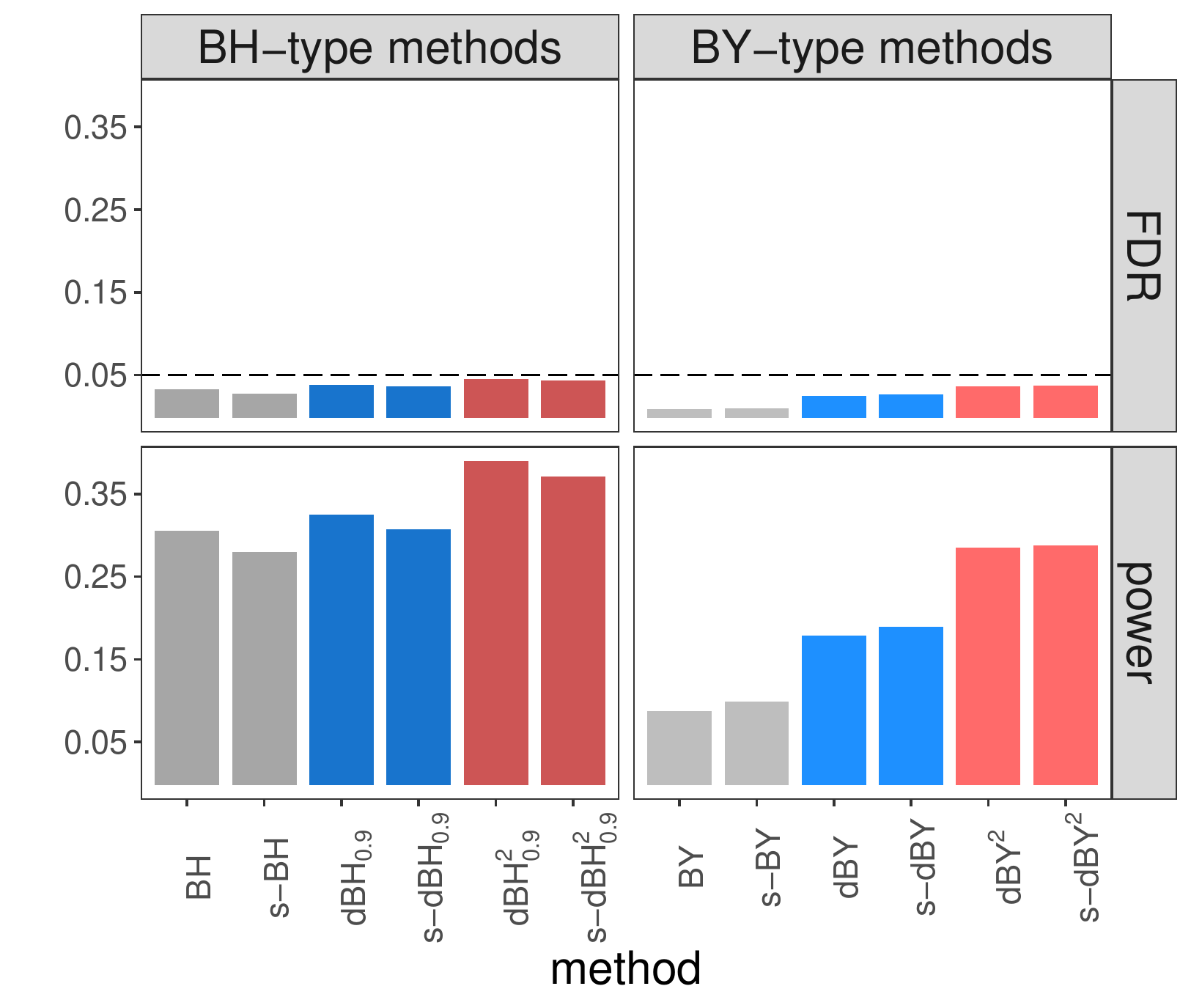}
    \caption{Two-sided testing}
  \end{subfigure}
  \caption{Heavy-tailed multivariate t-statistics with AR$(0.8)$ z-statistics.}
\end{figure}

\begin{figure}[H]
  \centering
  \begin{subfigure}[t]{0.48\textwidth}
    \includegraphics[width = 0.9\textwidth]{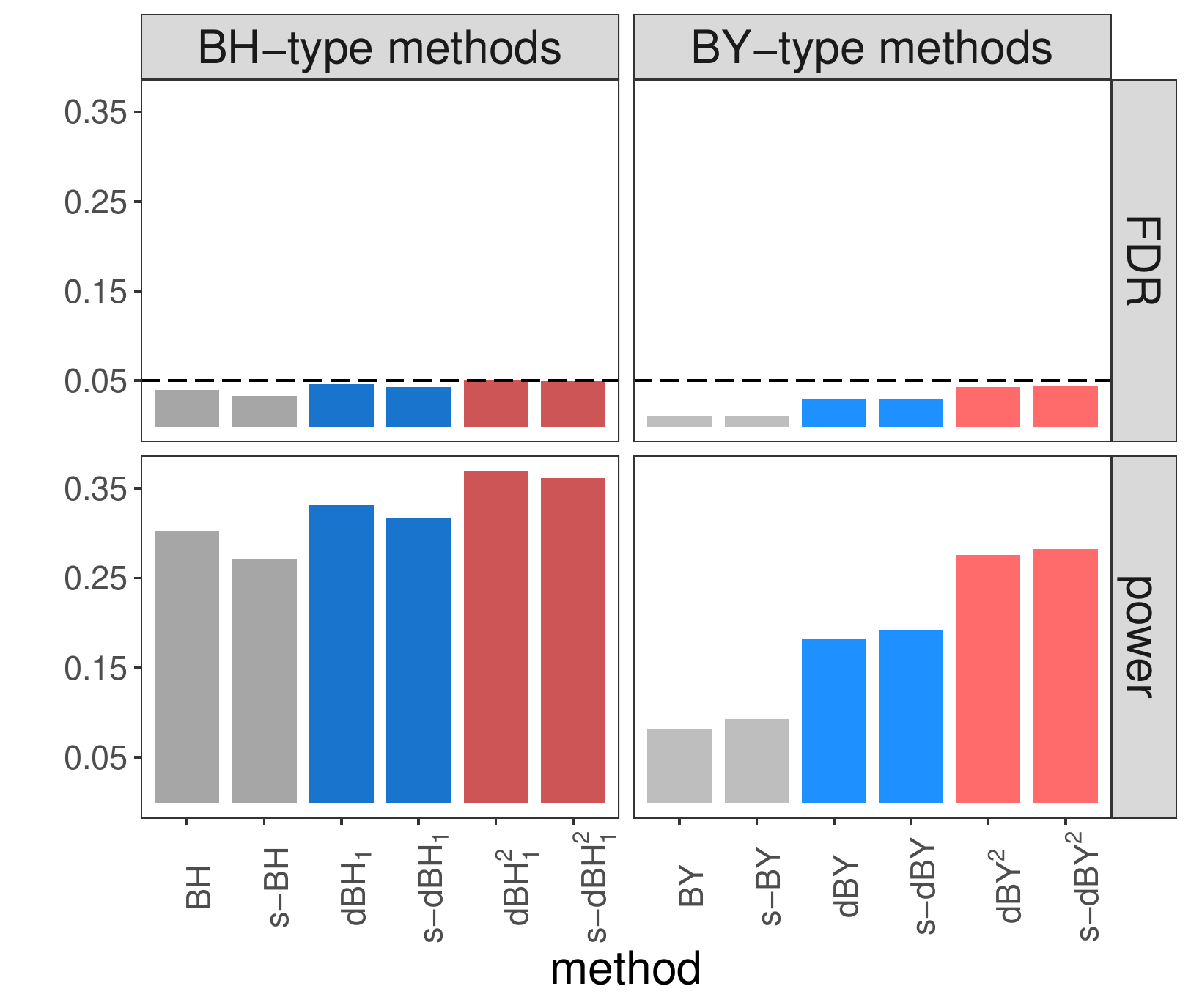}
    \caption{One-sided testing}
  \end{subfigure}
  \begin{subfigure}[t]{0.48\textwidth}
    \includegraphics[width = 0.9\textwidth]{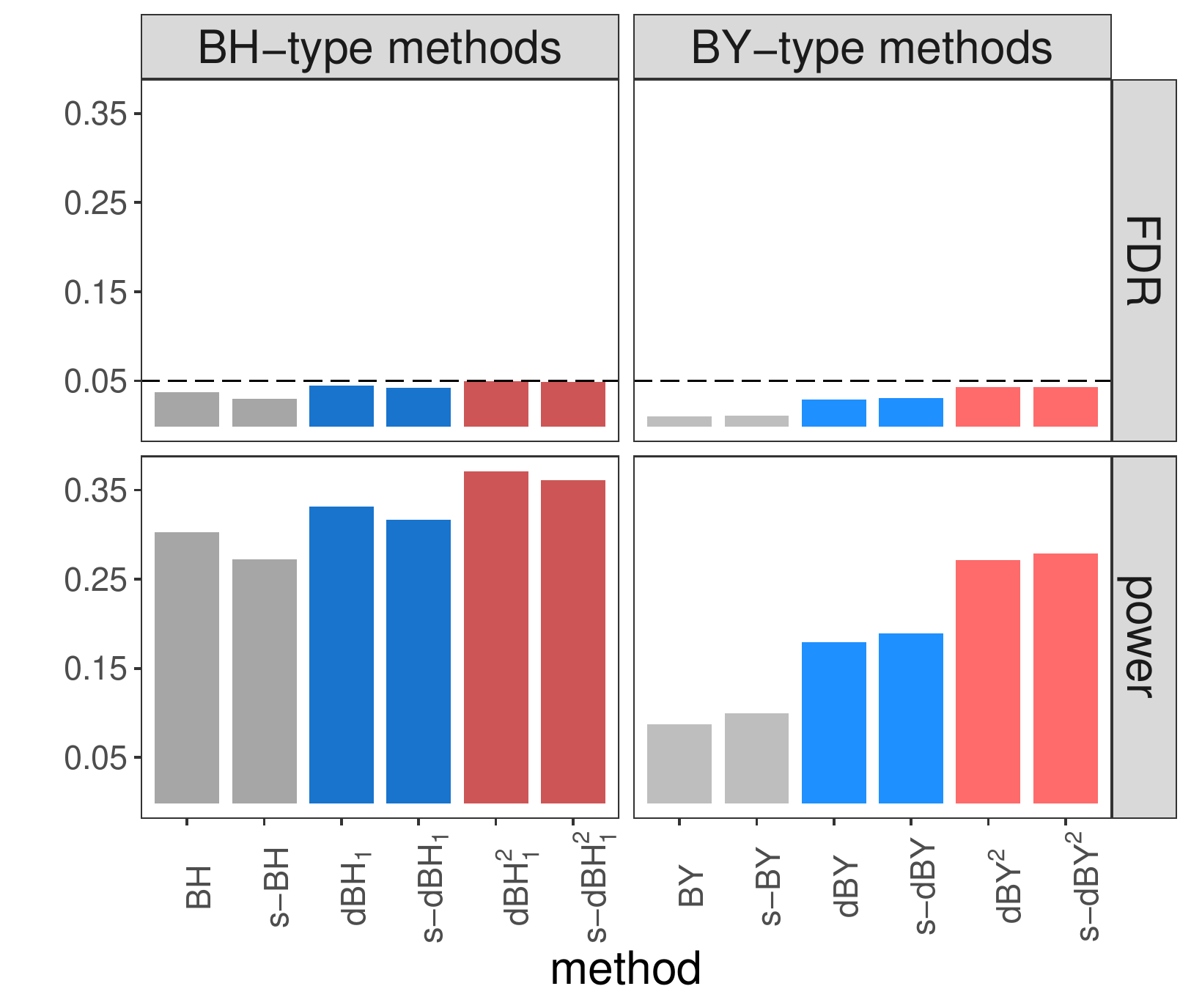}
    \caption{Two-sided testing}
  \end{subfigure}
  \caption{Heavy-tailed multivariate t-statistics with uncorrelated z-statistics. }
\end{figure}

\begin{figure}[H]
  \centering
  \begin{subfigure}[t]{0.48\textwidth}
    \includegraphics[width = 0.9\textwidth]{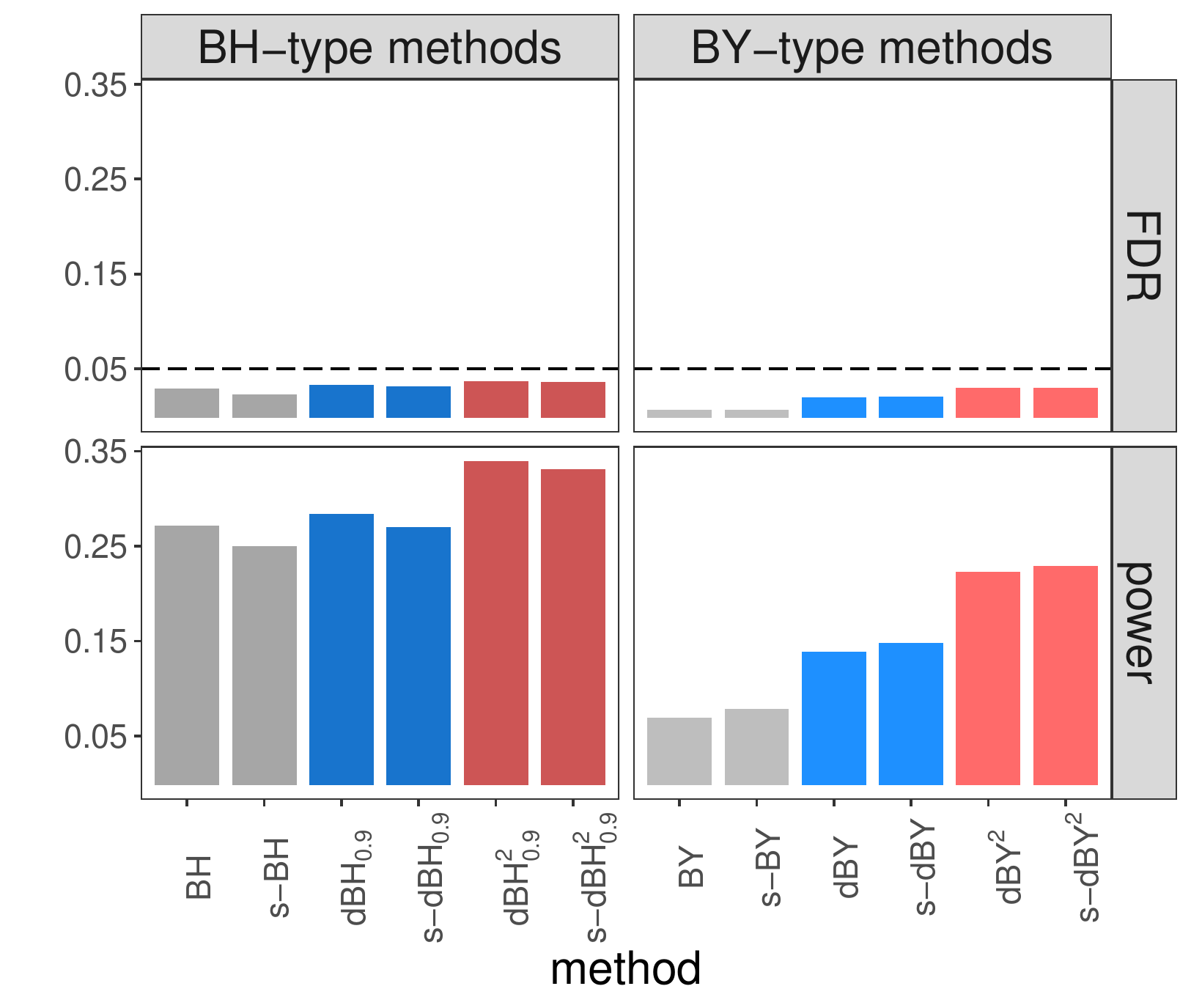}
    \caption{One-sided testing}
  \end{subfigure}
  \begin{subfigure}[t]{0.48\textwidth}
    \includegraphics[width = 0.9\textwidth]{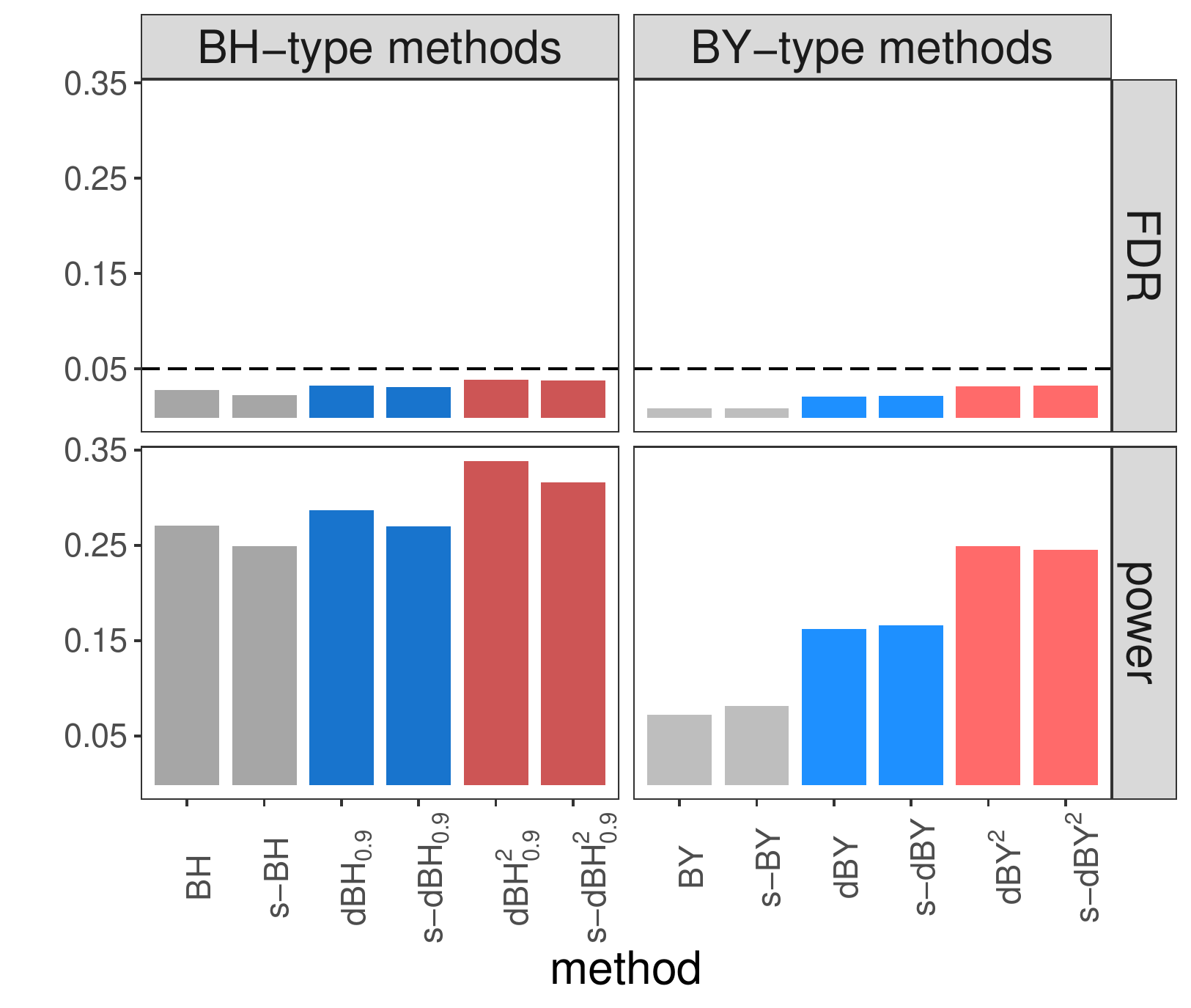}
    \caption{Two-sided testing}
  \end{subfigure}
  \caption{Heavy-tailed multivariate t-statistics with block dependent z-statistics.}
\end{figure}

\subsection{Testing on light-tailed multivariate t-statistics}
\begin{figure}[H]
  \centering
  \begin{subfigure}[t]{0.48\textwidth}
    \includegraphics[width = 0.9\textwidth]{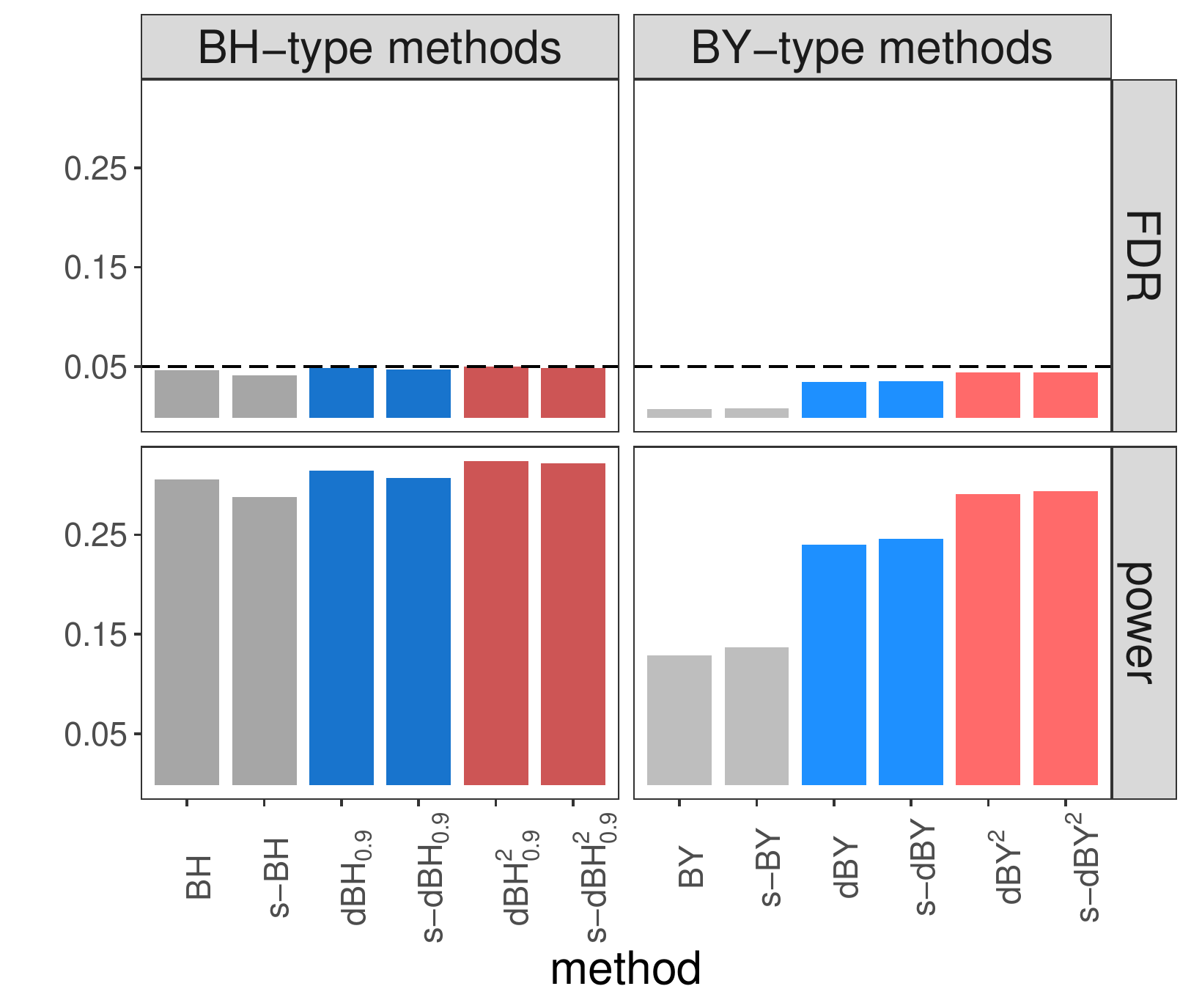}
    \caption{One-sided testing}
  \end{subfigure}
  \begin{subfigure}[t]{0.48\textwidth}
    \includegraphics[width = 0.9\textwidth]{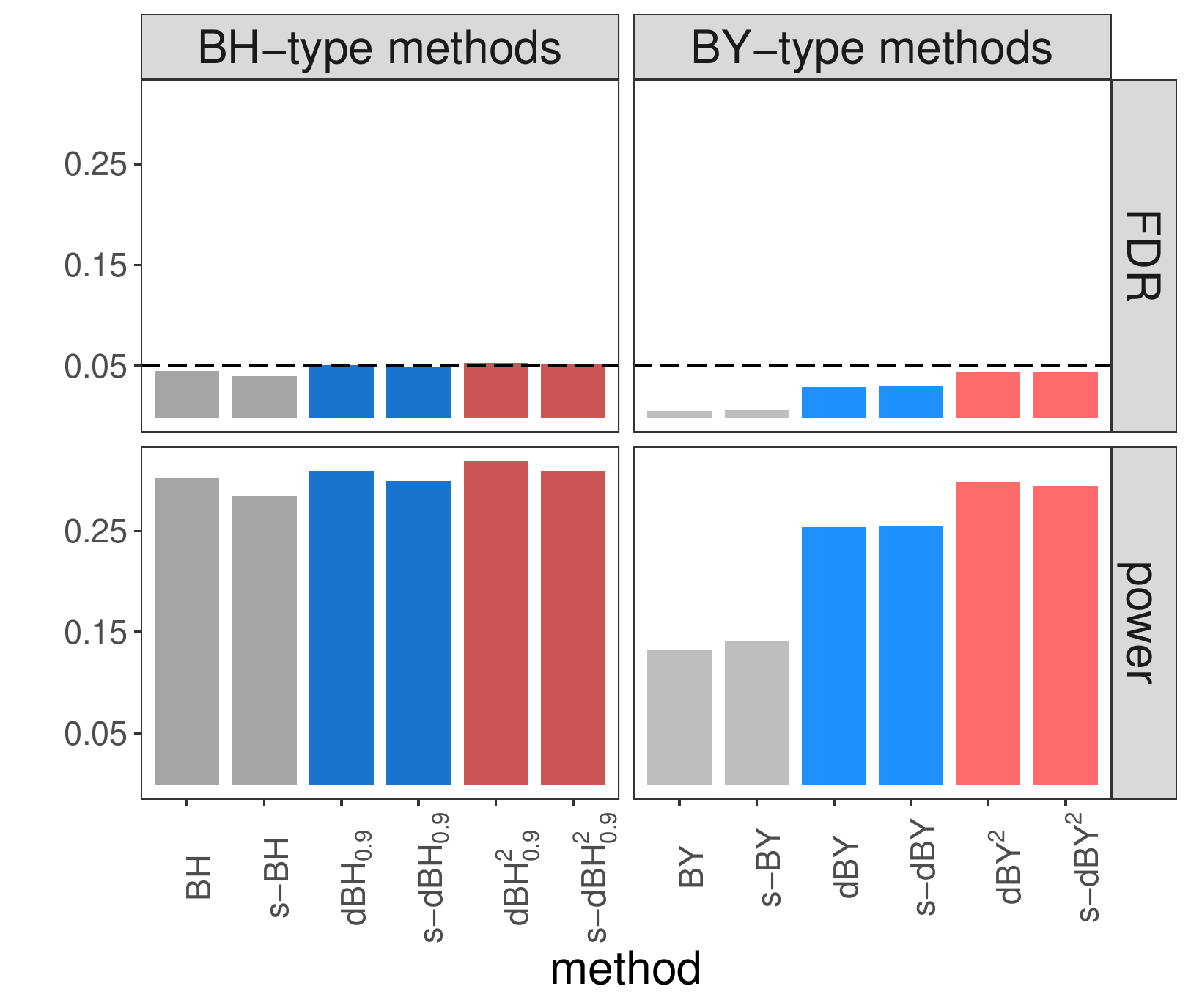}
    \caption{Two-sided testing}
  \end{subfigure}
  \caption{Light-tailed multivariate t-statistics with AR$(0.8)$ z-statistics}
\end{figure}

\begin{figure}[H]
  \centering
  \begin{subfigure}[t]{0.48\textwidth}
    \includegraphics[width = 0.9\textwidth]{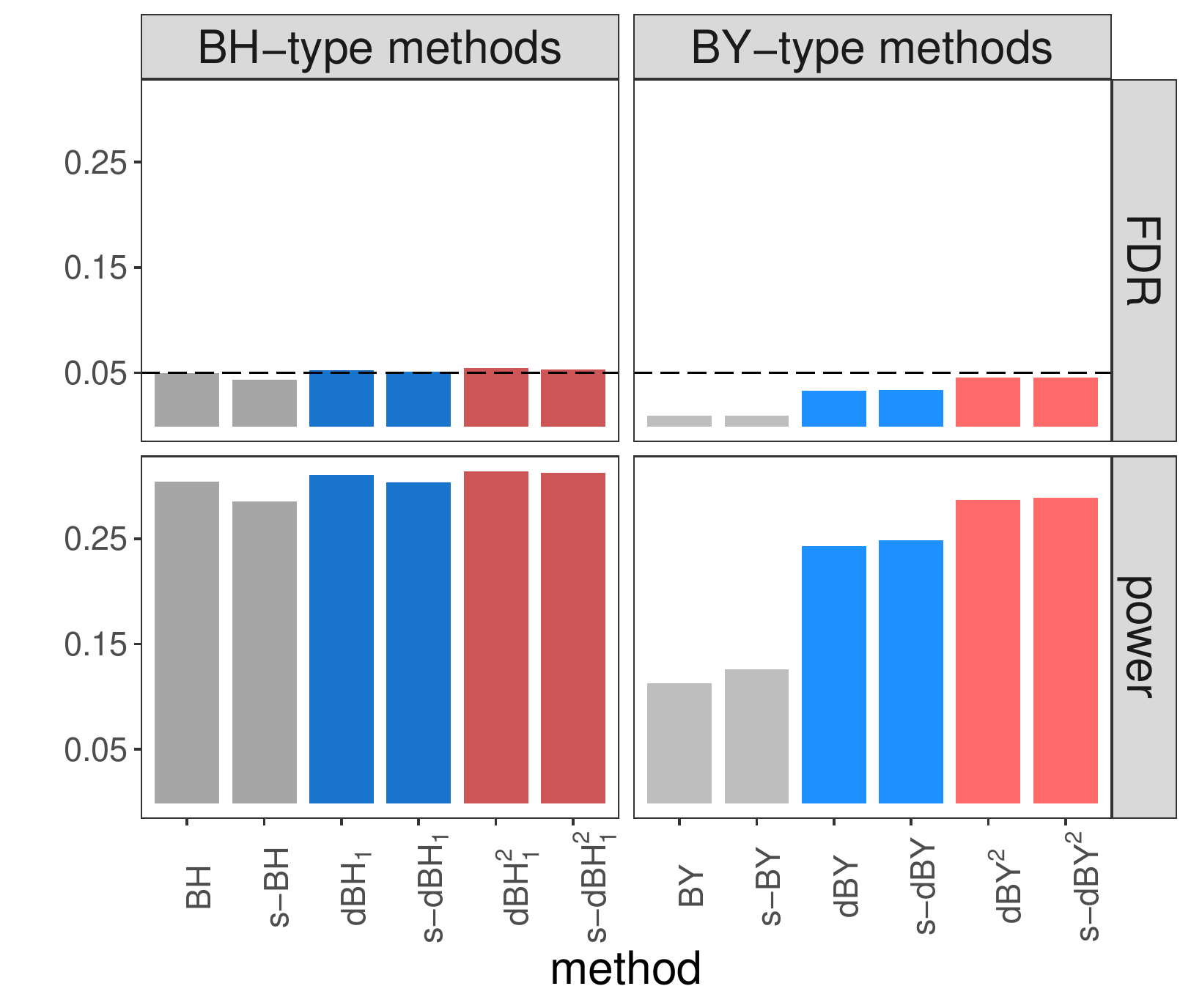}
    \caption{One-sided testing}
  \end{subfigure}
  \begin{subfigure}[t]{0.48\textwidth}
    \includegraphics[width = 0.9\textwidth]{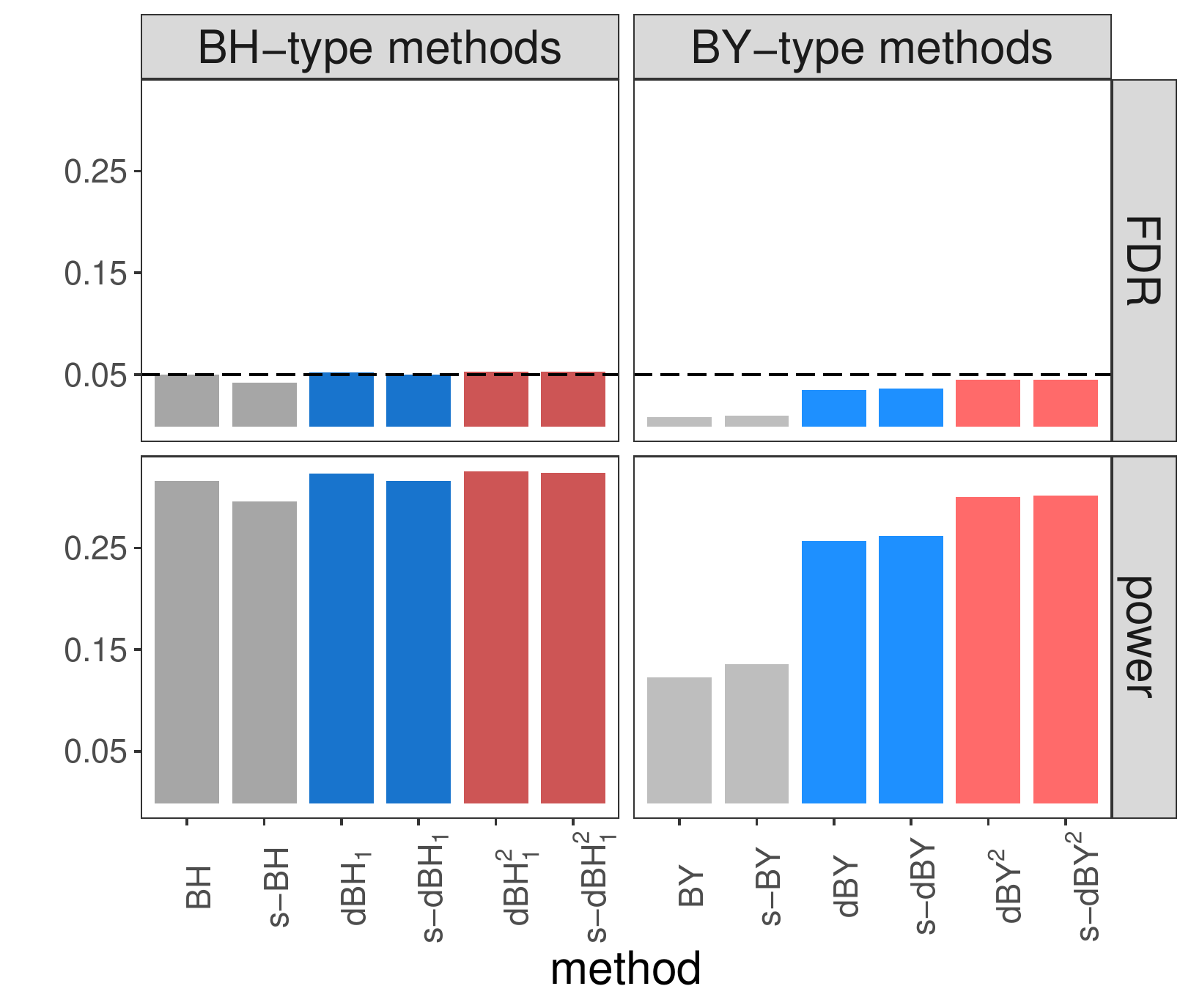}
    \caption{Two-sided testing}
  \end{subfigure}
  \caption{Light-tailed multivariate t-statistics with uncorrelated z-statistics}
\end{figure}

\begin{figure}[H]
  \centering
  \begin{subfigure}[t]{0.48\textwidth}
    \includegraphics[width = 0.9\textwidth]{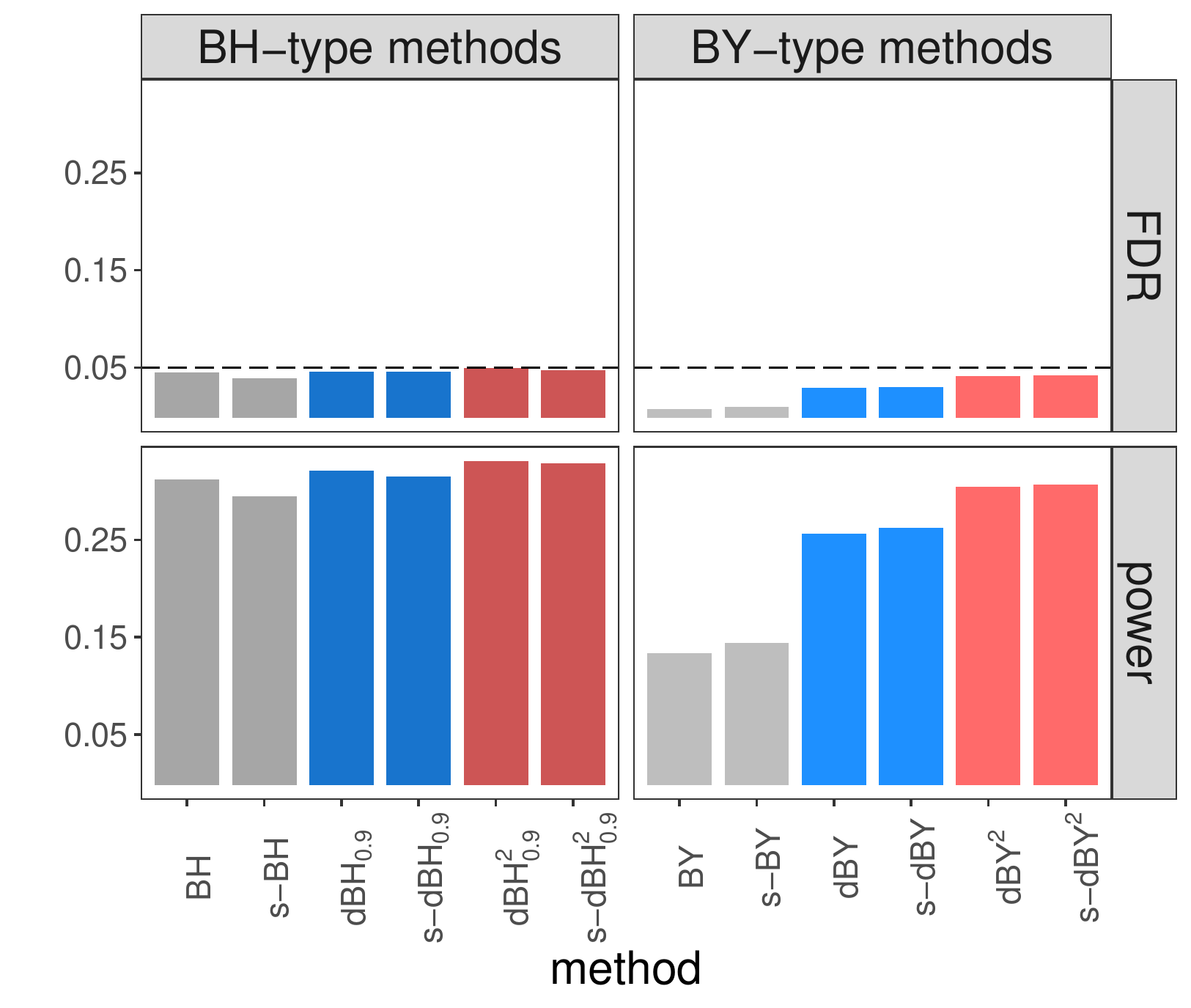}
    \caption{One-sided testing}
  \end{subfigure}
  \begin{subfigure}[t]{0.48\textwidth}
    \includegraphics[width = 0.9\textwidth]{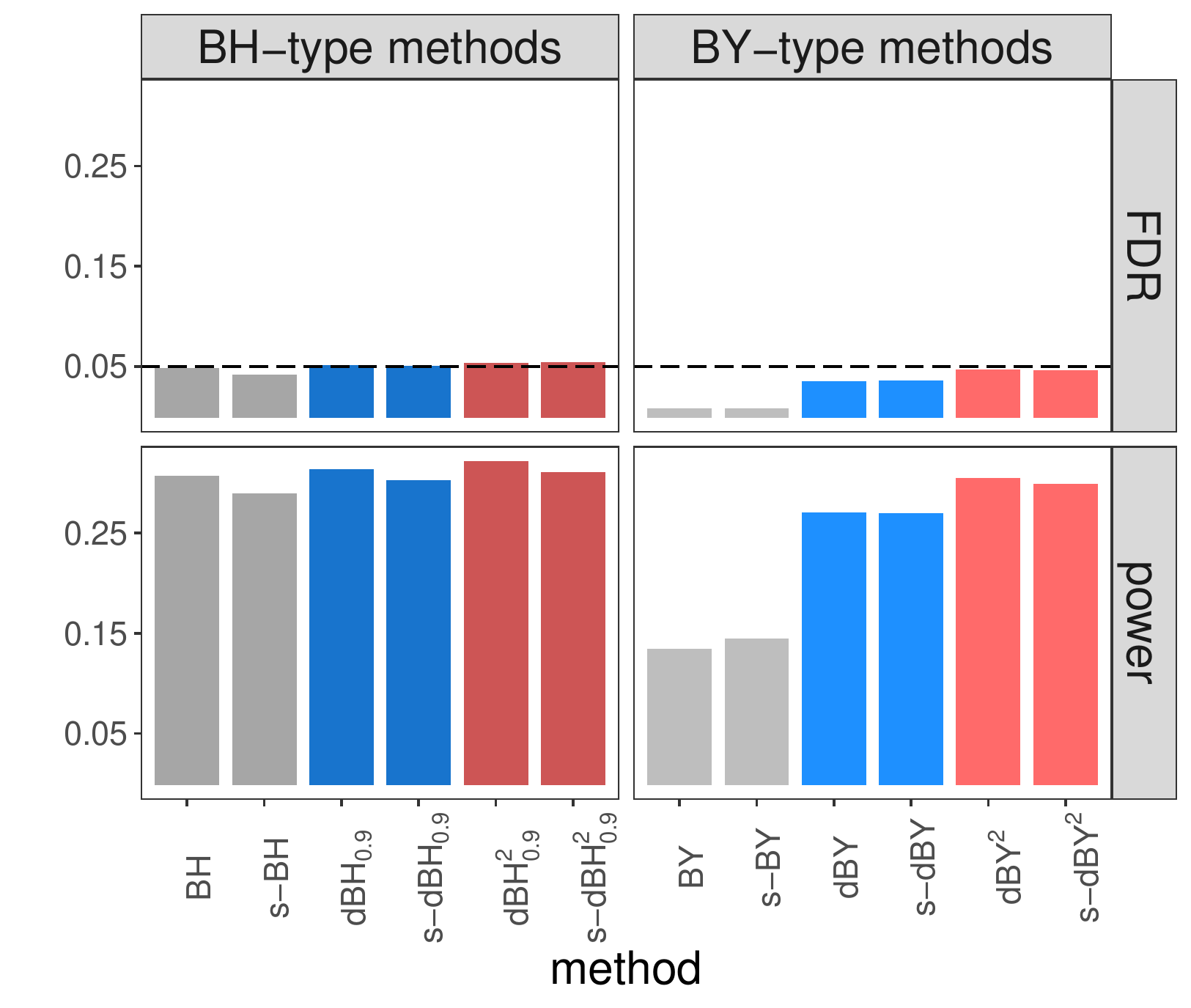}
    \caption{Two-sided testing}
  \end{subfigure}
  \caption{Light-tailed multivariate t-statistics with block dependent z-statistics}
\end{figure}

\subsection{Testing on fixed-design homoscedastic Gaussian linear models}
\begin{figure}[H]
  \centering
  \begin{subfigure}[t]{0.48\textwidth}
    \includegraphics[width = 0.9\textwidth]{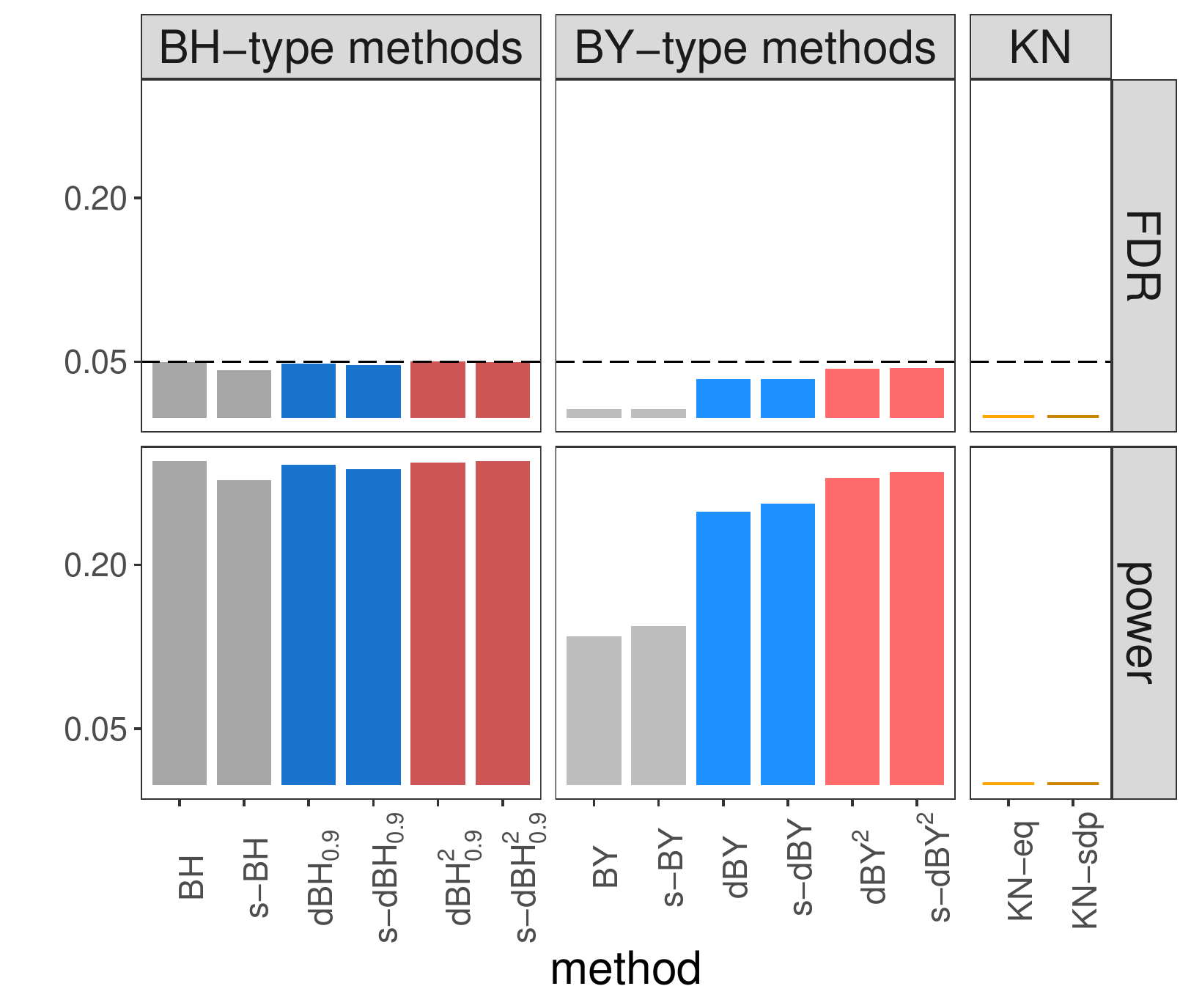}
    \caption{$\alpha = 0.05$}
  \end{subfigure}
  \begin{subfigure}[t]{0.48\textwidth}
    \includegraphics[width = 0.9\textwidth]{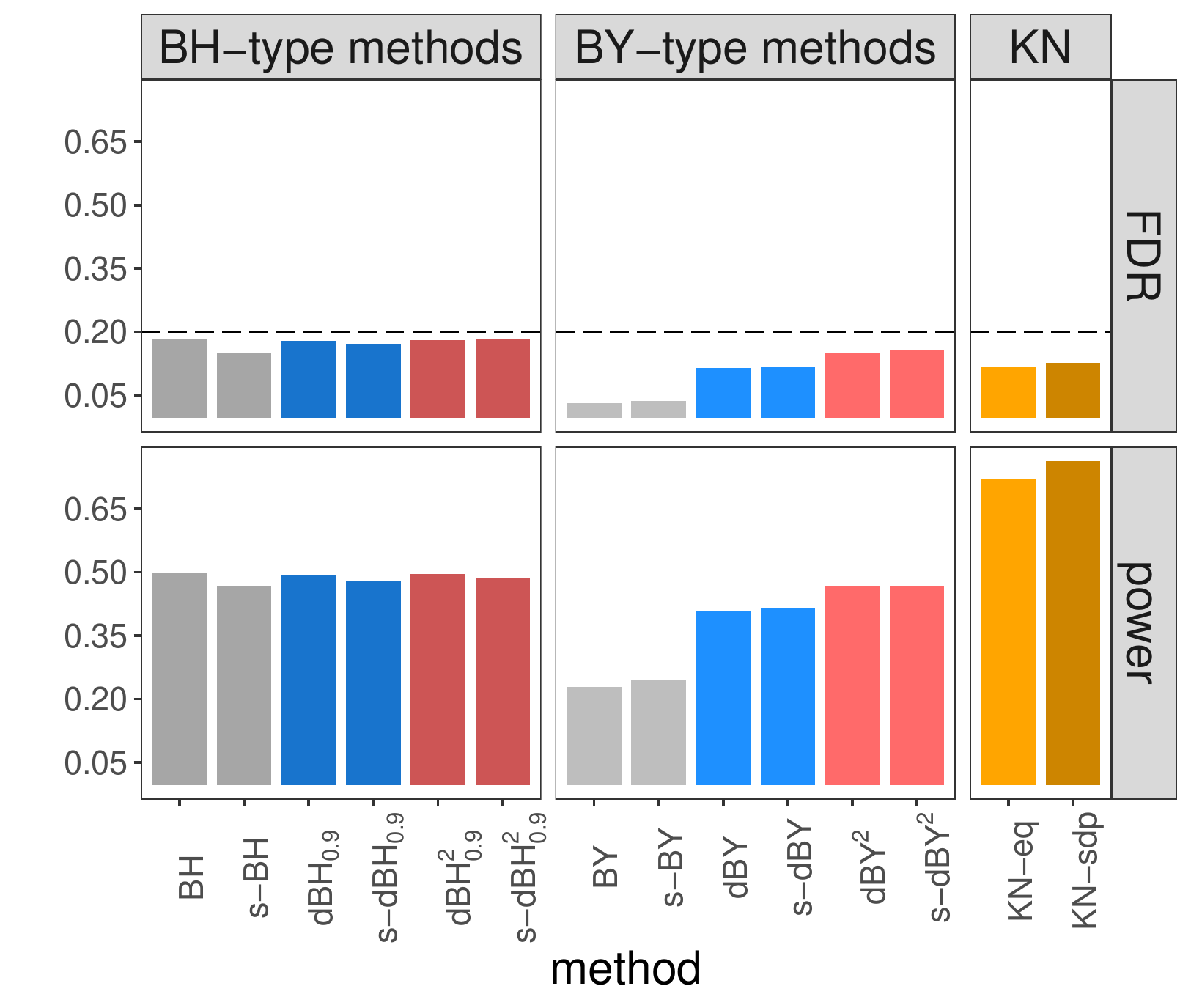}
    \caption{$\alpha = 0.2$}
  \end{subfigure}
  \caption{$X$ as a realization of a random Gaussian matrix with AR$(0.8)$ rows}
\end{figure}

\begin{figure}[H]
  \centering
  \begin{subfigure}[t]{0.48\textwidth}
    \includegraphics[width = 0.9\textwidth]{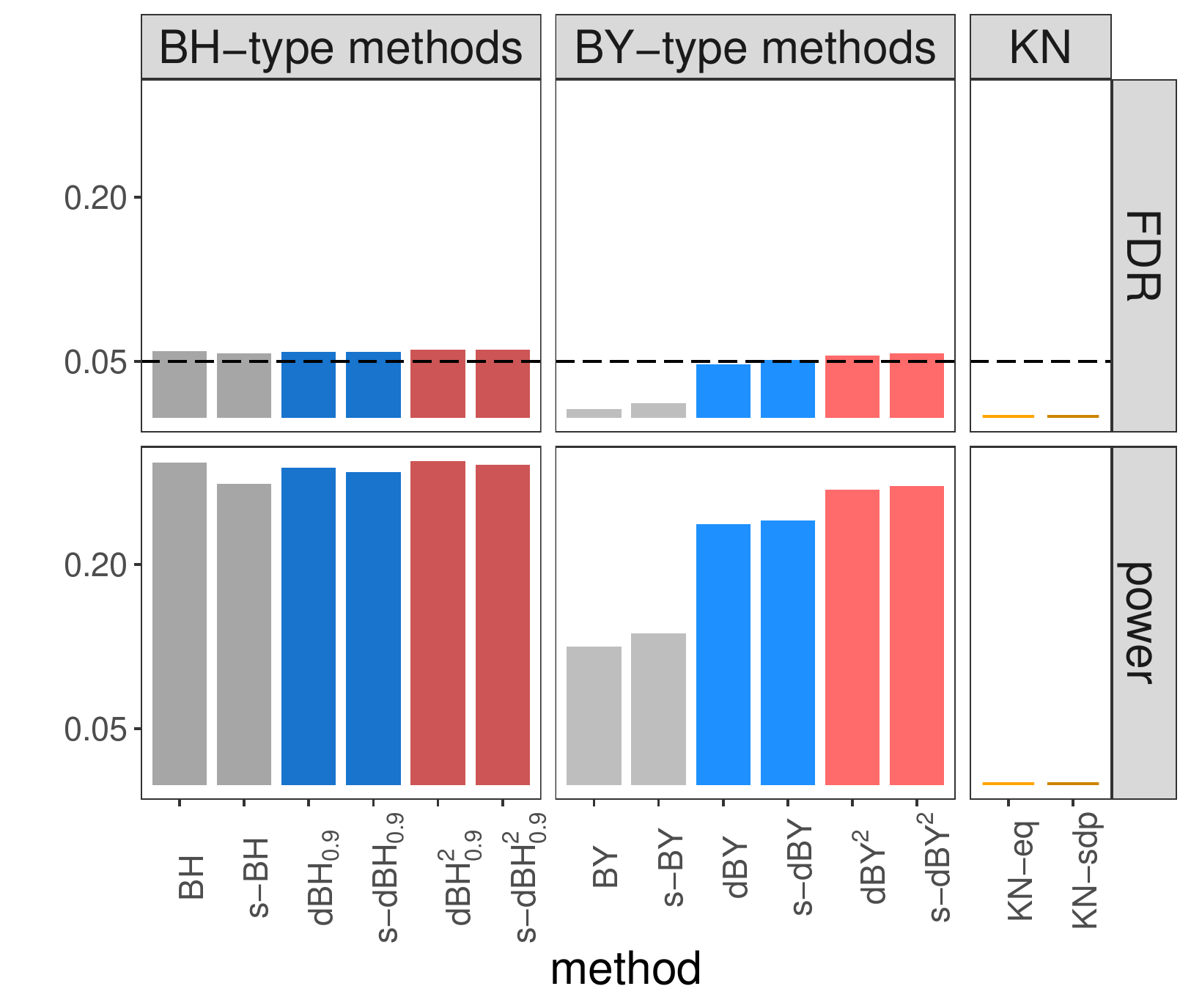}
    \caption{$\alpha = 0.05$}
  \end{subfigure}
  \begin{subfigure}[t]{0.48\textwidth}
    \includegraphics[width = 0.9\textwidth]{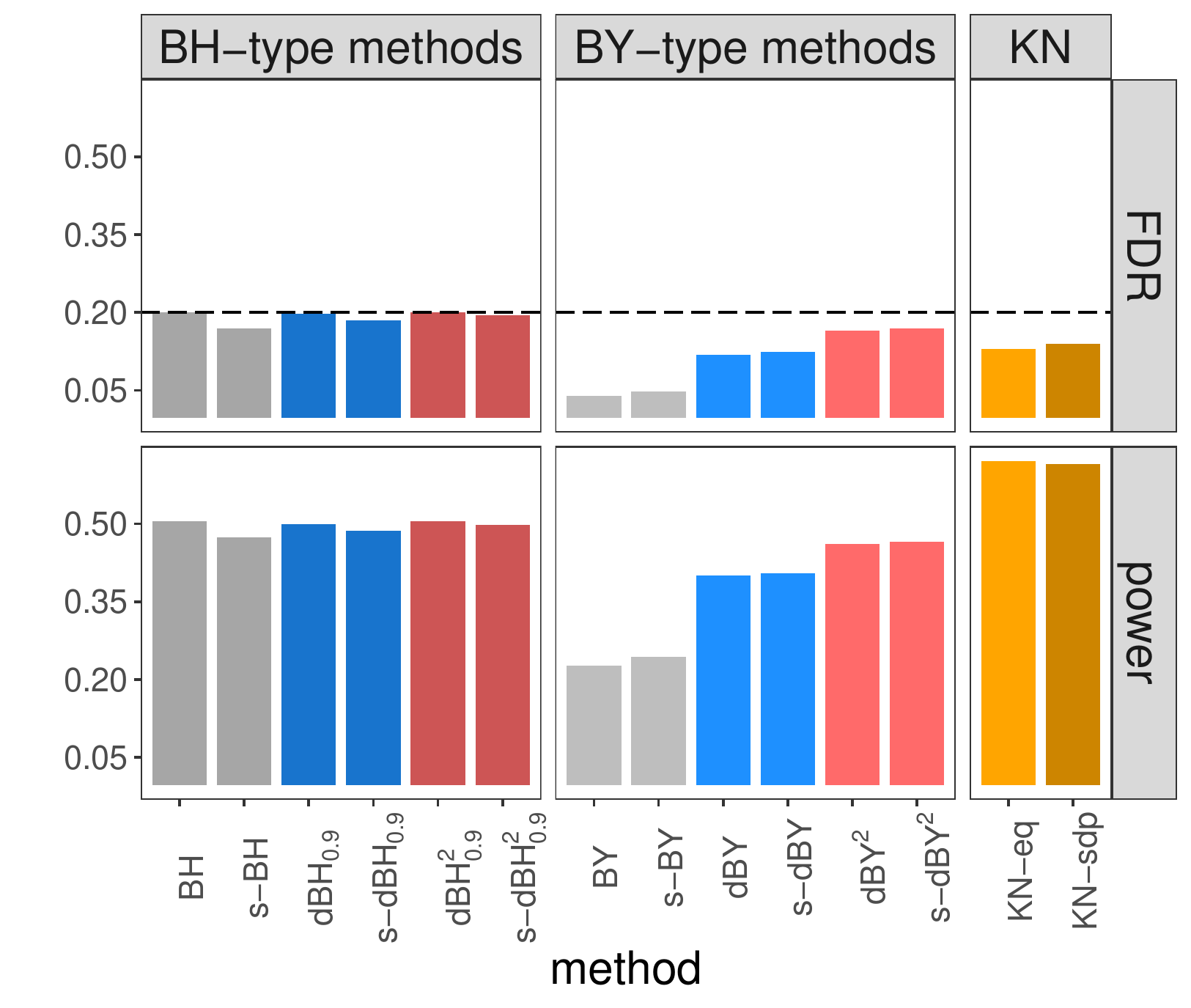}
    \caption{$\alpha = 0.2$}
  \end{subfigure}
  \caption{$X$ as a realization of a random Gaussian matrix with i.i.d. entries}
\end{figure}

\begin{figure}[H]
  \centering
  \begin{subfigure}[t]{0.48\textwidth}
    \includegraphics[width = 0.9\textwidth]{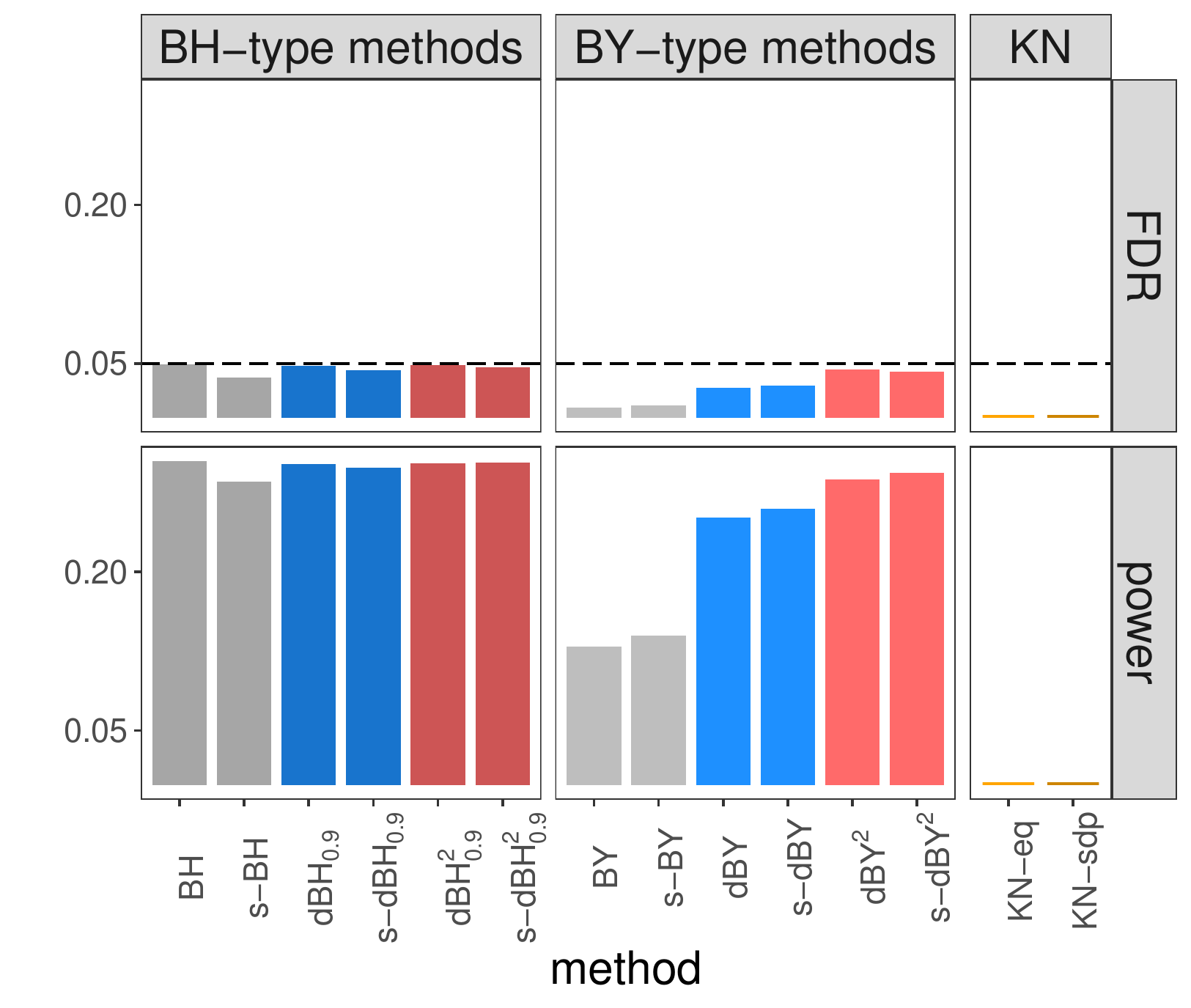}
    \caption{$\alpha = 0.05$}
  \end{subfigure}
  \begin{subfigure}[t]{0.48\textwidth}
    \includegraphics[width = 0.9\textwidth]{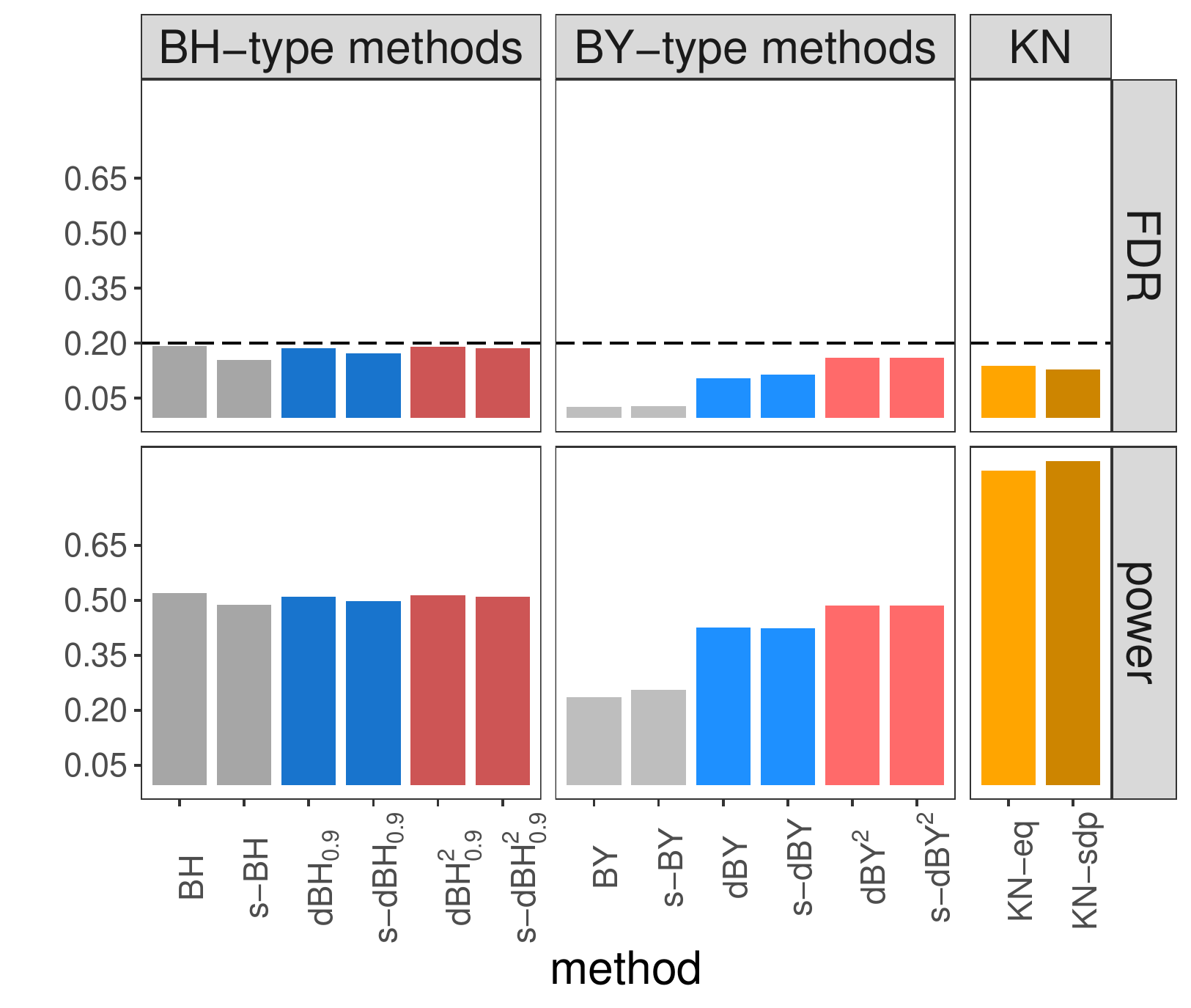}
    \caption{$\alpha = 0.2$}
  \end{subfigure}
  \caption{$X$ as a realization of a random Gaussian matrix with block dependent rows}
\end{figure}

\subsection{Multiple comparisons to control for Gaussian outcomes}
\begin{figure}[H]
  \centering
  \begin{subfigure}[t]{0.48\textwidth}
    \includegraphics[width = 0.9\textwidth]{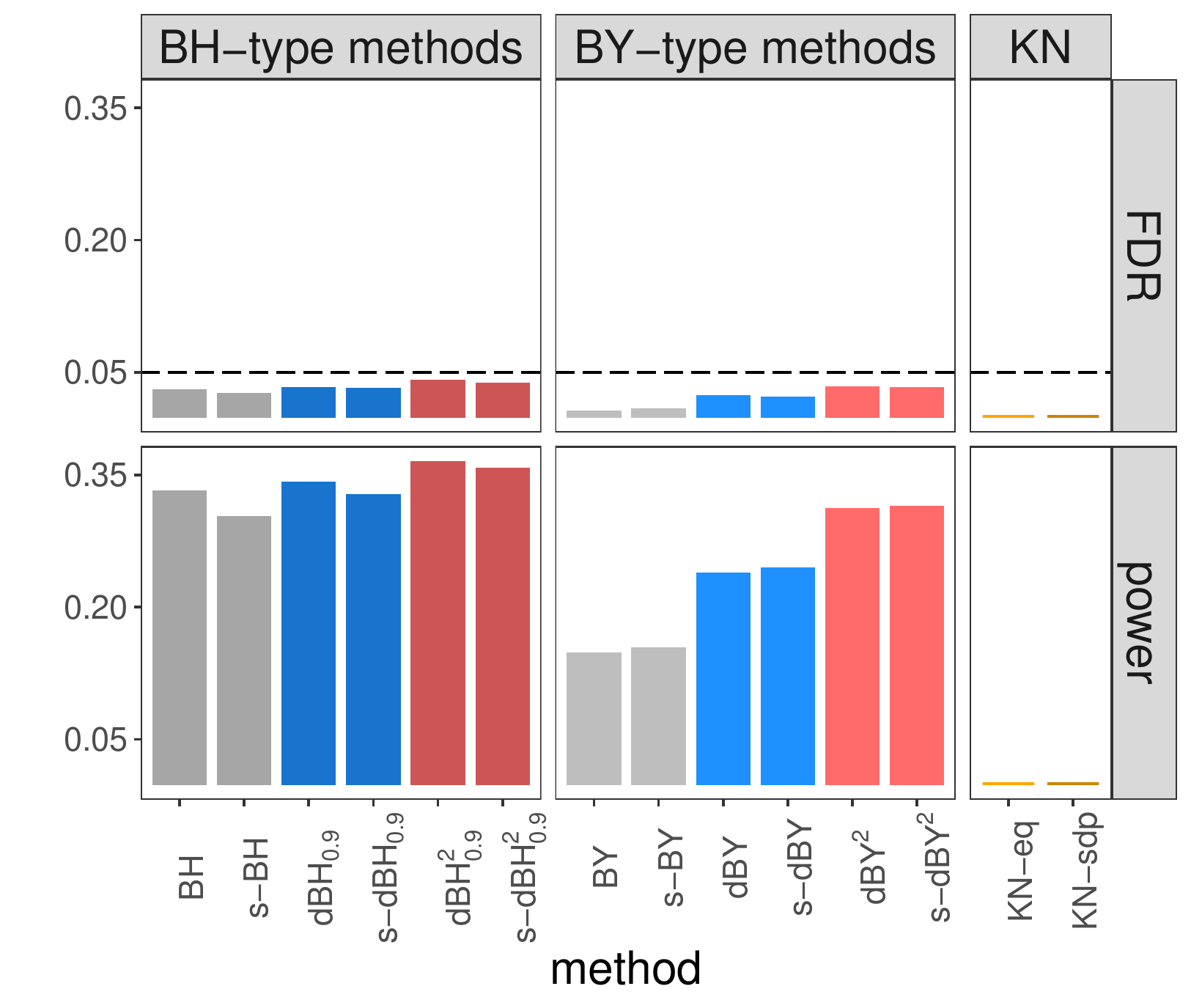}
    \caption{$\alpha = 0.05$}
  \end{subfigure}
  \begin{subfigure}[t]{0.48\textwidth}
    \includegraphics[width = 0.9\textwidth]{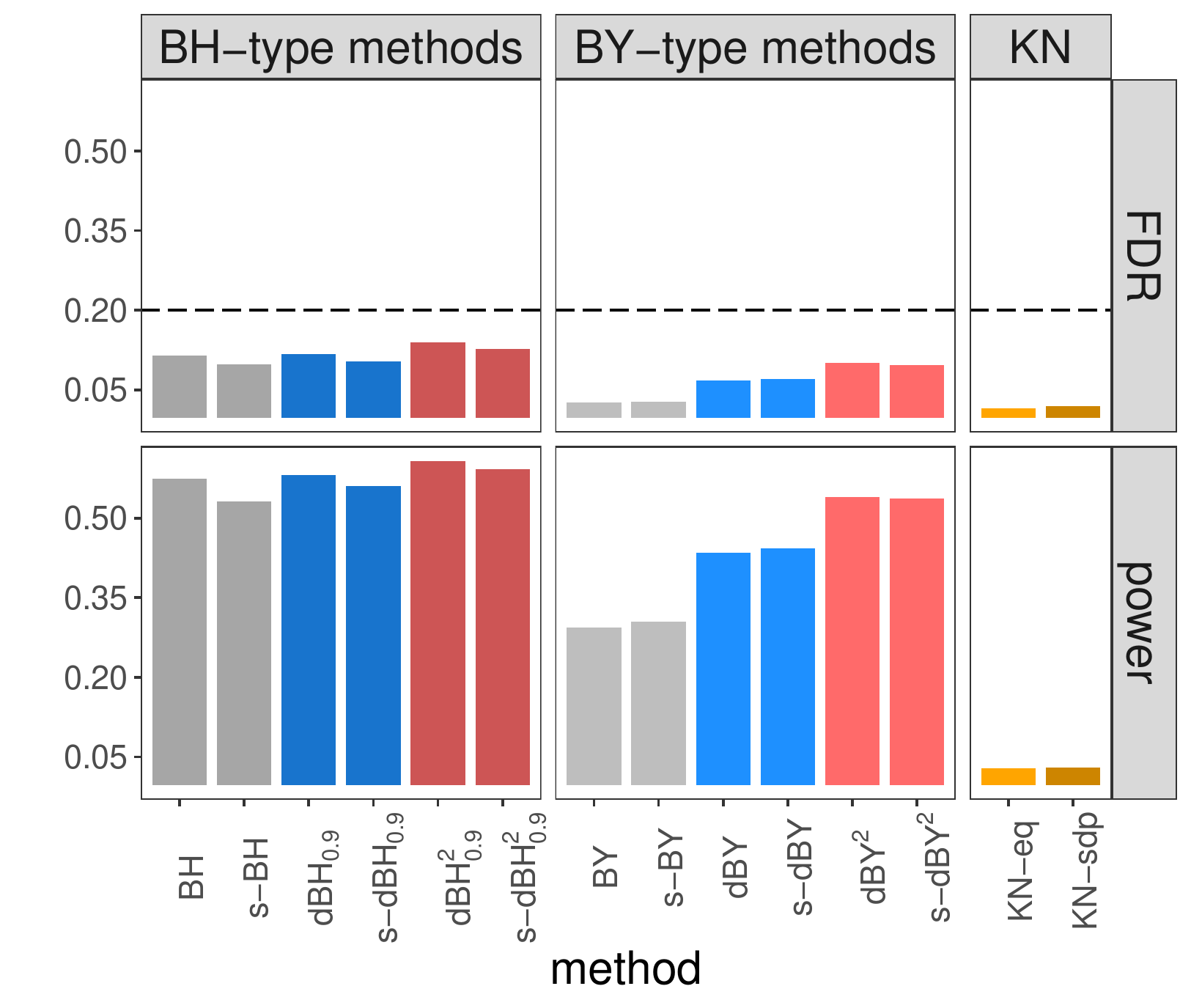}
    \caption{$\alpha = 0.2$}
  \end{subfigure}
  \caption{Multiple comparisons to control with $3$ replicates in each group}
\end{figure}

\begin{figure}[H]
  \centering
  \begin{subfigure}[t]{0.48\textwidth}
    \includegraphics[width = 0.9\textwidth]{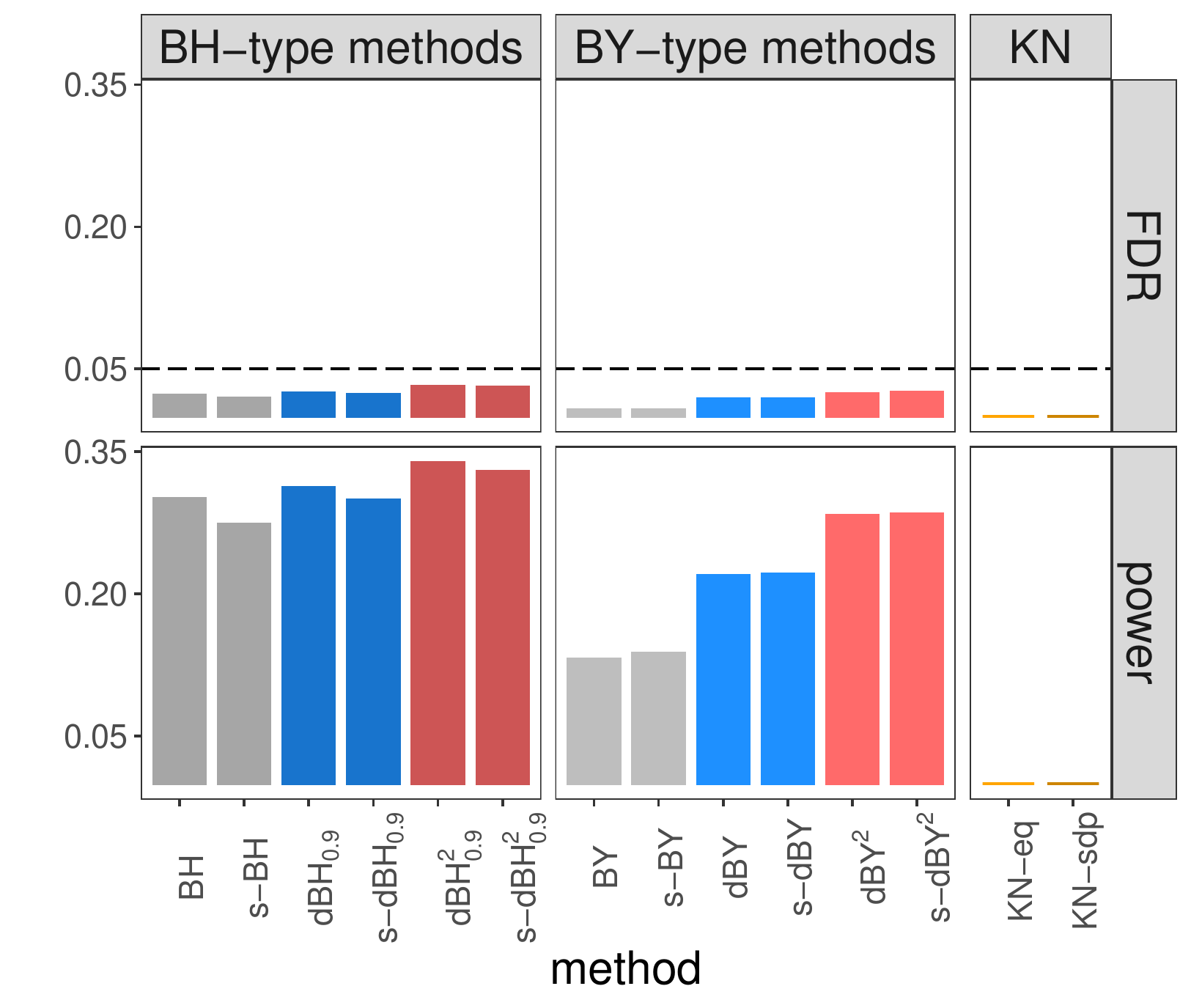}
    \caption{$\alpha = 0.05$}
  \end{subfigure}
  \begin{subfigure}[t]{0.48\textwidth}
    \includegraphics[width = 0.9\textwidth]{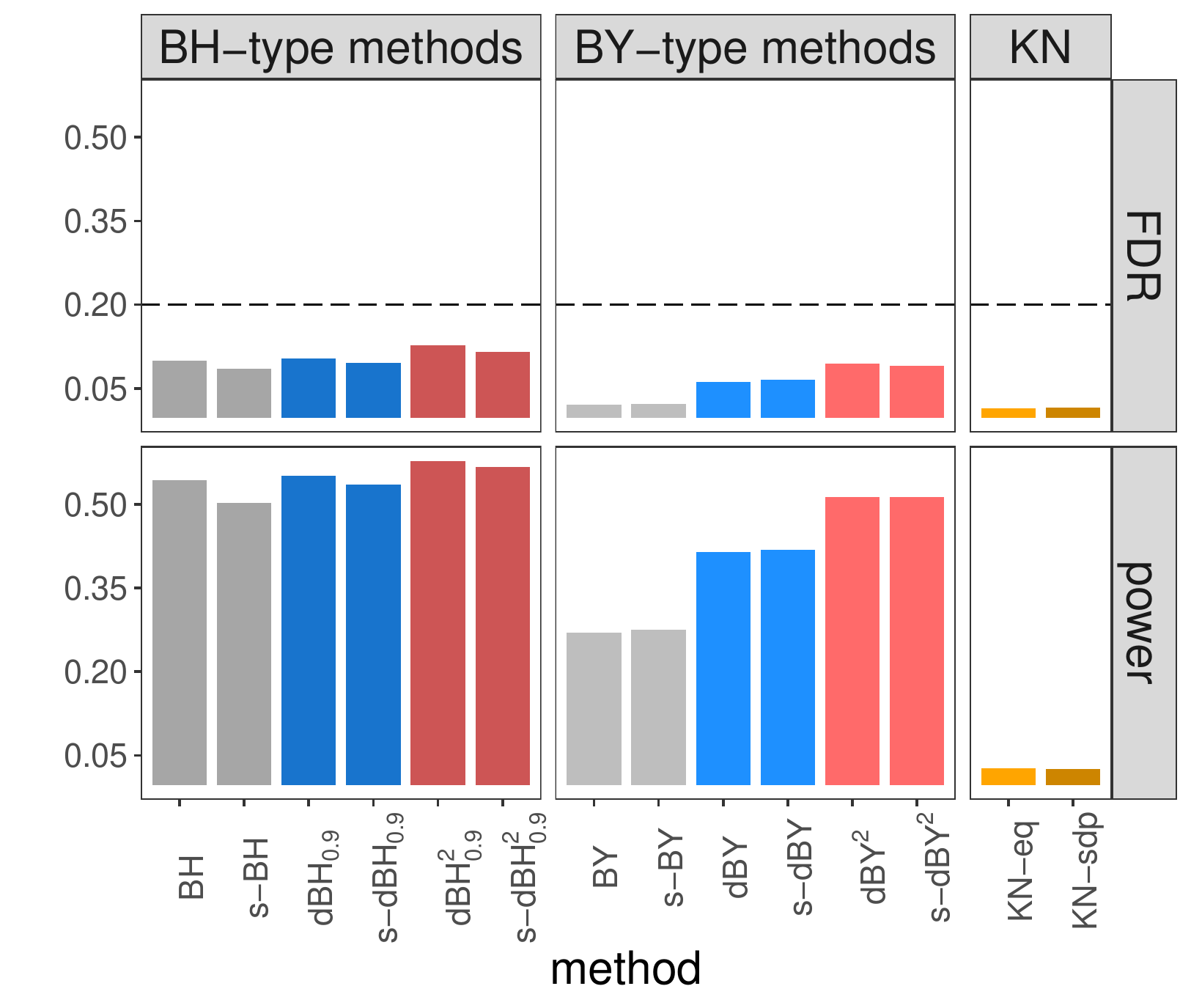}
    \caption{$\alpha = 0.2$}
  \end{subfigure}
  \caption{Multiple comparisons to control with $30$ replicates in each group}
\end{figure}

\newpage
\subsection{Estimated probability of the randomized pruning step}
In principle, the randomized pruning step may be invoked for dBH, s-dBH, dBH$^2$, and s-dBH$^2$, if the procedures are not safe. We summarize the fraction of simulations in which it is invoked for each case below. As desired, the chance of the randomized pruning step is extremely low. 

\begin{table}[H]
  \centering
  \begin{tabular}{c|c|c|cccc}
    \toprule
    \midrule
    & & & $\dBH_{0.9}$ & $\text{s-dBH}_{0.9}$ & $\dBH_{0.9}^2$ & $\text{s-dBH}_{0.9}^2$\\
    \midrule
    \multirow{6}{*}{\makecell{Multivariate \\ z-statistics}} & \multirow{3}{*}{One-sided} & AR$(0.8)$ & $0$ & $0$ & $0$ & $0$\\
    & & AR$(-0.8)$ & $0.012$ & $0.002$ & $0.001$ & $0$\\
    & & block & $0$ & $0$ & $0$ & $0$\\\cline{2-3}
    & \multirow{3}{*}{Two-sided} & AR$(0.8)$ & $0$ & $0$ & $0$ & $0$\\
    & & AR$(-0.8)$ & $0.002$ & $0$ & $0.001$ & $0$\\
    & & block & $0$ & $0$ & $0$ & $0$\\
    \midrule
    \multirow{6}{*}{\makecell{Multivariate \\ t-statistics}} & \multirow{3}{*}{One-sided} & AR$(0.8)$ & $0$ & $0$ & $0$ & $0$\\
    & & uncorrelated & $0$ & $0$ & $0$ & $0$\\
    & & block & $0$ & $0$ & $0$ & $0$\\\cline{2-3}
    & \multirow{3}{*}{Two-sided} & AR$(0.8)$ & $0$ & $0$ & $0$ & $0$\\
    & & uncorrelated & $0$ & $0$ & $0$ & $0$\\
    & & block & $0$ & $0$ & $0$ & $0$\\
    \midrule
    \multirow{6}{*}{\makecell{Linear \\ models}} & \multirow{3}{*}{$\alpha = 0.05$} & AR$(0.8)$ & $0$ & $0$ & $0$ & $0$ \\
    & & uncorrelated & $0$ & $0$ & $0$ & $0$\\
    & & block & $0$ & $0$ & $0$ & $0$\\\cline{2-3}
    & \multirow{3}{*}{$\alpha = 0.2$} & AR$(0.8)$ & $0$ & $0$ & $0$ & $0$\\
    & & uncorrelated & $0$ & $0$ & $0$ & $0$\\
    & & block & $0$ & $0$ & $0$ & $0$\\
    \midrule
    \multirow{4}{*}{\makecell{Multiple \\ comparisons \\ to control}} & \multirow{2}{*}{$\alpha = 0.05$} & 3 replicates & $0$ & $0$ & $0.001$ & $0$\\
    & & 30 replicates & $0$ & $0$ & $0$ & $0$\\\cline{2-3}
    & \multirow{2}{*}{$\alpha = 0.2$} & 3 replicates & $0.012$ & $0.004$ & $0.001$ & $0.001$\\
    & & 30 replicates & $0.016$ & $0.004$ & $0.003$ & $0.003$\\
    \midrule
    \bottomrule
  \end{tabular}
  \caption{Estimated probability of the randomized pruning step.}
\end{table}


\end{document}